\numberwithin{equation}{section}
\declaretheoremstyle[bodyfont=\it,qed=\qedsymbol]{noproofstyle}
\declaretheorem[name=Observation,numbered=no]{observation*}
\declaretheorem[numberlike=equation]{theorem}
\declaretheorem[name=Theorem,numbered=no]{theorem*}
\declaretheorem[numberlike=equation]{lemma}
\declaretheorem[name=Lemma,numbered=no]{lemma*}
\declaretheorem[numberlike=equation,style=noproofstyle,name=Lemma]{lemmawp}
\declaretheorem[numberlike=equation]{corollary}
\declaretheorem[name=Corollary,numbered=no]{corollary*}
\declaretheorem[numberlike=equation,style=noproofstyle,name=Corollary]{corollarywp}
\declaretheorem[name=Proposition,numbered=no]{proposition*}
\declaretheorem[name=Claim,numbered=no]{claim*}
\declaretheorem[name=Conjecture,numbered=no]{conjecture*}
\declaretheorem[name=Question,numbered=no]{question*}
\declaretheoremstyle[bodyfont=\it,qed=$\lozenge$]{defstyle} 
\declaretheorem[numberlike=equation,style=defstyle]{definition}
\declaretheorem[unnumbered,name=Definition,style=defstyle]{definition*}
\declaretheorem[unnumbered,name=Example,style=defstyle]{example*}
\declaretheorem[unnumbered,name=Notation=defstyle]{notation*}
\declaretheorem[unnumbered,name=Construction,style=defstyle]{construction*}
\declaretheorem[numberlike=equation,style=defstyle]{remark}
\declaretheorem[unnumbered,name=Remark,style=defstyle]{remark*}
\newcommand{\shortECCC}[2]{\texttt{\href{http://eccc.hpi-web.de/report/\ifnumcomp{#1}{>}{93}{19}{20}#1/#2/}{eccc:TR#1-#2}}}
\newcommand{\parseECCC}[1]{
\StrSubstitute{#1}{TR}{}[\tmpstring]%
\IfSubStr{\tmpstring}{/}{ 
\StrBefore{\tmpstring}{/}[\ecccyear]%
\StrBehind{\tmpstring}{/}[\ecccreport]%
}{
\StrBefore{\tmpstring}{-}[\ecccyear]%
\StrBehind{\tmpstring}{-}[\ecccreport]%
}%
\shortECCC{\ecccyear}{\ecccreport}}
\newcommand{\occur}[3]{\mathrm{Occur}_{#1}^{#2}({#3})}
\newcommand{\var}[2]{\mathrm{Var}_{#1}({#2})}
\newcommand{\exi}{\vecx_{[i]}}
\newcommand{\cH}{\mathcal{H}}
\algrenewcommand\algorithmicindent{1.0em}%
\def\rowsum{\operatorname{rowSum}} 
\def\colsum{\operatorname{colSum}}
\def\epsilon{\varepsilon} 
\let\eps\epsilon
\newcommand*\samethanks[1][\value{footnote}]{\footnotemark[#1]}
\date{}
\title{Identity Testing and Lower Bounds for Read-$k$ Oblivious Algebraic Branching Programs%
{\IfFileExists{./sha.tex}{\\\small SHA: \input{sha}}{}}}
\author{
Matthew Anderson\thanks{Department of Computer Science, Union College, Schenectady, New York, USA, E-mail: \texttt{andersm2@union.edu}}%
\and%
Michael A. Forbes\thanks{Department of Computer Science, Princeton University, USA, E-mail: \texttt{miforbes@csail.mit.edu}. Supported by the Princeton Center for Theoretical Computer Science.}%
\and%
Ramprasad Saptharishi\thanks{Department of Computer Science, Tel Aviv University, Tel Aviv, Israel, E-mails: \texttt{ramprasad@cmi.ac.in, shpilka@post.tau.ac.il, benleevolk@gmail.com}. The research leading to these results has received funding from the European Community's Seventh Framework Programme (FP7/2007-2013) under grant agreement number 257575.}%
\and%
Amir Shpilka\samethanks[3]
\and%
Ben Lee Volk\samethanks[3]
}
\begin{document}
\maketitle

\begin{abstract}
  Read-$k$ oblivious algebraic branching programs are a natural generalization of the well-studied model of read-once oblivious algebraic branching program (ROABPs).
In this work, we give an exponential lower bound of $\exp(n/k^{O(k)})$ on the width of any read-$k$ oblivious ABP computing some explicit multilinear polynomial $f$ that is computed by a polynomial size depth-$3$ circuit.
We also study the polynomial identity testing (PIT) problem for this model and obtain a white-box subexponential-time PIT algorithm.
The algorithm runs in time $2^{\tilde{O}(n^{1-1/2^{k-1}})}$ and needs white box access only to know the order in which the variables appear in the ABP.

\end{abstract}

\thispagestyle{empty}
\newpage
\pagenumbering{arabic}

\section{Introduction}
\label{sec:intro}

Algebraic complexity studies the complexity of syntactically computing polynomials using arithmetic operations.
The most natural model for computing polynomials is an {\em algebraic circuit}, which is a directed, acyclic graph whose leaves are labeled by either variables from $\set{x_1, \ldots, x_n}$ or elements from a field $\F$, and whose internal nodes use the arithmetic operations $+$ and $\times$.
Each node thus computes a polynomial in the natural way.
The associated complexity measures are the {\em size} (the number of wires) and the {\em depth} (the length of a longest path from an input node to the output node) of the circuit.
A circuit whose underlying graph is a tree is called a {\em formula}.

Another model of computation, whose power lies between that of circuits and formulas, is that of an {\em algebraic branching program} (ABP).
An ABP is a directed layered acyclic graph with a source node and a sink node, whose edges are labeled by polynomials.
An ABP computes a polynomial in the following way.
Every directed source-sink path computes the polynomial that is obtained from taking the product of all edge labels along the path.
The polynomial computed by the ABP is the sum over all paths of those polynomials.\footnote{This is analogous to boolean branching programs.
There each path computes the AND of edge labels and the output is the OR of all path-functions.}
Here, another relevant complexity measure is the {\em width} of the program, which is the maximal number of vertices in a layer (see \autoref{sec:models} for the exact definitions of the models that are considered in this work).

Two of the most important problems in algebraic complexity are (i) proving exponential lower bounds for arithmetic circuits (i.e., proving that any circuit computing some explicit polynomial $f$ must be of exponential size), and (ii) giving an efficient {\em deterministic} algorithm for the {\em polynomial identity testing} (PIT) problem.
The latter is the problem of given an arithmetic circuit, formula or ABP, computing a polynomial $f$, we have to decide whether $f$ is the identically zero polynomial.
PIT has a simple randomized algorithm that follows from the Schwartz-Zippel-DeMillo-Lipton lemma \cite{S80, Z79, DL78} that says that over a large enough field, a non-zero polynomial will evaluate to a non-zero value on {\em most} points.
Hence, in order to decide whether $f$ is zero it is enough to evaluate the circuit/formula/ABP on a random point (which can be done efficiently).

We further note that the randomized algorithm described above only needs to ability to evaluate $f$ at a given point.
Such algorithms are called {\em black-box} PIT algorithms.
It is readily seen that black-box algorithms are equivalent to producing a small {\em hitting set}, which is a set $\cH$ of evaluation points that has the property that $\cH$ contains a non-zero evaluation point for every non-zero $f$.
Algorithms that are given the computation graph as input are called {\em white-box} algorithms.
Naturally, white-box access is much less restrictive and one expects it will be easier to obtain better algorithms in this case.

Apart from being a very natural problem about arithmetic computation, PIT is one of the most general problems for which an efficient randomized algorithm is known, but no deterministic one.
Indeed, many other randomized algorithms --- e.g.\ parallel algorithms for finding matching in graphs \cite{KUW86, MVV87} or algorithms for polynomial factorization \cite{sv-icalp10, KSS15} --- reduce to PIT, in the sense that derandomization of PIT would derandomize those as well.

For more background on arithmetic circuits we refer the reader to the survey \cite{sy}.

\medskip

At first glance, the two problems described above seem rather different, as one is concerned with proving lower bounds and the other with providing efficient algorithms.
However, a series of works uncovered an intricate web of connections between the two, both in the white-box \cite{ki03, DSY09} and in the black-box \cite{HS80, a05} models.
That is, derandomizing PIT implies lower bounds for circuits (which gives a convincing explanation for why this problem is hard), and conversely, an explicit hard polynomial gives a recipe to ``fool'' small arithmetic circuits with respect to non-zeroness, in a very similar manner to the hardness-versus-randomness paradigm in boolean complexity.

In light of the hardness of proving lower bounds for general circuits, research has focused on trying to understand the effect that structural restrictions, like constant depth and multilinearity, have on the expressive power of the model.

One research direction that has attracted a lot of attention considers very shallow depth arithmetic circuits.
Following Valiant et al.\ \cite{vsbr83}, Agrawal and Vinay gave a reduction from general circuits to depth-$4$ circuits, that maps subexponential size to subexponential size \cite{av08}.
This reduction was later improved and extended in \cite{ koiran, Tav13, gkks13b}.
In a breakthrough work Gupta et al.\ \cite{gkks13} proved exponential lower bounds for depth-$4$ homogeneous formulas, which is the kind of circuit one gets from the reduction.
In the work that followed \cite{gkks13}, lower bounds for homogeneous depth-4 circuits were proved both for ``hard'' polynomials such as the permanent but also for easier polynomials such the determinant and the iterated matrix multiplication polynomial \cite{KSS13, FLMS13, KLSS, KS14, KS14a}.

In parallel, a lot of research effort was also focused on PIT for small-depth circuits with various restrictions such as bounded top fan-in or multilinearity \cite{DS07, KS07, ks09, SS12, kmsv13, sv11, osv15}.
Similar to the situation with lower bounds, a derandomization of PIT for depth-4 circuits (or, depth-3 in certain cases) implies a derandomization of the general case \cite{av08,gkks13b}.
As depth-$3$ multilinear formulas that have small top fan-in are a special case of sum of read-once arithmetic formulas (here, a read-once formula is an arithmetic formula in which each variable labels at most one node), Shpilka and Volkovich gave polynomial identity tests for this model \cite{SV15}.
Later, Anderson, van Melkebeek and Volkovich gave a PIT for multilinear read-$k$ formulas, which extend both models \cite{amv11}.

Another line of work focused on read-once oblivious ABPs (ROABPs, and we again refer to \autoref{sec:models} for the exact definition).
ROABPs were defined by Nisan \cite{nis91} in the context of proving lower bounds for non-commutative formulas.
While this model seems a bit restrictive, it was shown that derandomizing PIT for ROABPs implies derandomization of Noether's normalization lemma for certain important varieties \cite{Mulmuley12,FS13b}.
It is also not hard to show that ROABPs are strictly stronger than read-once arithmetic formulas.
Another motivation to study this model is that it is the algebraic analog of a boolean read-once branching program, which arises in the context of pseudorandomness for small-space computation \cite{Nisan92}.
Thus, one could hope for cross-fertilization of ideas between the models that could facilitate progress on both fronts.

Exponential lower bounds for ROABPs were known since their inception \cite{nis91}, and a white-box polynomial-time PIT algorithm was given by Raz and Shpilka \cite{RS05}.
In the black-box setting, hitting sets of quasipolynomial size were obtained in \cite{FS13, FSS14, agks15}, where the last two papers being applicable even if the order in which the variable are read is unknown.
This marks a striking difference between the algebraic model and the boolean model.
Indeed, in the boolean domain, pseudorandom generators for read-once branching programs in unknown order are much weaker, in terms of the seed length, than Nisan's generator \cite{Nisan92} which works only if the order is known.
Recently, Gurjar et al.\ obtained PIT algorithms for sum of ROABPs \cite{GKST15}.

In this work, we consider the natural next step, which are read-$k$ oblivious algebraic branching programs.
This model generalizes and extends both the models of ROABPs, of read-$k$ arithmetic formulas and of sum of ROABPs.
We are able to prove exponential lower bounds and to give subexponential-time PIT algorithms for this model.
A summary of our results appears in \autoref{sec:results}.

Prior to our work there were no results known for this model.
Some results were known for the more restricted model of a {\em sum} of $k$ ROABPs (e.g.\ \cite{GKST15}), and we give more details on those in \autoref{sec:work}.

\subsection{Computational Models}
\label{sec:models}
In this section we define the computational models we consider in this work.
We begin with the definition of {\em Algebraic Branching Programs} (ABPs).

\begin{definition}[Algebraic Branching Program, \cite{nis91}]
\label{def:abp}
An {\em Algebraic Branching Program (ABP)} is a directed acyclic graph with one vertex $s$ of in-degree zero (the {\em source}) and one vertex $t$ of out-degree zero (the {\em sink}).
The vertices of the graph are partitioned into layers labeled $0, 1, \ldots, L$.
Edges in the graph can only go from layer $\ell-1$ to layer $\ell$, for $\ell \in [L]$.
The source is the only vertex at layer $0$ and the sink is the only vertex at layer $L$.
Each edge is labeled with a polynomial in the input variables.
The {\em width} of an ABP is the maximum number of nodes in any layer, and the {\em size} of an ABP is the number of vertices in the ABP.
The {\em degree} of an ABP is defined to be the maximal degree of the polynomial edge labels.
	
Each path from $s$ to $t$ computes the polynomial which is the product of the labels of the path edges, and the ABP computes the sum, over all $s$ to $t$ paths, of such polynomials.
\end{definition}

The expressive power of ABPs lies between arithmetic formulas and arithmetic circuits.
Every formula of size $s$ can be simulated by an ABP of width $s$.
Similarly, an ABP of width $s$ and depth $d$ can be simulated by an arithmetic circuit of size $O(sd^2)$.

In this work we consider a restricted model of ABPs that we call read-$k$ oblivious ABPs.
In an oblivious ABP, in each layer all the labels are univariate polynomials in the same variable.
Furthermore, we also restrict each variable to appear in at most $k$ layers while still allowing them to label any number of the edges in those layers.

\begin{definition}[Read-$k$ Oblivious ABPs, \cite{FS13}]
\label{defn:read-k-abp}
  An algebraic branching program is said to be \emph{oblivious} if for every layer $\ell$, all the edge labels in that layer are univariate polynomials in a variable $x_{i_\ell}$.

Such a branching program is said to be a \emph{read-once} oblivious ABP (ROABP) if the $x_{i_\ell}$'s are distinct variables.
That is, each $x_i$ appears in the edge labels in at most one layer.

An oblivious ABP is said to be a \emph{read-$k$} if each variable $x_i$ appears in the edge labels of at most $k$ layers. 
\end{definition}

\begin{remark}
\label{rem:exactly-k}
For the rest of the discussion, it will be convenient to assume that in a read-$k$ oblivious ABP, every variable $x$ appears in {\em exactly} $k$ layers.
This assumption can be made without loss of generality, since if $x$ appears in $k' < k$ layers, we can add $k-k'$ ``identity'' layers to the program that vacuously read $x$.
This transformation does not increase the width of the program and increases the length by no more than $kn$.
\end{remark}

A special case of a read-$k$ oblivious ABP is one where the ABP makes ``multiple passes'' over the input.

\begin{definition}[$k$-pass ABPs]
\label{defn:k-pass-abp}
An oblivious ABP is said to be a \emph{$k$-pass ABP} if there exists a permutation $\pi$ on $n$ such that the ABP reads variables in the order
\[
\underbrace{x_{\pi(1)}, \ldots, x_{\pi(n)}, x_{\pi(1)}, \ldots, x_{\pi(n)}, \ldots, x_{\pi(1)}, \ldots, x_{\pi(n)}}_{\text{$k$ times}}.
\]
An oblivious ABP is said to be a \emph{$k$-pass varying-order ABP} if there are permutations $\pi_1, \cdots, \pi_k$ over $n$ symbols such that the ABP reads variables in the order
\[
x_{\pi_1(1)}, \ldots, x_{\pi_1(n)}, x_{\pi_2(1)}, \ldots, x_{\pi_2(n)}, \ldots, x_{\pi_k(1)}, \ldots, x_{\pi_k(n)}.\qedhere
\]
\end{definition}

\subsection{Our Results}
\label{sec:results}

We give various results about the class of read-$k$ oblivious ABPs, including lower bounds, PIT algorithms, and separations.

\paragraph{Lower Bounds:} We show an explicit polynomial $f$ such that any read-$k$ oblivious ABP computing $f$, for bounded $k$, must be of exponential width.

\begin{theorem}[proved in \autoref{sec:read-k ABP LBs}]
\label{thm:intro:lower-bound-k-abp}
There exists an explicit polynomial $f$, which is computed by a depth-3 polynomial-size multilinear circuit, such that any read-$k$ oblivious ABP computing $f$ must have width $\exp(n/k^{O(k)})$.
\end{theorem}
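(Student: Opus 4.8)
The plan is to prove the lower bound via a partial-derivative / evaluation-dimension argument adapted to the read-$k$ setting. The natural hard polynomial $f$ to pick is one on which ROABPs in any variable order require exponential width, such as a suitably chosen depth-$3$ multilinear polynomial (e.g.\ a polynomial whose evaluation dimension under every balanced partition of the variables is large; the inner-product-like or "full-rank tensor" type polynomials used in ROABP lower bounds are good candidates, and can be realized by a small depth-$3$ multilinear circuit). The key structural observation is that a read-$k$ oblivious ABP of width $w$ and with variable sequence of length $L$, after fixing (setting to constants) all but one occurrence of each variable, collapses to a read-once oblivious ABP of width at most $w$; so if we could always find a restriction of the "repeated" occurrences that keeps $f$ essentially intact, we would be done by the known ROABP lower bound. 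The obstacle, and the place where the $k^{O(k)}$ loss enters, is that we cannot freely restrict: a restriction that kills repeated reads may also destroy the hardness of $f$.

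The main steps I would carry out are as follows. First, set up a combinatorial lemma about read-$k$ sequences: given any sequence of length $L$ over $n$ symbols in which each symbol appears exactly $k$ times (\autoref{rem:exactly-k}), one can find a contiguous-or-structured window and a set $S$ of $\Omega(n/k^{O(k)})$ variables such that, when we restrict attention to $S$ and fix all other variables to field constants, each variable of $S$ is read at most once inside the relevant part of the program while the two "sides" of $S$ with respect to that window form a balanced partition of $S$. This is essentially a Ramsey-type / interval-structure argument on the read-$k$ sequence: because each variable is read only $k$ times, one can iteratively prune occurrences and pass to a large sub-collection that behaves like a read-once sequence on a fixed bipartition; the $k^{O(k)}$ factor is the cost of this iterated pruning over the $k$ reads. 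Second, argue that the restriction can be chosen so that $f$ restricted to $S$ (with the other variables set to constants) is still a polynomial with large evaluation dimension across the induced bipartition of $S$ — this requires that $f$ be robust to such restrictions, which is why one should choose $f$ with this robustness built in (for instance a polynomial that is a product/sum structure so that fixing variables outside $S$ leaves a scaled copy of the same hard object on $S$). Third, invoke (a variant of) Nisan's characterization: a read-once oblivious ABP computing $g$ in a fixed order has width at least the maximum, over prefixes in that order, of the evaluation dimension of $g$. Combining: width $\ge \exp(|S|) = \exp(n/k^{O(k)})$.

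Concretely, the pipeline is: (1) WLOG each variable read exactly $k$ times; (2) combinatorial extraction of a large read-once-like subset $S$ together with an order on $S$ and a "cut point"; (3) choose constants for $\bar S$ so that $f|_{\bar S = c}$ is still hard on $S$ w.r.t.\ that cut; (4) observe the restricted ABP is a ROABP on $S$ of width $\le w$; (5) apply the ROABP lower bound / evaluation-dimension bound to conclude $w \ge \exp(|S|)$. Steps (1), (4), (5) are routine given the cited machinery. The heart of the proof is the interaction between steps (2) and (3): the combinatorial lemma must produce not just any large read-once subset but one whose induced bipartition is one along which $f$ is provably hard, and simultaneously the restriction of the other variables must be compatible with keeping $f$ hard. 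I expect the main obstacle to be precisely this coupling — proving that among the (exponentially many) ways the adversarial ABP could lay out its $kn$ reads, there is always a large subset on which we can both (a) break all repeats by a constant restriction and (b) retain a full-rank/high-evaluation-dimension bipartition of $f$ — and controlling the size loss in that argument to be only $k^{O(k)}$ rather than, say, exponential in $k$.
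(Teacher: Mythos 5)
Your overall architecture (find a large structured subset of variables, fix the rest, and compare an upper bound on evaluation dimension coming from the ABP with a lower bound coming from a robustly hard depth-3 polynomial) matches the paper's, and your choice of a hard polynomial that stays high-dimensional under every partition after fixing a few variables is exactly what the paper builds (\autoref{lem:high-evalDim-poly}, a sum over the $n$ perfect matchings of $K_{n,n}$ of products $\prod_j (x_j+y_{j+i})$). However, there is a genuine gap at the heart of your step (4): the claim that by fixing variables outside $S$ you obtain a read-\emph{once} oblivious ABP of width $w$ on $S$ is false. Fixing a variable fixes \emph{all} of its occurrences, and it does nothing to reduce the number of reads of the surviving variables; and you cannot instead "fix all but one occurrence" of a variable, since setting different occurrences of $x_i$ to different things changes the polynomial being computed and the result is no longer a restriction of $f$. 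Nor can you hope to find a large subset $S$ whose reads are non-interleaved (which would let you merge the $k$ layers per variable into one): already for the $2$-pass order $x_1,\ldots,x_n,x_1,\ldots,x_n$, any two variables have interleaved occurrences, so such an $S$ has size $1$.

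The missing idea is to give up on getting a width-$w$ ROABP and instead settle for bounding the evaluation dimension across a \emph{single} cut by $w^{2k}$, which is still exponentially small for constant $k$. Concretely (\autoref{thm:read-k-eval-dim}): partition the $kn$ layers into $r=10k^2$ contiguous blocks; by pigeonhole over the $\binom{r}{k}$ choices of which $k$ blocks contain a variable's reads, some $k$ blocks contain \emph{all} $k$ reads of a set $U$ of $n/\binom{r}{k}= n/k^{O(k)}$ variables. Fix (move into the field) the set $W$ of other variables appearing in those $k$ blocks ($|W|\le k^2n/r\le n/10$), and let $V$ be everything else. Now the read sequence alternates at most $k$ times between $U$-layers and $V$-layers, so the program factors as $(N_1 M_1\cdots N_k M_k)_{1,1}$ with the $N_\sigma$ depending only on $U$ and the $M_\sigma$ only on $V$ (the $k$-gap property, \autoref{def:k-gap}), whence $\evalDim_{U,V;W}(f)\le w^{2k}$ by \autoref{lem:k-gap-to-roabp}. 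Since the hard polynomial has $\evalDim_{U,V;W}\ge \exp(\Omega(\min(|U|,|V|)))$ for every such partition with $|W|\le n/10$, one gets $w^{2k}\ge \exp(n/k^{O(k)})$ and hence $w\ge \exp(n/k^{O(k)})$. Note also that, contrary to your step (5), no "maximum over prefixes" version of Nisan's characterization is needed: a lower bound only requires one good cut, which is why a simple averaging argument (rather than a Ramsey-type extraction, which the paper reserves for the PIT result) suffices here.
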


Prior to this work, there were no lower bounds for this model.

\paragraph{Identity Testing:} For the class of $k$-pass ABPs, we provide a black-box PIT algorithm that runs in quasipolynomial time.

\begin{theorem}[proved in \autoref{sec:pit-k-pass-abp}]
\label{thm:intro:pit-k-pass}
There exists a black-box PIT algorithm for the class of $n$-variate, degree-$d$, and width-$w$ $k$-pass oblivious ABPs that runs in time $(nw^{2k}d)^{O(\log n)}$.
\end{theorem}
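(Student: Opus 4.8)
The plan is to reduce black-box identity testing of $k$-pass oblivious ABPs to black-box identity testing of read-once oblivious ABPs, for which quasipolynomial-size hitting sets are already available --- and, crucially, available even when the order in which the variables are read is unknown \cite{FSS14, agks15}. The heart of the argument is the structural claim that \emph{every polynomial computed by a width-$w$, degree-$d$, $k$-pass oblivious ABP reading its variables in the order $x_{\pi(1)},\dots,x_{\pi(n)}$ (see \autoref{defn:k-pass-abp}) is also computed by an ROABP reading in the same order $\pi$, of width at most $w^{2k-1}$ and with edge labels of degree at most $kd$.} Granting this, the algorithm is immediate: given the black box, evaluate it on the known hitting set for $n$-variate, width-$w^{2k-1}$, degree-$kd$ ROABPs in arbitrary order, and declare the polynomial zero iff all evaluations vanish. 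The running time is the size of that hitting set times the $\mathrm{poly}(n,k,w,d)$ cost of one evaluation, i.e. $(n \cdot w^{2k-1} \cdot kd)^{O(\log n)} = (nw^{2k}d)^{O(\log n)}$.

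To prove the structural claim, write the $k$-pass ABP as a product of its layer matrices and group the layers by pass: if pass $j$ reads $x_{\pi(1)},\dots,x_{\pi(n)}$ through matrices $M_{j,1}(x_{\pi(1)}),\dots,M_{j,n}(x_{\pi(n)})$ of univariate polynomials (each of width at most $w$ and degree at most $d$), set $N_j := \prod_{i=1}^n M_{j,i}(x_{\pi(i)})$, so that $f = N_1 N_2 \cdots N_k$, with $N_1$ a row vector, $N_k$ a column vector, and the rest square of order at most $w$. Expanding this product over the intermediate indices gives $f = \sum_{a_1,\dots,a_{k-1}\in[w]} (N_1)_{a_1}(N_2)_{a_1,a_2}\cdots(N_{k-1})_{a_{k-2},a_{k-1}}(N_k)_{a_{k-1}}$, a sum of at most $w^{k-1}$ terms, each a product of $k$ scalar polynomials; and each such scalar polynomial is a single entry of some $N_j$ and hence is computed by a width-$w$ ROABP in the order $\pi$ (obtained by selecting one start node and one end node in the ROABP for $N_j$). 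Now invoke two standard closure facts. (i) A product of $k$ polynomials, each computed by a width-$w$ ROABP in order $\pi$, is computed by a width-$w^k$ ROABP in order $\pi$: tensor the $k$ programs layer by layer, using $\bigotimes_j\big(\prod_i A^{(j)}_i\big) = \prod_i\big(\bigotimes_j A^{(j)}_i\big)$, after which each layer matrix is univariate of degree at most $kd$. (ii) A sum of $m$ ROABPs of width $w'$ reading in the same order is a width-$mw'$ ROABP in that order, by block-diagonal concatenation. Applying (i) to each of the $\le w^{k-1}$ terms and then (ii) to the resulting sum yields a single ROABP in order $\pi$ of width $w^{k-1}\cdot w^{k} = w^{2k-1}$ and edge-label degree at most $kd$, which is the claim.

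The one place that needs care --- and the reason the argument is tailored to the fixed-order $k$-pass model and does not, say, apply to the varying-order variant of \autoref{defn:k-pass-abp} --- is step (i): the layer-by-layer tensoring aligns the $i$-th layer of every pass, which is legitimate precisely because all $k$ passes read the variables in the \emph{same} order $\pi$; with differing orders the per-layer tensor product is no longer an ROABP and a genuinely different idea is needed. Everything else is bookkeeping: the degree parameter that controls the ROABP hitting set is the per-layer (equivalently, per-variable) degree, which is $kd$ here and not the total degree; the algorithm need not know $\pi$ since the hitting sets of \cite{FSS14, agks15} are order-oblivious; and the final bound follows by substituting $n$ variables, width $w^{2k-1}$, and degree $kd$ into the quasipolynomial ROABP hitting-set size, with the extra factor $k$ absorbed into $(nw^{2k}d)^{O(\log n)}$ (note $k \le w^{2k}$ once $w \ge 2$, and $w = 1$ is trivial).
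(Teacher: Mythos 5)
Your proposal is correct, but it reaches the key structural fact by a genuinely different route than the paper. The paper proves \autoref{lem:k-pass-to-roabp} semantically: it fixes a prefix $\vecx_{[i]}=\veca$, observes that the restricted polynomial is determined by the $k$ ``bridge'' matrices $N_1(\veca),\dots,N_k(\veca)\in\F^{w\times w}$, expands each in a basis of $\F^{w\times w}$ to conclude $\evalDim_{\exi,\overline{\exi}}(f)\le w^{2k}$, and then invokes Nisan's characterization (\autoref{thm:eval-dim-roabp}) to extract an ROABP of width $w^{2k}$ in the order $\pi$. You instead give a direct syntactic simulation: expand $(N_1\cdots N_k)_{1,1}$ over the $k-1$ intermediate indices into $\le w^{k-1}$ terms, realize each term as a product of $k$ same-order width-$w$ ROABPs, multiply them via the layerwise Kronecker product (the mixed-product identity you cite is exactly what makes this legal, and only because all passes share the order $\pi$), and sum via block-diagonal concatenation. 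Both arguments are sound; yours even yields a marginally better width, $w^{k-1}\cdot w^{k}=w^{2k-1}$ versus $w^{2k}$, and makes the per-variable degree $kd$ explicit, which is the right parameter to feed into \autoref{thm:hitting-set-ROABP}. What the paper's evaluation-dimension route buys is generality: the same calculation extends verbatim to the $k$-gap property (\autoref{def:k-gap} and \autoref{lem:k-gap-to-roabp}), where the bridge matrices $N_j(\veca)$ arise from interleaved reads of the prefix variables and there is no layer alignment to tensor against; that extension is what drives the lower bound and the general read-$k$ PIT later in the paper. You correctly flag that your tensoring step is the one that breaks for varying orders, so you are aware of exactly this limitation.
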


For the more general class of read-$k$ oblivious ABPs, we provide a white-box PIT algorithm that runs in subexponential time.

\begin{theorem}[proved in \autoref{sec:pit-read-k-abp}]
\label{thm:intro:pit-k-abp}
There exists a white-box PIT algorithm for the class of $n$-variate, degree-$d$, and width-$w$ read-$k$ oblivious ABPs that runs in time $(nwd)^{\tilde{O}(n^{1-1/2^{k-1}})\cdot\exp(k^2)}$.
Furthermore, white-box access is only needed to know the order in which the variables are read.
That is, given this order, we construct an explicit hitting set of the above size for the class of read-$k$ oblivious ABPs that read their variables in that order.
\end{theorem}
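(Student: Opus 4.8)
The plan is to reduce identity testing for read-$k$ oblivious ABPs to identity testing for ROABPs (for which quasipolynomial hitting sets are known) by a partitioning-and-substitution argument on the variable set, combined with the standard hybrid/"low-rank" structural facts about oblivious ABPs. The key observation is that a read-$k$ oblivious ABP reading its variables in a known order determines a sequence of $kn$ layers, partitioned by the $n$ variables into $k$-element ``occurrence classes''; the obstruction to being read-once is precisely that some variable is revisited. So the first step is to identify a small set $S$ of variables such that, after setting the variables in $S$ to the elements of a generic/hitting-set assignment (i.e., fixing them to the coordinates of a formal variable raised through a Kronecker-type substitution, as in the ROABP hitting-set constructions), the ABP restricted to $\overline{S}$ becomes, on a per-variable basis, read-$(k-1)$ — or, more carefully, so that one can peel off one level of the recursion. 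Iterating this $k-1$ times collapses the program to an ROABP, at which point we invoke the known quasipolynomial hitting set for ROABPs in a known order.

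Concretely, I would carry out the following steps. (1) Normalize via Remark~\ref{rem:exactly-k} so every variable is read exactly $k$ times, and record the order as $kn$ layers. (2) Define, for a subset $T$ of the variables, the notion of ``freezing'' $T$ along a low-dimensional curve: substitute $x_i \mapsto \text{(univariate in a fresh variable $y$)}$ for $i \in T$, chosen so that the substituted ABP, viewed as a matrix product over the remaining layers, has the same zero/nonzero status whenever the restriction to $T$ was nonzero — this is where one pays a $\poly$ or quasipoly factor in the hitting-set size per frozen block. (3) The combinatorial heart: show that one can choose $T$ with $|T| \le n^{1-1/2^{k-1}}\cdot\exp(k^{O(1)})$ (up to $\polylog$ factors) such that after freezing $T$, the remaining variables are each read at most $k-1$ times \emph{in a way that does not interleave badly} — i.e., between any two consecutive live layers reading the same variable, there is a frozen layer, so those two occurrences can be merged into one effective layer. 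This is essentially a covering argument: think of the $\binom{?}{?}$ pairs of occurrences of each variable and greedily/probabilistically hit them; the recursion $f(k) = n^{1 - 1/2^{k-1}}$ arises from the fact that at each level one must hit a structure that itself lives on $f(k-1)$ variables, giving the doubling in the exponent denominator. (4) Recurse: after $k-1$ rounds the program is an ROABP in a known order on $n$ variables; apply the hitting set of size $(nwd)^{O(\log n)}$ from \cite{FS13} (or the quasipolynomial-size generator). (5) Compose the substitutions/hitting sets across the $k-1$ levels and the final ROABP step, bookkeeping the blow-up: each level contributes a factor of the form $(nwd)^{\tilde O(n^{1-1/2^{k-1}})}$ and there are $O(k)$ levels with an $\exp(k^2)$ overhead from the recursion constants, yielding the claimed bound $(nwd)^{\tilde{O}(n^{1-1/2^{k-1}})\cdot\exp(k^2)}$. (6) Observe that all substitutions depend only on the \emph{order} in which variables are read, not on the edge labels, which gives the ``white-box access only for the order'' refinement and the hitting-set formulation.

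The main obstacle, I expect, is Step~(3): proving that a set $T$ of the stated size suffices to reduce the read parameter by one while keeping the reduction ``faithful'' (nonzeroness-preserving) and while controlling how the merged layers interact with the ROABP machinery. The subtlety is that freezing a variable that is read $k$ times may still leave a live variable read $k$ times if its occurrences were not separated by a frozen layer, so $T$ must simultaneously (a) separate enough same-variable occurrence pairs and (b) not itself be so large that the recursion blows up. Getting the exact exponent $1 - 1/2^{k-1}$ requires setting up the recursion so that the cost at level $j$ is governed by the number of ``conflict pairs'' surviving from level $j-1$, which is itself bounded using the level-$(j-1)$ count — this is the step where one must be careful with the $\exp(k^2)$ and the $\tilde O(\cdot)$ slack. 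I would also need the ABP-theoretic lemma (standard for oblivious ABPs, cf.\ the matrix-product view used for ROABPs) that freezing a block of variables corresponds to substituting a fixed matrix into the product, so the residual object is still an oblivious ABP of width at most $w$ (or $w^{O(1)}$) in the live variables, keeping the width parameter under control throughout the recursion.
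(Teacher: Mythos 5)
Your overall architecture --- fix a carefully chosen subset of variables via a known hitting set, reduce to ROABPs, and recurse --- is in the right spirit, but the specific reduction you propose has a genuine gap. The ``freeze $T$ to reduce the read parameter'' step is not sound as stated: to merge two occurrences of a live variable $x$ into a single effective layer you need \emph{every} layer between them to be frozen (so that the intervening product is a constant matrix); having merely \emph{some} frozen layer between them, as you write, accomplishes nothing. Worse, the freezing itself must preserve nonzeroness of $f$ viewed as a polynomial in the $T$ variables over $\F(\overline{T})$ --- but that restriction is again an arbitrary read-$k$ oblivious ABP with no special structure, so choosing a nonzeroness-preserving substitution for $T$ is exactly the problem you are trying to solve. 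The paper avoids this circularity by inverting the roles: the subset that gets fixed first is precisely the one on which the restricted polynomial \emph{provably} has a small ROABP, so the known ROABP hitting set can be applied to it directly.

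Relatedly, you never isolate the structural property that makes the reduction to ROABPs possible, and your recursion does not produce the exponent $1-1/2^{k-1}$ for the right reason. The paper's steps are: (i) a program with the ``$k$-gap property'' has evaluation dimension at most $w^{2k}$ across every prefix and hence is computed by a width-$w^{2k}$ ROABP (\autoref{lem:k-pass-to-roabp}, \autoref{lem:k-gap-to-roabp}); (ii) by $k-1$ iterated applications of Erd\H{o}s--Szekeres, any read-$k$ order contains a subset $\vecy$ of $n^{1/2^{k-1}}$ variables whose induced subsequence is monotone in every read (\autoref{lem:k-per-read-mon}), and after discarding all but a $3^{-k^2}$ fraction of those one can additionally make the subsequence $k$-regularly-interleaving, which forces the $k$-gap property (\autoref{lem:read-k-to-k-gap}); (iii) over $\F(\overline{\vecy})$ the polynomial therefore has a width-$w^{2k}$ ROABP in the $\vecy$ variables, so a $(nw^{2k}d)^{O(\log n)}$-size hitting set fixes $\vecy$ while preserving nonzeroness; (iv) one repeats on the remaining variables, and since each round consumes only $|\vecx|^{1/2^{k-1}}/3^{k^2}$ variables, about $n^{1-1/2^{k-1}}\cdot 3^{k^2}$ rounds are needed, the final hitting set being the Cartesian product over rounds. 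The exponent thus comes from the \emph{number of rounds}, each handling only an Erd\H{o}s--Szekeres-sized subset --- not from a read-$k$ to read-$(k-1)$ recursion. Without (i) and (ii), or substitutes for them, your step (3) has no mechanism for producing a ROABP of controlled width, and the bookkeeping in your step (5) has nothing to bound.
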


\paragraph{Separations:}
Recently, Kayal, Nair and Saha \cite{KNS15} constructed a polynomial $f$ that can be computed by a sum of {\em two} ROABPs in different orders, each of constant width, such that any ROABP computing $f$ must be of width $2^{\Omega(n)}$.
Note that sum of two ROABPs is a special case of a $2$-pass varying-order ABP.

In order to exemplify the strength of the multiple-reads model, we show a polynomial that can be computed by a small 2-pass varying-order ABP, but cannot be computed by a small sum of ROABPs of small width.

\begin{theorem}[proved in \autoref{sec:separation}]
\label{thm:intro:separation}
There exists an explicit polynomial $f$ on $n^2$ variables that is computed by a 2-pass varying-order ABP of constant width, but any sum of $c$ ROABPs computing $f$ must be of width $\exp(\Omega(\sqrt{n}/2^c))$.
\end{theorem}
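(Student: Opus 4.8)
\emph{Overall plan.} The plan is to (1) write down an explicit polynomial $f$ on an $n\times n$ grid of variables together with a constant-width $2$-pass varying-order ABP for it, and (2) prove the lower bound against sums of $c$ ROABPs by inducting on $c$, at each step collapsing two of the ROABPs into one via Erd\H{o}s--Szekeres and bottoming out at the classical single-ROABP lower bound.

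\emph{Construction and upper bound.} I would lay the $n^2$ variables out as an $n\times n$ grid $X=(x_{i,j})$ and take $f=g\cdot h$ (or a constant-size sum $\sum_v g_v h_v$ of such products if one product does not suffice), where $g$ is a constant-width ROABP reading $X$ in row-major order $\pi_1$ and $h$ is a constant-width ROABP reading $X$ in column-major order $\pi_2$, with low-degree edge labels chosen from an explicit high-rank design (e.g.\ Vandermonde/Cauchy-type coefficients) so that $f$ -- and, crucially, every restriction of $f$ obtained by fixing the variables outside a large enough subset -- has exponential evaluation dimension, hence large ROABP-width, in every order. The upper bound is then immediate: glue the ROABP for $g$ and the ROABP for $h$ at a single shared vertex $m$ (the sink of the first is the source of the second); every source--sink path factors as (a path through $g$)$\cdot$(a path through $h$), so the glued machine computes $g\cdot h=f$, it reads its variables in order $\pi_1$ followed by $\pi_2$, and its width is $\max(\mathrm{width}(g),\mathrm{width}(h))=O(1)$.

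\emph{Lower bound.} Suppose $f=f_1+\cdots+f_c$ with each $f_j$ an ROABP of width $w$ in a variable order $\sigma_j$. The key reduction: viewing $\sigma_2$ as a permutation of the $\sigma_1$-order and applying Erd\H{o}s--Szekeres, there is a subset $T$ of $\Omega(\sqrt{n^2})=\Omega(n)$ variables on which $\sigma_2$ is monotone relative to $\sigma_1$, i.e.\ $\sigma_1|_T$ equals $\sigma_2|_T$ up to reversal. Fixing the variables outside $T$ to suitable constants, $f_1|_T$ and $f_2|_T$ become ROABPs of width $\le w$ in the \emph{same} order (reversing an ROABP does not change its width), so $f_1|_T+f_2|_T$ is an ROABP of width $\le 2w$; hence $f|_T$ is a sum of $c-1$ ROABPs of width $\le 2w$. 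By the design of $f$ I can choose the fixed constants so that $f|_T$ is still a hard instance on $\Omega(n)$ variables. Iterating this $c-1$ times -- each round at least square-rooting the variable count and doubling the width budget -- leaves a single ROABP of width $\le 2^{c-1}w$ computing a still-hard polynomial on $\Omega(\sqrt n)$ surviving variables, and the classical single-ROABP coefficient-matrix lower bound forces $2^{c-1}w\ge\exp(\Omega(\sqrt n))$, i.e.\ $w\ge\exp(\Omega(\sqrt n/2^{O(c)}))$, which is the claimed bound. (Equivalently, one can carry a full coefficient matrix and subadditivity through the argument; the Erd\H{o}s--Szekeres collapse is what makes the $c$ distinct orders tractable.)

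\emph{Main obstacle.} The hard part will be arranging the construction so that this reduction can actually be iterated, i.e.\ guaranteeing that $f$ stays hard for a single ROABP after every restriction to a subset $T$ that arises as a common monotone subsequence of two adversarial orders -- even though such a $T$ could a priori be ``aligned'' with the row/column structure of the grid (a diagonal-like $T$ would make $g\cdot h|_T$ a product of two ROABPs in essentially the same order, hence easy). Handling this will require either choosing $\pi_1,\pi_2$ (and the label design) so that no large common monotone subsequence of two orders can be simultaneously aligned with the row structure and the column structure, or strengthening the reduction so that it always outputs a ``generic'' $T$; doing this while keeping the total $c$-fold loss down to $2^{O(c)}$ (so that $\Omega(\sqrt n)$ variables survive) is the crux. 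The gluing upper bound, the Erd\H{o}s--Szekeres step itself, and the base-case single-ROABP bound are all routine.
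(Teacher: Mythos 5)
Your upper bound (gluing a row-major ROABP to a column-major ROABP at a single vertex) and your candidate polynomial, once instantiated, match the paper's construction $P_n = \prod_i \rowsum_i \cdot \prod_j \colsum_j$. But your lower-bound route has two genuine gaps. First, the arithmetic of the iteration does not deliver the claimed bound: each Erd\H{o}s--Szekeres collapse of two orders square-roots the number of surviving variables, so after $c-1$ rounds you are left with roughly $(n^2)^{1/2^{c-1}} = n^{2^{2-c}}$ variables, not $\Omega(\sqrt{n})$ (your claim is only correct for $c\le 3$). The base-case single-ROABP bound would then give $w \ge \exp(n^{\Omega(2^{-c})})$, which is doubly-exponentially weaker in $c$ than the stated $\exp(\Omega(\sqrt{n}/2^c))$. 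Second, the obstacle you flag at the end is not a technicality but the point where the approach breaks: a common monotone subsequence $T$ of two adversarial orders can be ``diagonal-like,'' hitting each row and each column in at most one variable; then every $\rowsum_i$ and $\colsum_j$ restricted to $T$ is an affine form in a single surviving variable, so $P_n|_T$ is a product of univariate affine forms and is computed by a width-$O(1)$ ROABP in \emph{any} order. You have no mechanism for steering $T$ away from such sets, and no replacement hard polynomial for which restriction to an arbitrary monotone $T$ provably stays hard.

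The paper avoids both problems by never aligning orders. Its reduction (Lemma~\ref{lem:sum-of-ROABP-structure}, from \cite{GKST15}) takes $S$ to be the first $t$ variables read by $h_1$; since $\evalDim_{S,\overline{S}}(h_1)\le w$, any $w+1$ restrictions $f|_{\vecx_S=\veca_i}$ admit a nontrivial linear combination annihilating $h_1$, while each remaining summand becomes a width-$w(w+1)$ ROABP by running $w+1$ copies in parallel. The hard polynomial is engineered (Lemma~\ref{lem:evalDim-rowsum-colsum}) so that for \emph{every} $t$-subset $S$ there are $2^{\sqrt{t}}$ linearly independent restrictions --- so with $t=\log^2(w+1)$ the combination can be forced nonzero --- and every nonzero combination contains $P_{n-t-1}$ as a projection. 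Each round therefore costs only about $\log^2(w+1)$ rows and columns rather than a square root of the variables, the widths grow as $(w+1)^{2^j}$, and unrolling gives $n \le 4^c\log^2(w+1)+c$, i.e.\ $w\ge \exp(\Omega(\sqrt{n}/2^c))$. If you want to salvage your outline, the piece you are missing is this ``kill one summand by a linear combination of few restrictions'' step in place of the order-alignment step.
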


\subsection{Related Work}
\label{sec:work}

\paragraph{Algebraic Models}

As mentioned before, Nisan \cite{nis91} proved exponential lower bounds for ROABPs, and Raz and Shpilka \cite{RS05} gave a white-box polynomial-time PIT algorithm for this model.

Forbes and Shpilka \cite{FS13} were the first to consider the black-box version of this problem, and obtained a hitting set of size $(nwd)^{O(\log n)}$, for $n$-variate, degree-$d$ and width-$w$ ROABPs, if the order in which the variables are read is known in advance.
Forbes, Shpilka and Saptharishi \cite{FSS14} obtained a hitting set of size $(nwd)^{O(d \log(w) \log n)}$ for {\em un}known order ROABPs.
This was improved later by Agrawal et al.\ \cite{agks15} who obtained a hitting set of size $(nwd)^{O(\log n)}$ which matches the parameters of the known-order case.

For higher number of reads, much less was known.
Gurjar et al.
\cite{GKST15} considered the model of a {\em sum} of $c$ ROABPs, and obtained a white-box algorithm that runs in time $(ndw^{2^c})^{O(c)}$, and a black-box algorithm that runs in time $(ndw)^{O(c 2^c \log(ndw))}$, so that the running time is polynomial in the former case and quasipolynomial in the latter, when $c$ is constant.
A sum of $c$ ROABPs can be simulated by read-$c$ oblivious ABPs, and we show (in \autoref{sec:separation}) that read-$c$ oblivious ABPs are in fact strictly stronger.

Lower bounds against the model of sums of ROABPs were obtained in a recent work of Arvind and Raja \cite{AR15}, who showed that for every constant $\eps >0$, if the permanent is computed by a sum of $n^{1/2-\eps}$ ROABPs, then at least one of the ROABPs must be of width $2^{n^{\Omega(1)}}$.

We also mention an earlier work of Jansen et al.\ \cite{JQS10}, who also gave white-box and black-box tests for the weaker model of sum of constantly many read-once ABPs, where in their definition every variable is allowed to label only a single {\em edge} in the ABP.

Another model which is subsumed by oblivious read-$k$ ABPs is that of bounded-read formulas.
Shpilka and Volkovich \cite{SV15} constructed quasipolynomial-size hitting set for read-once formulas, and Anderson, van~Melkebeek and Volkovich \cite{amv11} extended this result to multilinear read-$k$ formulas and obtained a polynomial-time white-box algorithm and quasipolynomial-time black-box algorithm.
The natural simulation of read-$k$ formulas by ABPs produces an ABP in which every variable labels at most $k$ edges, and it can be seen that such programs can be converted to read-$k$ oblivious ABPs with only a polynomial overhead.

To conclude, earlier results apply only to restricted submodels of read-$k$ oblivious ABPs.

\paragraph{Boolean Models}

Let us now make a small detour and consider the boolean analogs for our models.
A (boolean) branching program is a directed acyclic graph with a source node $s$ and two sink nodes, $t_0$ and $t_1$.
Each internal {\em node} is labeled by a variable $x_i$ with two outgoing edges, labeled $0$ and $1$.
The program computes a boolean function on an input $(x_1, \ldots, x_n) \in \set{0,1}^n$ by following the corresponding path along the program.

A read-$k$-times boolean branching program is allowed to query every variable at most $k$ times along every path from the source a sink.
Note that this is more general than our definition of read-$k$ oblivious branching program.
Further distinction is made in the boolean case between {\em semantic} read-$k$ branching programs, in which this restriction is enforced only on paths that are consistent with some input, and between {\em syntactic} read-$k$ branching programs, in which this restriction applies for all paths (further note that in the read-once case, there is no distinction between the syntactic and the semantic model).

Exponential lower bounds for read-once branching program for explicit functions are known since the 1980's \cite{Zak84, BHST87, Wegener88}, even for functions that are computed by a polynomial size read-twice branching program.

Okolnishnikova \cite{Oko91}, and Borodin, Razborov and Smolensky \cite{BRS93} extended these results and obtained exponential lower bounds for syntactic read-$k$-times branching programs, by giving an explicit boolean function $f$ such that every syntactic read-$k$-times branching program for $f$ has size $\exp(n/2^{O(k)})$ (in fact, the lower bound in the second work also holds for the stronger class of non-deterministic branching programs).

A strong separation result was obtain by Thathachar \cite{Thathachar98}, who showed a {\em hierarchy theorem} for syntactic read-$k$-times boolean branching program, by giving, for every $k$, a boolean function $f$ which is computed by a linear-size syntactic read-$(k+1)$-times branching program such that every syntactic read-$k$-times branching program computing $f$ must have size $\exp(\Omega(n^{1/k}/2^{O(k)}))$.

The semantic model seemed more difficult, but nevertheless Ajtai \cite{Ajtai05} was able to prove an exponential lower bound for semantic read-$k$-times programs (when $k$ is constant), which was extended by Beame at al.
\cite{BSSV03} to randomized branching programs.

\medskip

PIT is the algebraic analog of constructing pseudorandom generators (PRGs) for boolean models.
A PRG for a class $\mathcal{C}$ of boolean circuits is an easily computable function $G : \set{0,1}^{\ell} \to \set{0,1}^n$, such that for any circuit $C \in \mathcal{C}$, the probability distributions $C(U_n)$ and $C(G(U_{\ell}))$ are $\epsilon$-close (where $U_m$ is the uniform distribution over $\set{0,1}^m)$.

Nisan \cite{Nisan92} constructed a PRG for polynomial size read-once oblivious branching programs with seed length $O(\log^2 n)$.
This was followed by a different construction with the same seed length by Impagliazzo, Nisan and Wigderson \cite{INW94}.
However, for the constructions to work it is crucial that the order in which the variables are read is known in advance.

Beyond that, and despite a large body of work devoted to this topic \cite{BDVY13, BPW11, BRRY14, De11, GMRTV12, IMZ12, KNP11, RSV13, Steinke12, SVW14}, all the results for the unknown order case or for read-$k$ oblivious branching programs have much larger seed length, unless further structural restrictions are put on the program (such as very small width, regularity, or being a permutation branching programs).
Specifically, we highlight that even for read-$2$ oblivious branching programs, the best result is by Impagliazzo, Meka and Zuckerman \cite{IMZ12} who gave a PRG with seed length $s^{1/2 + o(1)}$ for size $s$ branching program (note that the the dependence here is on $s$ rather than on $n$).
In particular, no non-trivial results are known for general polynomial size read-$2$ oblivious boolean branching program.

\subsection{Proof Technique}
\label{sec:technique}

Before delving into the details of our proof, it is perhaps instructive to think again about read-once branching programs.
The main exploitable weakness of these branching programs is that by the read-once property, their computation can be broken into two subcomputations over disjoint variables, that communicate with each other only through a small ``window'' of width $w$, the width of the branching program.
If $w$ is small it is natural to expect that upon reaching the middle layer, the branching program must ``forget'' most of the computation of the first half so that both subcomputations are ``almost independent'' in a way.
This property calls for a divide-and-conquer strategy, which was indeed, in very crude terms, the strategy that was applied both in the boolean model \cite{Nisan92} and in the algebraic model \cite{FS13, FSS14, agks15} (the details in each case, of course, are much more complicated than this simplistic description).

\subsubsection{Evaluation dimension and ROABPs}

Unfortunately, the above intuition breaks down when we allow a variable to be read multiple times, and this model requires a different strategy.
Our main starting point is the observation that, perhaps surprisingly, multiple ``passes'' over the input variables, {\em in the same order}, do not provide the program with much additional power.
That is, a $k$-pass ABP can be simulated by a ROABP, with a blow-up which is exponential in $k$ (hence, only a polynomial blow-up, if $k$ is constant).

This fact can be directly seen through analysis of the {\em evaluation dimension} measure.
For a polynomial $f(x_1, \ldots, x_n) \in \F[x_1, \ldots, x_n]$ and a subset of variables $S$, we denote by $\eval_S(f)$ the subspace of $\F[x_1, \ldots, x_n]$ that consists of all all the possible polynomials obtained from $f$ by fixing the variables in $S$ to arbitrary elements in $\F$.
The evaluation dimension of $f$ with respect to a partition $S, \overline{S}$, which is denoted $\evalDim_{S, \overline{S}}(f)$ is the dimension of $\eval_S(f)$.
Over large enough fields, this dimension equals the rank of the {\em partial derivative matrix} associated with this partition, as defined by Nisan \cite{nis91}.
In many contexts, however, it is easier to work with the evaluation dimension.
We refer to Chapter 4 of \cite{forbesphdthesis} for a detailed discussion on this equivalence, including formal proofs.

The importance of the evaluation dimension measure stems from the fact that $f$ can be computed by a width-$w$ ROABP in the order $x_1, x_2, \ldots, x_n$, if and only if $\evalDim_{\set{x_1, \ldots, x_i}, \set{x_{i+1}, \ldots, x_n}}(f) \le w$ for every $1 \le i \le n$.
Thus, this measure provides a precise characterization for the amount of resources needed to compute a polynomial in this model (see \autoref{thm:eval-dim-roabp}).

\subsubsection{Evaluation dimension and $k$-pass oblivious ABPs}

We are able to adapt the proof of the ``only if'' part of the above fact in order to show that if $f$ is computed by a $k$-pass oblivious ABP (that is, $f$ reads the $n$ variables $k$ times in the same order) then $\evalDim_{\set{x_1, \ldots, x_i}, \set{x_{i+1}, \ldots, x_n}}(f) \le w^{2k}$ for every $i \in [n]$.
That is, $k$ passes over the input in the same order cannot create many independent evaluations.
Then, using the ``if'' part of the equivalence, it follows that $f$ can also be computed using a ROABP of width $w^{2k}$ (see \autoref{lem:k-pass-to-roabp}).

This discussion immediately implies a hitting set of the class of $k$-pass oblivious ABPs of size $(ndw^{2k})^{O(\log n)}$ (\autoref{thm:intro:pit-k-pass}), as well as exponential lower bounds for this model, simply by applying the results for ROABPs.
It is still not clear, however, how to handle the general case, since even read-$2$ oblivious ABPs are exponentially stronger than ROABPs (recall that \cite{KNS15} give an exponential separation between a sum of two ROABPs and ROABPs, and we separate $2$-pass varying-order ABPs from sums of ROABPs).


\subsubsection{PIT for read-$k$ oblivious ABPs}

Let us focus, for the time being, on the simplest instance of the more general problem, by considering a $2$-pass varying-order ABP computing a non-zero polynomial $f$.
That is, an ABP of width $w$ that, without loss of generality, reads the variables in the order
$x_1, x_2, \ldots, x_n, x_{\pi(1)}, x_{\pi(2)},\ldots, x_{\pi(n)}$,
for some permutation $\pi$.
As we mentioned, we cannot possibly hope to simulate any such branching program by a small ROABP.
We do, however, find a large subset of the variables $S$, such that if we fix all the other variables arbitrarily (or, equivalently, think of $f$ as a polynomial in the variables of $S$ over the field of rational functions $\F(\overline{S})$), the resulting polynomial has a small ROABP.

By the well-known Erd\H{o}s--Szekers Theorem \cite{ES35}, any sequence of distinct integers of length $n$ contains either a monotonically increasing subsequence of length $\sqrt{n}$, or a monotonically decreasing subsequence of the same length.
Applied to the sequence $x_{\pi(1)}, x_{\pi(2)},\ldots, x_{\pi(n)}$ (with the natural order $x_1 < x_2 < \cdots < x_n$) we get a monotone subsequence of variables, which we might as well --- for the sake of this exposition --- assume to be monotonically increasing (the case of a decreasing sequence is, somewhat counter-intuitively, even simpler).
Let $S=\set{y_1, \ldots, y_{\sqrt{n}}}$ be the set of $\sqrt{n}$ elements that appear in this monotone subsequence.
Having fixed all the variables in $\overline{S}$, we are left, by the monotonicity property, with a branching program that reads the variables in the order $y_1, y_2, \ldots, y_{\sqrt{n}}, y_{1}, y_{2},\ldots, y_{\sqrt{n}}.$
Observe that this is exactly a 2-pass branching program!
Hence, the previous arguments apply here, and if $f$ is non-zero, we can efficiently find an assignment to the variables in $S$ from $\F$ that keeps the polynomial non-zero.
Having reached this point, we can ``resurrect'' the variables in $\overline{S}$, but note that we are left with only $n-\sqrt{n}$ variables.
These are again computed by a 2-pass varying-order ABP, so me may apply the same argument repeatedly.
After $O(\sqrt{n})$ iterations we are guaranteed to find an assignment to all the variables on which $f$ evaluates to a non-zero output.

At each stage, we construct a hitting set for width-$\poly(w)$ ROABPs, of size $(nwd)^{O(\log n)}$.
Since we take a cartesian product over $O(\sqrt{n})$ sets, the total size of the hitting set will eventually be $(nwd)^{\tilde{O}(\sqrt{n})}$, as promised by \autoref{thm:intro:pit-k-abp}.

Generalizing the argument above for $k$-pass varying-order ABPs is fairly straightforward, and is done using repeated applications of the Erd\H{o}s--Szekers Theorem to each of the $k$ sequences in order to obtain a subsequence of a subset of the variables $S$ which is monotone in every pass and has size only $n^{1/2^{k-1}}$, which accounts for most of the loss in the parameters.\footnote{This lower bound on the length of a subsequence which is monotone in every pass is the best possible.
This fact is attributed to de Bruijn (unpublished, see \cite{Kru53}), and the actual construction which shows that the lower bound is tight appears in \cite{AFK85}.}
The polynomial, restricted to variables in $S$, will be computed by a $k$-pass ABP.

In order to handle general read-$k$ oblivious ABPs, we need more ideas.
We observe that after repeatedly applying the Erd\H{o}s--Szekers Theorem to the subsequence of every ``read'', we do not get a $k$-pass ABP as before, but rather $k$ monotone sequences that are intertwined together.
We next show that by discarding more variables, but not too many, we get a structure that we call a ``$k$-regularly interleaving sequence''.
This is a technical notion which is presented in full details in \autoref{sec:pit-read-k-abp}, but the main point is that this definition allows us to argue that the obtained read-$k$ oblivious ABP has a (small) evaluation dimension and therefore it can be simulated by a not-too-large ROABP.
Obtaining this $k$-regularly interleaving property is the main technical difficulty of the proof.


\subsubsection{Lower bounds for read-k oblivious ABPs}

The arguments above that give PIT algorithms already give lower bounds for read-$k$ oblivious ABPs.
We have shown that if $f$ is computed by a 2-pass varying-order ABP of width $w$, then there exist a subset of $\sqrt{n}$ variables $S$ such that $f$ is computed by an ROABP of width $w^4$ over $\F(\overline{S})$.
This implies that if we pick $f$ so that every restriction to $\sqrt{n}$ variables has an exponential (in $\sqrt{n}$) lower bound for ROABPs, we would receive a subexponential lower bound for computing $f$ in a 2-pass varying-order ABP.
(These arguments, again, generalize to read-$k$ oblivious ABPs.)

In order to get an exponential lower bound (\autoref{thm:intro:lower-bound-k-abp}), we observe that we do not need to bound the evaluation dimension for {\em every} prefix (namely, to show that a subset of the variables is computed by a small ROABP), but only to show that the evaluation dimension is small for {\em some} prefix.
This is much easier to achieve since we do not need the order of the reads to be ``nicely-behaved'' with respect to {\em every} prefix, but just with respect to {\em a} prefix.

In other words, we invoke a simple averaging argument to show that if $f$ is computed by a width-$w$ read-$k$ oblivious ABP, then there exist sets of variables $S$ (of size at least $n/k^{O(k)}$) and $T$ (of size at most $n/100$), so that whenever we fix the variables in $T$ we get that $\evalDim_{S, \overline{S}} (g) \le w^{2k}$, where $g$ is any restriction of $f$ obtained by fixing the variables in $T$.
We then construct an explicit polynomial whose evaluation dimension with respect to every set remains large, even after arbitrarily fixing a small set of the variables (see \autoref{thm:read-k-eval-dim}).

\subsubsection{Separating 2-pass ABPs from sums of ROABPs}

In order to prove the separation with a 2-pass varying-order ABPs and sum of $c$ ROABPs (\autoref{thm:intro:separation}),
we use a structural result proved by Gurjar et al.\ \cite{GKST15} that gives a way to argue by induction on ROABPs. Given a polynomial $f$ which is computed by a sum $h_1 + h_2 + \cdots h_c$ of ROABPs of width $w$, we would like to find a related polynomial $f'$ that is computed by a sum of $c-1$ ROABPs of perhaps slightly larger width. Here, the evaluation dimension plays a role as well. The way to do this is to pick a non-trivial linear combination of $w+1$ partial evaluations of $f$ that make $h_1$ zero, which is possible since $h_1$ has a small evaluation dimension with respect to prefixes of variables corresponding to the order in which the variables are read in $h_1$. One can then show that, having eliminated $h_1$, each of the other summands can still be computed by a ROABP of width $w(w+1)$.

We provide a simple polynomial computed by a 2-pass varying-order ROABP whose partial evaluations are complex enough in the sense that they contain many linear independent evaluations and also a ``scaled-down'' version of the original polynomial as a projection.
It then follows by induction, using the above arguments, that this polynomial cannot be computed by a small sum of small ROABPs (see \autoref{lem:evalDim-rowsum-colsum}).

\subsection{Organization}

We start with some preliminaries and useful facts about the evaluation dimension in \autoref{sec:preliminaries} that almost all the results in this paper rely on.
In \autoref{sec:separation}, we present the separation between the class of $2$-pass varying order ABPs and sums of ROABPs.
Following that, in \autoref{sec:read-k ABP LBs}, we present an exponential lower bound for the class of general read-$k$ ABPs.
Then in \autoref{sec:pit-read-k-abp} we present the white-box PIT for read-$k$ ABPs.
Finally, we conclude with some open problems in \autoref{sec:conclusions}.

\section{Preliminaries}\label{sec:preliminaries}

\subsection{Notation}
For $n \in \N$, we denote by $[n]$ the set $\{1, 2, \ldots, n\}$. We commonly denote by $\vecx$ a set of $n$ indeterminates $\set{x_1, \ldots, x_n}$, where the number of indeterminates $n$ is understood from the context. As we often deal with prefixes of this set, we denote by $\exi$ the set $\set{x_1, \ldots, x_i}$, and more generally, for any $S \subseteq [n]$, $\vecx_S$ denotes the set $\set{x_i \mid i \in S}$.

For a polynomial $f \in \F[\vecx]$, a set $S \subseteq [n]$ and vector $\veca = (a_1, \ldots, a_{|S|}) \in \F^{|S|}$, we denote by $f|_{\vecx_S = \veca}$ the restriction of $f$ obtained by fixing the $j$-th element in $S$ to $a_j$.

For a subset $S \subseteq \vecx$ of variables, we denote its complement by $\overline{S}$. For disjoint subsets $S, T \subseteq [n]$ we denote by $S \sqcup T$ their disjoint union.

In our PIT algorithm, we need to combine hitting sets for smaller sets of variables. Hence, for a partition of $[n]$, $S_1\sqcup S_2\sqcup \cdots \sqcup S_m = [n]$, and sets $\cH_i \subseteq \F^{|S_i|}$, we denote by $\cH_1^{S_1}\times \cdots \times \cH_m^{S_m}$ the set of all vectors in $\F^n$ whose restriction to the $S_i$ coordinates is an element of $\cH_i$, that is
\[
\cH_1^{S_1}\times \cdots \times \cH_m^{S_m} = \{ v \in\F^n \mid \forall i\in [m], \; v|_{S_i} \in \cH_i  \}.
\]

We will also use the following theorem that gives a construction of a hitting set for ROABPs.

\begin{theorem}[Hitting Set for ROABPs, \cite{agks15}]
\label{thm:hitting-set-ROABP}
There exists a hitting set $\cH$ for the class of $n$-variate polynomials computed by width-$w$ individual-degree-$d$ ROABPs of size $(nwd)^{O(\log n)}$, in any variable order. $\cH$ can be constructed in time $\poly(|\cH|)$.
\end{theorem}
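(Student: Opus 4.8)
The plan is to invoke this as a black box --- it is the main theorem of Agrawal, Gurjar, Korwar and Saxena \cite{agks15} --- but for completeness let me outline how such a hitting set is built. Recall that $f$ is computed by a width-$w$ ROABP reading its variables in an order $\pi$ exactly when it admits a \emph{matrix-product representation}
\[
f(\vecx) \;=\; u^\top \Bigl(\, \prod_{j=1}^{n} M_j\bigl(x_{\pi(j)}\bigr) \Bigr)\, v,
\]
with $u, v \in \F^w$ and $M_j \in \F[x]^{w \times w}$ of degree $< d$; equivalently, by the evaluation-dimension characterization (\autoref{thm:eval-dim-roabp}), every prefix $\{x_{\pi(1)}, \dots, x_{\pi(i)}\}$ has coefficient space of dimension at most $w$. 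The goal is an explicit low-variate generator (equivalently, a small hitting set) that fools every such $f$, crucially \emph{without} knowing $\pi$.

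The key device is a \emph{basis-isolating weight assignment}: a weight function $\mathbf{w} : [n] \to \N$ of polynomially bounded magnitude whose extension to monomials, restricted to every prefix of the (unknown) variable order, singles out a basis of the corresponding (dimension $\le w$) coefficient space by uniquely minimizing weight. Two ingredients are then combined. First, a \emph{single} basis-isolating $\mathbf{w}$ already yields a hitting set of size $\poly(n,d,w)$: under the weight substitution $x_i \mapsto c_i\, t^{\mathbf{w}(i)}$, with the scalars $c_i$ ranging over a small Vandermonde-type set that keeps the $\le w$ relevant prefix-basis elements linearly independent inside each matrix layer, the polynomial $f$ collapses to a univariate in $t$ of degree $\poly(n,d,w)$ that is nonzero whenever $f \neq 0$, and one simply evaluates this univariate at enough points. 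Second, one needs a small explicit \emph{family} $\{\mathbf{w}_1, \dots, \mathbf{w}_M\}$ of weight assignments such that for every width-$w$, degree-$d$ ROABP polynomial at least one $\mathbf{w}_j$ is basis-isolating; the hitting set $\cH$ is then the union of the per-assignment hitting sets.

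The quasipolynomial bound $(nwd)^{O(\log n)}$ --- rather than anything exponential --- comes entirely from controlling $M$, and this is where I expect the main obstacle to lie. One constructs the candidate assignments recursively along a balanced binary tree over the $n$ variables: at each of the $O(\log n)$ levels a derandomized isolation-lemma argument merges the assignments coming from the two subtrees, and because each prefix coefficient space has dimension only $w$, the number of ``collision'' events that must be avoided at a level is $\poly(n,d,w)$, contributing a $\poly(n,d,w)$ factor to $M$; multiplying over the $\log n$ levels gives $M = (nwd)^{O(\log n)}$. Making this merging step \emph{order-oblivious} --- so that the weights depend only on the variable indices, not on $\pi$, yet the isolation property holds for every prefix of every order --- is the delicate point and the main technical content of \cite{agks15}, and is precisely what improves on the order-dependent divide-and-conquer construction of Forbes and Shpilka \cite{FS13}. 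Polynomial-time constructibility of $\cH$ is then immediate, since both the family of weight assignments and the Vandermonde-type scalars are given by explicit formulas.
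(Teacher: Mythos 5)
This theorem is imported from \cite{agks15} and the paper gives no proof of it --- it is used strictly as a black box --- so your decision to invoke it as such matches the paper exactly. Your accompanying sketch of the basis-isolating weight-assignment construction and the $O(\log n)$-level merging that yields the quasipolynomial bound is a faithful summary of the actual argument in \cite{agks15}, but it is not something the paper attempts or needs to reproduce.
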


\subsection{ABPs and iterated matrix products}

The computation of an ABP corresponds to iterated multiplication of matrices of polynomials. In the case of oblivious branching programs, the ABP computes an iterated matrix product of univariate matrices. We record this fact as a lemma, and refer to \cite{forbesphdthesis} for a proof and a detailed discussion on this subject.

\begin{lemmawp}
\label{lem:obv-ABP}
Suppose $f$ is a polynomial computed by an oblivious ABP $A$ of width $w$ and length $\ell$, that reads the variables in some order $x_{i_1}, x_{i_2}, \ldots, x_{i_\ell}$. Then $f$ is the $(1,1)$ entry of a matrix of the form
\[
A_1(x_{i_1}) \cdot A_2(x_{i_2}) \cdots A_\ell(x_{i_\ell})
\]
where for every $j \in [\ell]$, $A_j \in \F[x_{i_j}]^{w \times w}$ is a $w \times w$ matrix in which each entry is a univariate polynomial in $x_{i_j}$.
\end{lemmawp}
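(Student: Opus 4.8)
The plan is to reduce the combinatorial definition of the ABP to the claimed algebraic form via the standard correspondence between walks in a layered graph and entries of an iterated matrix product. First I would normalize $A$: by adding isolated dummy vertices to each layer I may assume without loss of generality that every layer $0, 1, \ldots, \ell$ has exactly $w$ vertices (this changes neither the width nor the computed polynomial, since the added vertices lie on no source-to-sink path), and I would fix for each layer an identification of its vertex set with $[w]$ under which the source is vertex $1$ of layer $0$ and the sink is vertex $1$ of layer $\ell$.

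Next, for each $j \in [\ell]$ I would define $A_j \in \F[x_{i_j}]^{w \times w}$ by letting $(A_j)_{a,b}$ be the label of the edge from vertex $a$ of layer $j-1$ to vertex $b$ of layer $j$ (and $0$ if there is no such edge). Since $A$ is oblivious, every edge from layer $j-1$ to layer $j$ is labeled by a univariate polynomial in $x_{i_j}$, so each entry of $A_j$ is indeed a univariate polynomial in $x_{i_j}$, exactly as the statement requires.

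The core of the argument is then the identity
\[
\bigl(A_1(x_{i_1})\, A_2(x_{i_2}) \cdots A_\ell(x_{i_\ell})\bigr)_{1,1}
= \sum_{j_1, \ldots, j_{\ell-1} \in [w]} (A_1)_{1,j_1}\,(A_2)_{j_1,j_2} \cdots (A_\ell)_{j_{\ell-1},1},
\]
which is simply the definition of iterated matrix multiplication. Each summand on the right is the product of the edge labels along the walk $v_0 \to v_1 \to \cdots \to v_\ell$, where $v_0$ is the source, $v_\ell$ is the sink, and $v_j$ is vertex $j_j$ of layer $j$ for $1 \le j \le \ell-1$; by our convention this product vanishes unless every consecutive pair is an actual edge, i.e.\ unless the walk is a genuine source-to-sink path of $A$, and conversely every such path corresponds to exactly one tuple $(j_1, \ldots, j_{\ell-1})$. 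Hence the right-hand side equals the sum over all source-to-sink paths of the products of their edge labels, which by \autoref{def:abp} is precisely $f$.

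There is no substantive obstacle here: the statement is essentially a reformulation of the definition, so the only care needed is in the bookkeeping — verifying that the padding does not alter $f$, that the source and sink are the singleton layers $0$ and $\ell$ and are indexed consistently, and that the summation ranges over the $\ell-1$ internal layers. (One can also run the correspondence in reverse to recover an oblivious ABP from such a matrix product of univariate matrices, but only the stated direction is needed in the sequel.)
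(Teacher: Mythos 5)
Your proof is correct and is exactly the standard transfer-matrix argument that the paper itself omits, deferring instead to \cite{forbesphdthesis}: pad each layer to width $w$, let $(A_j)_{a,b}$ be the label of the edge from vertex $a$ of layer $j-1$ to vertex $b$ of layer $j$ (or $0$), and expand the $(1,1)$ entry of the product as a sum over source-to-sink paths. No issues.
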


\subsection{Evaluation dimension and ROABPs}

We now define a complexity measure for polynomials that we will use frequently when analyzing read-$k$ oblivious ABPs.

\begin{definition}[Evaluation dimension]
Let $f \in \F[x_1, \ldots, x_n]$ be a polynomial, and $S = \inbrace{x_{i_1},\ldots, x_{i_r}}$ be subset of variables.
We define $\eval_S(f)$ to be
\[
\eval_S(f) = \Span\setdef{f|_{\vecx_S=\veca}}{\veca \in \F^r} \subseteq \F[\overline{S}],
\]
which is the space of polynomials spanned by all partial evaluations of the $S$ variables in $f$. 

If $\vecx = S \sqcup T \sqcup R$ we define the \emph{evaluation dimension of $f$ with respect to $S \sqcup T$ over $\F(\vecx_R)$}, which shall be denoted by $\evalDim_{S,T;R}(f)$, as the dimension of the space $\eval_S(f)$ when taken over the field of rational functions $\F(\vecx_R)$. That is, we first ``move'' the variables $\vecx_R$ into the field and treat them as constants, and then consider the dimension of $\eval_S(f)$ over $\F(\vecx_R)$.

In the special case where $R=\emptyset$, we shall just use the notation $\evalDim_{S,T}(f)$.
\end{definition}

If $|\F| > \deg(f)$, then $\evalDim_{S,T}(f)$ is the rank of the \emph{partial derivative matrix with respect to $S$,$T$}, as defined by Nisan~\cite{nis91}. The rows of the partial derivative matrix are indexed by monomials $m_S$ in $S$ and its columns are indexed by monomials $m_T$ in $T$. The $(m_S,m_T)$ entry is the coefficient of $m_S m_T$ in the polynomial $f$. 
Although these two perspectives are equivalent, the formulation via evaluations is sometimes easier to work with. The evaluation dimension measure is useful when arguing about ROABPs since it characterizes the width needed to compute a polynomial $f$ using a ROABP. 

\begin{theorem}[\cite{nis91}, and see also \cite{forbesphdthesis}]
\label{thm:eval-dim-roabp}
  Let $f$ be a polynomial on $\vecx = \set{x_1,\ldots, x_n}$ and suppose for every $i \in [n]$ we have $\evalDim_{\exi,\overline{\exi}}(f) \leq w$.
Then, there is a ROABP of width $w$ in the order $x_1,\ldots, x_n$ that computes $f$.

Conversely, if $\evalDim_{\exi,\overline{\exi}}(f) = w$, then in any ROABP that computes $f$ in the order $x_1, x_2, \ldots, x_n$, the width of the $i$-th layer must be at least $w$.
\end{theorem}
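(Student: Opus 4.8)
The plan is to prove both directions by working with the matrix-product representation of an oblivious ABP (\autoref{lem:obv-ABP}) together with the identification of the evaluation dimension as the rank of the partial derivative matrix (valid over large fields, and in any case the evaluation dimension is the intrinsic quantity we care about). For the ``only if'' direction, suppose $f$ is computed by a width-$w$ ROABP reading $x_1,\ldots,x_n$ in order. By \autoref{lem:obv-ABP}, $f$ is the $(1,1)$-entry of $A_1(x_1)A_2(x_2)\cdots A_n(x_n)$ for univariate matrices $A_j \in \F[x_j]^{w\times w}$. Fix $i \in [n]$ and write $f = u(\exi)^{\sf T} \cdot v(\overline{\exi})$, where $u(\exi)$ is the first row of $A_1(x_1)\cdots A_i(x_i)$ (a vector of $w$ polynomials in $\exi$) and $v(\overline{\exi})$ is the first column of $A_{i+1}(x_{i+1})\cdots A_n(x_n)$ (a vector of $w$ polynomials in $\overline{\exi}$). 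Then every partial evaluation $f|_{\exi = \veca}$ equals $u(\veca)^{\sf T} v(\overline{\exi})$, which lies in the span of the $w$ coordinates of $v(\overline{\exi})$. Hence $\eval_{\exi}(f) \subseteq \Span\{v_1(\overline{\exi}),\ldots,v_w(\overline{\exi})\}$ has dimension at most $w$, which gives the contrapositive of the converse statement: if the $i$-th layer has width $< w$ then $\evalDim_{\exi,\overline{\exi}}(f) < w$.

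For the ``if'' direction, suppose $\evalDim_{\exi,\overline{\exi}}(f) \le w$ for every $i$. I would construct the ROABP layer by layer. The key observation is that $\eval_{\exi}(f)$ is spanned by finitely many polynomials in $\overline{\exi}$ — in fact, since $\eval_{\exi}(f)$ is the column space of the partial derivative matrix, it is spanned by the polynomials $g_m(\overline{\exi})$ whose coefficient vectors are the columns indexed by monomials $m$ in $\exi$; pick a basis $B_i = \{b_{i,1},\ldots,b_{i,w_i}\}$ of $\eval_{\exi}(f)$ with $w_i \le w$. For $i=0$ take $B_0 = \{f\}$ (a single polynomial in all the variables), and for $i = n$ note $\eval_{\vecx}(f) = \F$. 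The plan is to express each basis element of $\eval_{\exi}(f)$, after fixing $x_{i+1}$, as a linear combination (with coefficients that are univariate polynomials in $x_{i+1}$) of the basis elements of $\eval_{\vecx_{[i+1]}}(f)$: concretely, $b_{i,r}|_{x_{i+1}=\alpha}$ depends on $\alpha$ only through $\eval_{\vecx_{[i+1]}}(f)$, and one shows the map $\alpha \mapsto (b_{i,r}|_{x_{i+1}=\alpha})$ expressed in the basis $B_{i+1}$ has entries that are polynomials in $\alpha$ of degree at most $\deg_{x_{i+1}} f$. Collecting these coefficients into a matrix $M_{i+1}(x_{i+1}) \in \F[x_{i+1}]^{w_i \times w_{i+1}}$ gives the edge labels of layer $i+1$; padding all layers to width $w$, the product $M_1(x_1)\cdots M_n(x_n)$ has $(1,1)$-entry equal to $f$ by telescoping, since at the last step $\eval_{\vecx}(f)$ is one-dimensional and the composition reconstructs $f$ from $B_0 = \{f\}$.

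I expect the main obstacle to be the ``if'' direction, specifically verifying cleanly that the transition matrices $M_{i+1}$ have \emph{polynomial} (not merely rational) univariate entries of bounded degree, and that the telescoping product genuinely returns $f$ rather than some other element of the relevant evaluation space. The cleanest route is probably to argue via the partial derivative matrix: the evaluation dimension bound says each $M_{\exi,\overline{\exi}}$ (the partial derivative matrix split at position $i$) has rank $\le w$, so it factors as (a $\le w$-column matrix over $\exi$-monomials) times (a $\le w$-row matrix over $\overline{\exi}$-monomials); one then shows these factorizations can be chosen \emph{consistently} across all $i$ so that the successive factors glue into univariate matrices, which is exactly the content of Nisan's characterization and is spelled out in \cite{forbesphdthesis}. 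I would present the forward direction in full (it is short) and either reproduce or cite the layer-gluing construction for the converse, referring to \cite{nis91, forbesphdthesis} for the routine but slightly fiddly bookkeeping.
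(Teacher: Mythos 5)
The paper does not actually prove \autoref{thm:eval-dim-roabp}; it is imported wholesale from \cite{nis91} with the formal proof delegated to Chapter 4 of \cite{forbesphdthesis}, so there is no in-paper argument to compare against. That said, your proposal is the standard proof and is correct: the width lower bound via the split $f = u(\exi)^{\sf T} v(\overline{\exi})$ of the matrix product at layer $i$ is complete as written (it is the same calculation the paper itself performs inside \autoref{lem:k-pass-to-roabp} and \autoref{lem:sum-of-ROABP-structure}), and the converse construction --- choosing a basis $B_i$ of $\eval_{\exi}(f)$ for each cut and writing the change of basis induced by fixing $x_{i+1}$ as a univariate polynomial matrix --- is exactly Nisan's layer-gluing argument. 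You correctly identify the only genuinely delicate points: that the transition entries are polynomials of degree at most $\deg_{x_{i+1}} f$ (which follows by expanding each basis element in powers of $x_{i+1}$ and noting its coefficients lie in $\eval_{\vecx_{[i+1]}}(f)$, an interpolation step that needs $|\F| > \deg f$, consistent with the paper's caveat on when evaluation dimension equals partial-derivative-matrix rank), and that the telescoping product returns $f$ because $B_0 = \{f\}$ and $\eval_{\vecx}(f)$ is one-dimensional. One cosmetic remark: with the paper's convention the rows of the partial derivative matrix are indexed by $\exi$-monomials, so $\eval_{\exi}(f)$ is its row space rather than its column space; this is only a transposition of convention and does not affect the argument.
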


Let us give an example of a polynomial which has large evaluation dimension with respect to a specific subset. This example will be helpful not only because it is simple to argue about, but also because all of our constructions of hard polynomials later on will ultimately be based on a reduction to this case.

\begin{lemma}
\label{lem:large-evaldim}
Let $f(\vecu, \vecv, \vecw)$ be a polynomial of the form
\[
f = \left( \prod_{i=1}^t (\ell_i(\vecv) + \ell'_i(\vecu)) \right) \cdot g(\vecu,\vecw),
\]
where:
\begin{enumerate}
\item For every $\veca \in \F^{|\vecu|}$, it holds that $g|_{\vecu=\veca}=g(\veca,\vecw) \not \equiv 0$.
\item $\set{\ell_i}_{i=1}^t$ is a set of linearly-independent linear functions , and so is $\set{\ell'_i}_{i=1}^t$.
\end{enumerate}
Then $\evalDim_{\vecu, \vecv \sqcup \vecw} (f) \ge 2^{t}$.
\end{lemma}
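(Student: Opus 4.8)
The plan is to exhibit $2^t$ partial evaluations of $f$ (obtained by fixing $\vecu$) that are linearly independent over $\F$, which directly lower-bounds $\evalDim_{\vecu, \vecv \sqcup \vecw}(f)$ by $2^t$. The key structural observation is that fixing $\vecu = \veca$ turns the product $\prod_{i=1}^t(\ell_i(\vecv) + \ell_i'(\vecu))$ into a product $\prod_{i=1}^t(\ell_i(\vecv) + c_i)$ of \emph{affine} forms in $\vecv$, where $c_i = \ell_i'(\veca)$, while the factor $g(\vecu,\vecw)$ becomes $g(\veca,\vecw)$, a nonzero polynomial in $\vecw$ by hypothesis~(1).

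First I would reduce to the $\vecv$-part. Since $\set{\ell_i'}_{i=1}^t$ is linearly independent, the map $\veca \mapsto (\ell_1'(\veca), \ldots, \ell_t'(\veca))$ from $\F^{|\vecu|}$ to $\F^t$ is surjective; in particular (taking $|\F|$ large enough, or passing to an extension — this is the only mild point to flag, and it is harmless since evaluation dimension only goes up under field extension), we can realize any target vector $(c_1,\ldots,c_t) \in \F^t$ from some set of $2^t$ chosen vectors. Concretely, pick $c_i \in \{0, \alpha_i\}$ for suitable distinct field elements, giving $2^t$ distinct sign patterns; let $\veca^{(b)}$ for $b \in \{0,1\}^t$ be a preimage of the corresponding vector. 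Then
\[
f|_{\vecu = \veca^{(b)}} \;=\; \left(\prod_{i=1}^t (\ell_i(\vecv) + c_i^{(b)})\right) \cdot g(\veca^{(b)}, \vecw),
\]
and since $g(\veca^{(b)},\vecw) \not\equiv 0$, this polynomial is a nonzero scalar (in $\vecw$ and constants) times $\prod_i(\ell_i(\vecv) + c_i^{(b)})$. So it suffices to show the $2^t$ products $\set{\prod_{i=1}^t (\ell_i(\vecv) + c_i^{(b)})}_{b \in \{0,1\}^t}$ are linearly independent over $\F$ (the $\vecw$-factors cannot create cancellations because the $\vecv$-variables and $\vecw$-variables are disjoint and we can, if desired, further specialize $\vecw$ to a point avoiding all the zero sets of the finitely many $g(\veca^{(b)},\vecw)$ — again using a large enough field).

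The heart of the argument is therefore: if $\ell_1,\ldots,\ell_t$ are linearly independent linear forms, then the $2^t$ polynomials $p_b(\vecv) = \prod_{i=1}^t(\ell_i(\vecv) + c_i^{(b)})$, indexed by distinct tuples $(c_1^{(b)},\ldots,c_t^{(b)})$, are linearly independent. After an invertible linear change of variables we may assume $\ell_i = y_i$ are coordinates, so we must show $\prod_{i=1}^t(y_i + c_i^{(b)})$ are independent — equivalently, evaluating at points $(y_1,\ldots,y_t)$, the matrix $\left[\prod_{i=1}^t(d_i + c_i^{(b)})\right]$ over choices of evaluation point $\vecd$ and index $b$ is nonsingular for a suitable choice of $t$ evaluation points per coordinate. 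This factors as a tensor (Kronecker) product of $t$ matrices of the form $[d + c]_{d,c}$; choosing, in coordinate $i$, the evaluation points to be $\{-c_i^{(b)}\}$ across the two values makes each factor a (scaled) Vandermonde-type $2\times 2$ matrix which is nonsingular since the $c_i$ values are distinct, and the Kronecker product of nonsingular matrices is nonsingular. I expect the main obstacle to be purely bookkeeping: organizing the index set $\{0,1\}^t$, choosing the $c_i^{(b)}$ consistently so that they depend only on coordinate $i$ and bit $b_i$, and cleanly separating the $\vecv$-dependence from the (harmless) $\vecw$- and scalar-dependence; the linear-algebra core (tensor/Vandermonde nonsingularity) is standard.
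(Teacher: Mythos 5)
Your proposal is correct and follows essentially the same route as the paper: fix $\vecu$ at $2^t$ points realizing all two-valued patterns of the constants $\ell_i'(\veca)$ (the paper does this via a linear change of variables rather than surjectivity of $\veca\mapsto(\ell_1'(\veca),\dots,\ell_t'(\veca))$, but these are the same idea), observe that the resulting products of affine forms in $\vecv$ are linearly independent, and note that the nonzero variable-disjoint factors $g(\veca,\vecw)$ cannot destroy this independence. Your tensor/Vandermonde justification of the core independence fact is just a more detailed version of what the paper asserts, and your field-size caveats are unnecessary (surjectivity needs no extension, and independence can be checked with $\F[\vecw]$-coefficients rather than by specializing $\vecw$), but nothing is wrong.
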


\begin{proof}
By applying a linear transformation to the variables (which cannot increase the dimension), if necessary, we may assume without loss of generality that
\[
f = \left( \prod_{i=1}^t (v_i + u_i) \right) \cdot g(\vecu,\vecw).
\]
In particular, for any $\veca=(a_1, \ldots, a_t) \in \set{0,1}^t$ we can evaluate $\vecu$ to $\veca$ so that
\[
 \left( \prod_{i=1}^t (v_i + a_i) \right) \cdot g|_{\vecu=\veca}(\vecw) \in \eval_{\vecu}(f).
\]
The $2^t$ polynomials $\prod_{i=1}^t (v_i + a_i)$ for $\veca \in \set{0,1}^t$ are linearly independent. Further, by the assumption on $g$, we also have that $g(\veca,\vecw)$ is non-zero. Hence these polynomials (in $\vecv$) remain linearly independent even when multiplied by the variable-disjoint polynomial $g(\veca, \vecw)$ and so $\evalDim_{\vecu, \vecv \sqcup \vecw} (f) \ge 2^{t}$.
\end{proof}

The following simple lemma is an illustration of using the evaluation dimension of a polynomial to obtain a small ROABP for that polynomial. 

\begin{lemma}\label{lem:k-pass-to-roabp}
Let $f \in \F[x_1, \ldots, x_n]$ be a polynomial computed by a $k$-pass ABP of width $w$, according to the order $\pi$.
Then $f$ can be computed by a width-$w^{2k}$ read-once ABP in the order $\pi$.
\end{lemma}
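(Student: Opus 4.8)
The plan is to use the characterization of Theorem~\ref{thm:eval-dim-roabp}: it suffices to show that for every prefix $\vecx_{[i]} = \set{x_1, \ldots, x_i}$ we have $\evalDim_{\vecx_{[i]}, \overline{\vecx_{[i]}}}(f) \le w^{2k}$, and then invoke the ``if'' direction to obtain a ROABP of width $w^{2k}$ in the order $\pi$ (which, after relabeling, we may assume is the identity). So the real content is an upper bound on the evaluation dimension of a polynomial computed by a $k$-pass ABP.

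First I would write $f$ as an iterated matrix product using Lemma~\ref{lem:obv-ABP}: $f = \prod_{j=1}^{\ell} A_j(x_{i_j})$ where $\ell = kn$ and the variable sequence $x_{i_1}, \ldots, x_{i_\ell}$ consists of $k$ consecutive passes over $x_1, \ldots, x_n$. Fix a prefix boundary $i \in [n]$. For each pass $p \in [k]$, the variables in $\set{x_1, \ldots, x_i}$ occupy a contiguous block at the start of that pass, and the variables in $\overline{\vecx_{[i]}}$ occupy the complementary block; write the product restricted to pass $p$ as $P_p \cdot Q_p$, where $P_p$ is the product of the matrices reading $x_1, \ldots, x_i$ in that pass and $Q_p$ is the product of the matrices reading $x_{i+1}, \ldots, x_n$. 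Thus $f = \big(\prod_{p=1}^k P_p Q_p\big)_{1,1}$, where each $P_p$ is a matrix whose entries are polynomials only in $\vecx_{[i]}$ and each $Q_p$ a matrix whose entries are polynomials only in $\overline{\vecx_{[i]}}$.

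The key step is then a standard ``interleaved'' tensor/matrix argument. When we evaluate the $\vecx_{[i]}$-variables to a point $\veca$, every $P_p$ becomes a fixed $w \times w$ matrix of scalars, so the resulting polynomial is $\big(P_1(\veca) Q_1 P_2(\veca) Q_2 \cdots P_k(\veca) Q_k\big)_{1,1}$; this is a multilinear-in-the-$P_p(\veca)$ expression, and as $\veca$ ranges over $\F^i$ the polynomial lives in the span of the $w^{2k}$ polynomials obtained by choosing one matrix entry from each of $Q_1, \ldots, Q_k$ together with a consistent choice of row/column indices — more precisely, expanding the product entrywise, $f|_{\vecx_{[i]} = \veca}$ is a $\F$-linear combination (with coefficients depending on $\veca$) of the $(w^2)^k$ products $ (Q_1)_{a_1 b_1}(Q_2)_{a_2 b_2}\cdots (Q_k)_{a_k b_k}$ over index tuples, where the coefficient of such a term is a product of entries of the $P_p(\veca)$'s. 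Hence $\eval_{\vecx_{[i]}}(f)$ is contained in the span of these $w^{2k}$ fixed polynomials in $\overline{\vecx_{[i]}}$, giving $\evalDim_{\vecx_{[i]}, \overline{\vecx_{[i]}}}(f) \le w^{2k}$. (One should check that the $Q_k$-block may be empty or the $P_1$-block at the very start trivial — these only make the bound better — and handle the degenerate case $i = n$ where all $Q_p$ are trivial and $\evalDim = 1$.)

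The main obstacle, such as it is, is purely bookkeeping: making precise that in the $k$-pass order the prefix $\vecx_{[i]}$ really does split each pass into two contiguous runs, so that the matrix product factors cleanly as $P_1 Q_1 P_2 Q_2 \cdots P_k Q_k$ with the stated variable-disjointness; once that is set up, the dimension count is immediate. I would state the evaluation-dimension bound as the substance of the argument and then close by citing Theorem~\ref{thm:eval-dim-roabp} to conclude the existence of the width-$w^{2k}$ ROABP in the order $\pi$.
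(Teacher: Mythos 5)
Your proposal is correct and follows essentially the same route as the paper: reduce to bounding $\evalDim_{\exi,\overline{\exi}}(f)$ via \autoref{thm:eval-dim-roabp}, factor the iterated matrix product from \autoref{lem:obv-ABP} into alternating prefix/suffix blocks, and observe that after fixing the prefix the evaluation lies in the span of $w^{2k}$ fixed polynomials. The only cosmetic difference is that the paper expands the collapsed matrices $N_j(\veca)$ in a basis of $\F^{w\times w}$ while you expand the product entrywise, which is the same count with the standard basis of matrix units.
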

\begin{proof}
Let $A$ be the $k$-pass ABP computing $f$.  We may assume without loss of generality that the $k$ passes of $A$ read the variables in the order $x_1, \ldots, x_n$. Recall that for any $i \in [n]$, we denote $\exi = \{x_1, \ldots, x_i\}$.  By \autoref{thm:eval-dim-roabp}, it is enough to show that for any $i \in [n]$, 
\[
\evalDim_{\exi, \overline{\exi}} \le w^{2k}.
\] 
By the assumption on $f$ and by \autoref{lem:obv-ABP}, for every $i \in [n]$ and $j \in [k]$ there exists a matrix $M^{i,j} \in \F^{w \times w}$  such the entries of $M^{i,j}$ are univariate polynomials in $x_i$ and
\[
f = \left(  M^{1,1} (x_1) M^{2,1} (x_2) \cdots M^{n,1}(x_n) M^{1,2}(x_1) M^{2,2}(x_2) \cdots M^{n,k} (x_n) \right)_{1,1}.
\]
Fix $i \in [n]$, and consider any assignment of the form $\exi = \veca$ for $\veca=(a_1, \ldots, a_i) \in \F^i$. Having fixed $\exi$, we get that for some $k$ matrices $N_1(\veca), \ldots, N_k(\veca)$, that depend on $\veca$,
\begin{align}
\label{eq:k-pass}
f|_{\exi = \veca} &= \left( N_1(\veca) \cdot M^{i+1,1}(x_{i+1}) \cdots M^{n,1}(x_n) \cdot N_2(\veca) \cdot M^{i+1,2}(x_{i+1})  M^{n,2}(x_n)\right. \nonumber \\
&\cdots \left. N_k(\veca) \cdot  M^{i+1,k}(x_{i+1}) \cdots M^{n,k}(x_{n}) \right)_{1,1}.
\end{align}
It follows that any polynomial $g(x_{i+1}, \ldots, x_n) \in \eval_{\exi}(f)$ is completely determined by $N_1, \ldots, N_k$ which have $w^2$ entries each. More precisely, let $\set{B_1, \ldots, B_{w^2}}$ be a basis for $\F^{w \times w}$. For each $j \in [k]$, we can write $N_j(\veca) \in \F^{w \times w}$ in \eqref{eq:k-pass} as a linear combination of $\set{B_1, \ldots, B_{w^2}}$. Then, by expanding the matrix product in \eqref{eq:k-pass}, we see that every polynomial of the form $f|_{\exi = \veca}$ (and as a consequence, every polynomial in $\eval_{\exi}(f)$) is spanned by the $w^{2k}$ polynomials of the form
\[
\left( B_{\sigma_1} \cdot M^{i+1,1}(x_{i+1}) \cdots M^{n,1}(x_n) \cdot B_{\sigma_2} \cdot M^{i+1,2}(x_{i+1})  M^{n,2}(x_n) \cdots
 B_{\sigma_k} \cdot  M^{i+1,k}(x_{i+1}) \cdots M^{n,k}(x_{n}) \right)_{1,1}
\]
for $\sigma_1,\ldots,\sigma_k \in [w^2]$, which implies that $\evalDim_{\exi, \overline{\exi}}(f) \le w^{2k}$.

By \autoref{thm:eval-dim-roabp}, the claim follows.
\end{proof}

In fact, the proof of \autoref{lem:k-pass-to-roabp} permits a slight generalization of the lemma, by requiring weaker assumptions on the ABP, which is captured by the following definition.

\begin{definition}
\label{def:k-gap}
Let $A$ be an ABP that computes a polynomial $f \in \F[x_1, \ldots, x_n]$. We say that $A$ has the {\em $k$-gap property} with respect to $\{x_1, \ldots, x_i\}$, if there exist $k$ matrices $M_1, \ldots, M_k \in \F^{w \times w}[x_{i+1}, \ldots, x_n]$ such that for every $\veca \in \F^{i}$, there exists $k$ matrices $N_1(\veca), \ldots, N_{k}(\veca) \in \F^{w \times w}$ such that 
\begin{align}
\label{eq:k-gap}
f|_{\vecx_i = \veca} &= \left( \vphantom{M^{1,1}} N_1(\veca) \cdot M_1(x_{i+1},\ldots,x_n) \cdot N_2(\veca) \cdot M_2(x_{i+1},\ldots, x_n)\right. \nonumber\\
&\cdots \left. \vphantom{M^{1,1}}  N_k(\veca) \cdot  M_{k}(x_{i+1},\ldots,x_n) \right)_{1,1}.
\end{align}
$A$ is said to simply have the $k$-gap property if it has this property with respect to $\exi$, for every $i \in [n]$.
\end{definition}

\autoref{fig:k-gap} provides a pictorial explanation for the choice of this terminology.

\begin{figure}[h]
\begin{center}
\begin{tikzpicture}
\fill[red!20!white] (0,0) rectangle (2,1);
\fill[red!20!white] (4,0) rectangle (8,1);
\draw[ultra thick] (0,0) rectangle(12,1);

\foreach \i in {1,...,11} {
\draw (\i,0) -- (\i,1);
}

\node at (0.53,0.5) {\Large $x_1$};
\node at (1.53,0.5) {\Large $x_2$};
\node at (2.53,0.5) {\Large $x_3$};
\node at (3.53,0.5) {\Large $x_4$};

\node at (4.53,0.5) {\Large $x_1$};
\node at (5.53,0.5) {\Large $x_2$};
\node at (6.53,0.5) {\Large $x_1$};
\node at (7.53,0.5) {\Large $x_2$};

\node at (8.53,0.5) {\Large $x_3$};
\node at (9.53,0.5) {\Large $x_4$};
\node at (10.53,0.5) {\Large $x_3$};
\node at (11.53,0.5) {\Large $x_4$};

\end{tikzpicture}
\caption{An ABP that reads the variables in this (left-to-right) order is a read-$3$ ABP that has the $2$-gap property with respect to $\{x_1,x_2\}$.}
\label{fig:k-gap}
\end{center}
\end{figure}
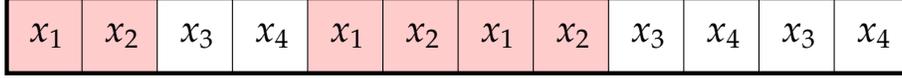

\noindent
Using the exact same arguments as in the proof of \autoref{lem:k-pass-to-roabp}, we obtain the following lemma.

\begin{lemmawp}
\label{lem:k-gap-to-roabp}
Let $f \in \F[x_1, \ldots, x_n]$ be a polynomial computed by an ABP of width $w$ with the $k$-gap property.
Then $f$ can be computed by a width-$w^{2k}$ read-once ABP.
\end{lemmawp}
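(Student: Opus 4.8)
The plan is to observe that \autoref{lem:k-gap-to-roabp} is proved by an almost verbatim repetition of the argument for \autoref{lem:k-pass-to-roabp}, with the only change being that the role of the univariate matrices $M^{i+1,j}(x_{i+1}),\dots,M^{n,j}(x_n)$ in each of the $k$ ``blocks'' is now played directly by the multivariate matrices $M_j(x_{i+1},\dots,x_n)$ supplied by the $k$-gap property. Concretely, fix $i \in [n]$; by \autoref{thm:eval-dim-roabp} it suffices to show $\evalDim_{\exi,\overline{\exi}}(f) \le w^{2k}$. By \autoref{def:k-gap} there are matrices $M_1,\dots,M_k \in \F^{w\times w}[x_{i+1},\dots,x_n]$ such that for every $\veca \in \F^i$ there exist $N_1(\veca),\dots,N_k(\veca) \in \F^{w\times w}$ with
\[
f|_{\exi = \veca} = \left( N_1(\veca)\cdot M_1 \cdot N_2(\veca)\cdot M_2 \cdots N_k(\veca)\cdot M_k \right)_{1,1}.
\]

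The key step is then the same dimension count: fix a basis $\set{B_1,\dots,B_{w^2}}$ of $\F^{w\times w}$, expand each $N_j(\veca)$ in this basis, and distribute the product. This exhibits every polynomial of the form $f|_{\exi=\veca}$ — and hence, since such polynomials span $\eval_{\exi}(f)$, every element of $\eval_{\exi}(f)$ — as a linear combination (with coefficients in $\F$ determined by $\veca$) of the $w^{2k}$ fixed polynomials
\[
\left( B_{\sigma_1}\cdot M_1(x_{i+1},\dots,x_n) \cdot B_{\sigma_2}\cdot M_2(x_{i+1},\dots,x_n) \cdots B_{\sigma_k}\cdot M_k(x_{i+1},\dots,x_n) \right)_{1,1}
\]
ranging over $\sigma_1,\dots,\sigma_k \in [w^2]$, which are independent of $\veca$. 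Therefore $\evalDim_{\exi,\overline{\exi}}(f) \le w^{2k}$, and since $i$ was arbitrary, \autoref{thm:eval-dim-roabp} yields a width-$w^{2k}$ ROABP computing $f$ (in the order $x_1,\dots,x_n$).

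There is essentially no obstacle here: the only thing to double-check is that nothing in the proof of \autoref{lem:k-pass-to-roabp} actually used that the interpolating matrices were \emph{univariate} — the argument only used that each of the $k$ blocks is a single fixed matrix (independent of $\veca$) sandwiched between the $\veca$-dependent matrices $N_j(\veca)$, and that the $N_j(\veca)$ each live in a $w^2$-dimensional space. Both facts are exactly what \autoref{def:k-gap} guarantees. So I would simply state that the proof is identical to that of \autoref{lem:k-pass-to-roabp}, substituting the $k$-gap decomposition \eqref{eq:k-gap} in place of \eqref{eq:k-pass}, and invoking \autoref{thm:eval-dim-roabp} at the end.
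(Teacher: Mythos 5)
Your proposal is correct and is exactly the paper's intended argument: the paper states \autoref{lem:k-gap-to-roabp} without proof, remarking only that it follows by ``the exact same arguments'' as \autoref{lem:k-pass-to-roabp}, which is precisely the substitution of the $k$-gap decomposition \eqref{eq:k-gap} for \eqref{eq:k-pass} that you carry out. Your observation that the univariateness of the matrices played no role in the dimension count is the right justification, and the final appeal to \autoref{thm:eval-dim-roabp} matches the paper.
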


\section{Separating $2$-pass ABPs from sums of ROABPs}
\label{sec:separation}

Recall that every {\em sum} of $c$ ROABPs can be realized by an oblivious read-$c$ ABP. In order to motivate our study of read-$k$ oblivious ABP, we begin by showing a polynomial that can be computed by a constant-width, $2$-pass varying-order ABP, and yet cannot be computed by a small sum of polynomial-size ROABPs. Thus, even a weak, but non-trivial, form of read-$k$ oblivious ABPs, for $k = 2$, is already stronger than sums of ROABPs.

Suppose $\vecx=\{x_{1,1},\ldots,x_{n,n}\}$ is a set of $n^2$ variables. It is useful to think of $\vecx$ as an $n \times n$ matrix $X$ such that $x_{i,j}$ appears in the $(i,j)$-th entry.
For every $m \in [n]$, define
\[
\rowsum_m = \sum_j x_{m,j} \quad \text{and} \quad \colsum_m = \sum_i x_{i,m}.
\]
Let
\begin{equation}
\label{eq:rowsum-colsum}
P_n(\vecx) = \inparen{\prod_{i=1}^n \rowsum_i} \cdot \inparen{\prod_{j=1}^n \colsum_j}.
\end{equation}

Observe that for all $i, j$, $\rowsum_i$ and $\colsum_j$ can be computed by width-2 ROABPs.  Moreover, both $\prod_{i=1}^n \rowsum_i$ and $\prod_{j=1}^n \colsum_j$ can be as well.  Indeed, their product $P_n$ is computed by a 2-pass varying-order ABP.

\begin{theorem}\label{thm:sep-2-pass-from-sum-ROABPs}
Let $P_n(x_{1,1},\ldots, x_{n,n})$ be the $n^2$-variate polynomial defined in \eqref{eq:rowsum-colsum}.
For every $c > 0$, any sum of $c$ ROABPs that computes it must have width $\exp(\sqrt{n}/2^{c})$. 
\end{theorem}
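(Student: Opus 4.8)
Following \autoref{sec:technique}, I would first isolate the structural weakness of $P_n$ and then run an induction on $c$ driven by the summand‑elimination idea of Gurjar et al.\ \cite{GKST15}. There are two ingredients. The first is that $P_n$, and the $\rowsum\times\colsum$‑type polynomials it degenerates to, have large evaluation dimension: fix a subset $S$ of cells of the $n\times n$ grid of size about $n^2/2$. A short counting argument shows that either $\Omega(n)$ rows or $\Omega(n)$ columns are \emph{$S$‑mixed} (meet both $S$ and $\overline S$) --- otherwise, deleting all mixed rows and columns would leave a clean $(n-o(n))\times(n-o(n))$ subgrid contained entirely in $S$ or entirely in $\overline S$, contradicting $|S|\approx n^2/2$. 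For a mixed row $i$, the form $\rowsum_i$ splits as $\sigma_i(\vecx_S)+\tau_i(\vecx_{\overline S})$ with $\sigma_i,\tau_i$ nonzero and the $\tau_i$'s supported on pairwise disjoint variable sets; feeding these --- after fixing the remaining $S$‑variables to generic values chosen so that the untouched $\rowsum$/$\colsum$ factors survive as nonzero constants or variable‑disjoint nonzero polynomials --- into \autoref{lem:large-evaldim} gives $\evalDim_{\vecx_S,\vecx_{\overline S}}(P_n)\ge 2^{\Omega(\sqrt n)}$. Making the ``survives as nonzero'' bookkeeping work is exactly what \autoref{lem:evalDim-rowsum-colsum} does. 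With \autoref{thm:eval-dim-roabp} this already shows that a \emph{single} ROABP computing $P_m$ (or any of the restricted $\rowsum\times\colsum$‑type polynomials below) needs width $2^{\Omega(\sqrt m)}$, which is the base case $c=1$.

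\textbf{Ingredient 2: self‑reducibility under an adversarial restriction.} The heart of the matter is that $P_n$ stays hard after a partial evaluation we do not control, \emph{and} after a linear combination whose coefficients we do not control. I would prove: for any subset $S$ of cells of the appropriate size, and any generic tuple of assignments $\veca_0,\dots,\veca_w\in\F^{|S|}$, there is a single projection $\phi$ of the $\overline S$‑variables --- each variable sent to $0$ or to a scalar multiple of a fresh variable --- such that $\phi\bigl(P_n|_{\vecx_S=\veca_j}\bigr)=e_j\cdot P_{n'}$ for nonzero scalars $e_j$ and $n'=n-O(\sqrt n)$. Intuitively, once $\vecx_S$ is fixed, $P_n|_{\vecx_S=\veca}$ is, up to scaling, the ``$\rowsum\times\colsum$'' polynomial of the $\overline S$‑submatrix with affine shifts, and inside it one can always locate a clean structure large enough to host a $P_{n'}$; the drop from $n$ to $n-O(\sqrt n)$ is the origin of the $\sqrt n$ in the theorem. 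Hence for \emph{any} nonzero $\lambda$, $\phi\bigl(\sum_j\lambda_jP_n|_{\vecx_S=\veca_j}\bigr)=\bigl(\sum_j\lambda_je_j\bigr)P_{n'}$, and by throwing in a few extra evaluation points and choosing everything generically one guarantees $\sum_j\lambda_je_j\ne 0$ even though $\lambda$ will be dictated to us below. This is \autoref{lem:evalDim-rowsum-colsum}.

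\textbf{The induction.} Suppose $P_n=h_1+\dots+h_c$ with each $h_i$ a ROABP of width $\le w$; induct on $c$. Let $\pi_1$ be the order of $h_1$, let $S$ be a prefix of $\pi_1$ of the size required by Ingredient~2, and pick $w{+}1$ generic assignments $\veca_0,\dots,\veca_w$ of $\vecx_S$. Since $\evalDim_{\vecx_S,\vecx_{\overline S}}(h_1)\le w$ by \autoref{thm:eval-dim-roabp}, the polynomials $h_1|_{\vecx_S=\veca_j}$ satisfy a nontrivial linear relation $\sum_j\lambda_j\,h_1|_{\vecx_S=\veca_j}=0$, so
\[
g\ :=\ \sum_j\lambda_j\,P_n|_{\vecx_S=\veca_j}\ =\ \sum_{i=2}^{c}\ \sum_j\lambda_j\,h_i|_{\vecx_S=\veca_j}
\]
is a sum of $c-1$ ROABPs each of width $\le w(w+1)$ (a sum of $w{+}1$ width‑$\le w$ ROABPs in a common order). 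By Ingredient~1 we may pick the $\veca_j$ so that the $P_n|_{\vecx_S=\veca_j}$ are linearly independent, which forces $g\neq 0$; by Ingredient~2, $P_{n'}$ with $n'=n-O(\sqrt n)$ is a projection of $g$, hence is computed by a sum of $c-1$ ROABPs of width $\le 2w^2$. Iterating $c-1$ times we land on a single ROABP of width $\le 2^{O(2^c)}w^{2^{c-1}}$ computing a scaled copy of $P_{n''}$ with $n''=n-O(c\sqrt n)=\Omega(n)$ (valid while $c=O(\log n)$); by Ingredient~1 this width is $\ge 2^{\Omega(\sqrt n)}$, giving $w\ge 2^{\Omega(\sqrt n/2^c)}$.

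\textbf{Main obstacle.} The delicate step is Ingredient~2. The prefix $S$ is determined by the adversary's choice of ordering for $h_1$, so it may meet the grid in a completely unstructured way, and one must nevertheless pull a clean copy of a smaller $P_{n'}$ out of the restricted polynomial --- quantifying how large $n'$ can be taken ($n-O(\sqrt n)$) is precisely what pins down the $\sqrt n/2^c$ in the bound. A second subtlety is that the relation $\lambda$ that kills $h_1$ depends on the very evaluation points we use, so keeping $\sum_j\lambda_je_j\neq 0$ requires a genericity argument with a slightly oversized pool of evaluation points (a $\le w$‑dimensional space of ``bad'' coefficient vectors among more than $w{+}1$ points). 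Everything else --- the width blow‑up $w\mapsto w(w+1)$ per round, the passage from $P_{n'}$ back to the inductive hypothesis, and the base case --- is routine given \autoref{lem:large-evaldim}, \autoref{thm:eval-dim-roabp}, and the elimination lemma of \cite{GKST15}.
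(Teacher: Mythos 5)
Your proposal follows essentially the same route as the paper: the Gurjar et al.\ summand-elimination step applied to a prefix $S$ of $h_1$'s order, linear independence of the restrictions $P_n|_{\vecx_S=\veca_j}$ to force the surviving combination to be nonzero, self-reducibility of $P_n$ to a smaller $P_{n'}$ under projection, and an induction on $c$ with the width squaring each round. The one place where you diverge is in how the projection interacts with the unknown coefficients $\lambda$: you fix a single projection $\phi$ in advance, obtain scalars $e_j$ with $\phi(P_n|_{\vecx_S=\veca_j})=e_j P_{n'}$, and then must argue $\sum_j\lambda_je_j\neq 0$ for a $\lambda$ dictated by $h_1$ --- your proposed genericity fix with extra evaluation points is the shakiest part of the writeup, since the admissible $\lambda$'s form a subspace determined by $h_1$ and it is not immediate that the functional $\lambda\mapsto\sum_j\lambda_je_j$ cannot vanish on it. The paper dissolves this by reversing the order of quantifiers: every restriction $P_n|_{\vecx_S=\veca_j}$ is divisible by the fixed polynomial $Q=\prod_{i>a}\rowsum_i\cdot\prod_{j>b}\colsum_j$, so the (nonzero, by linear independence) combination $g$ equals $h\cdot Q$ with $h\not\equiv 0$, and one chooses the projection \emph{after} $g$ is known, via Schwartz--Zippel, to keep $h\cdot Q$ alive before absorbing the affine shifts into the last row and column. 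Since you already have both the linear independence and (implicitly) the divisibility by $Q$, this is a local repair rather than a missing idea. One further parameter point: you take $|S|=O(\sqrt n)$ so that the matrix shrinks by $O(\sqrt n)$ per round, but the linear-independence step only yields $2^{\sqrt{|S|}}$ independent restrictions, so you need $2^{\sqrt{|S|}}\geq w+1$; the paper instead sets $t=\log^2(w+1)$ (the minimal size that works for the given $w$), which makes the shrinkage $t+1$ per round and lets the telescoped inequality $n\leq 4^c\log^2(w+1)+c$ come out cleanly for all $w$, whereas your parameterization taken literally only covers $w\leq \exp(O(n^{1/4}))$.
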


The proof exploits the structure of a sum of few ROABPs that Gurjar, Korwar, Saxena and Thierauf~\cite{GKST15} used for constructing hitting sets.
The following lemma is essentially present implicitly in their result. For completeness, we provide a proof.

\begin{lemma}[\cite{GKST15}]\label{lem:sum-of-ROABP-structure}
Let $f = h_1 + \cdots + h_c$ where each $h_i$ is computed by a width-$w$ ROABP in possibly different orders.
Then, for every $0 < t < n$, there exists a subset $S$ of $t$ variables such that for every set of $w+1$ partial assignments $\veca_1, \ldots, \veca_{w+1} \in \F^{t}$, there is some non-trivial linear combination of $\set{f|_{\vecx_S=\veca_i}}_{i=1}^t$ that is computable by a sum of $c-1$ ROABPs of width $w(w+1)$ in possibly different orders.
That is, there exists $\alpha_1, \ldots, \alpha_{w+1} \in \F$, not all zero, such that
\[
\sum_{i=1}^{w+1} \alpha_i \cdot f|_{\vecx_S = \veca_i} = f_1' + \cdots + f_{c-1}'
\]
where each $f_i'$ is a ROABP of width at most $w(w+1)$. 
\end{lemma}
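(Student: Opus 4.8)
The plan is to use the fact that each ROABP $h_i$ in the sum has small evaluation dimension with respect to prefixes of the variable order in which it is read, and to combine this with a counting/pigeonhole argument to find one set $S$ that is simultaneously a ``nice'' prefix for $h_1$ (the ROABP we intend to eliminate). First I would fix attention on $h_1$ and let $\sigma$ be the variable order in which $h_1$ is read; for each $0 \le t \le n$ the set $S_\sigma = \set{x_{\sigma(1)}, \ldots, x_{\sigma(t)}}$ consisting of the first $t$ variables of $\sigma$ is a prefix, and by \autoref{thm:eval-dim-roabp} (the ``conversely'' direction read in the contrapositive, together with \autoref{lem:obv-ABP}) we have $\evalDim_{S_\sigma, \overline{S_\sigma}}(h_1) \le w$. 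Taking $S := S_\sigma$ for the given $t$ immediately produces a set of $t$ variables for which $\eval_{\vecx_S}(h_1)$ has dimension at most $w$.

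Next I would exploit this dimension bound to eliminate $h_1$. Given any $w+1$ partial assignments $\veca_1, \ldots, \veca_{w+1} \in \F^t$, the $w+1$ polynomials $h_1|_{\vecx_S = \veca_i}$ all lie in the space $\eval_{\vecx_S}(h_1)$, which has dimension at most $w$; hence they are linearly dependent, so there exist scalars $\alpha_1, \ldots, \alpha_{w+1} \in \F$, not all zero, with $\sum_{i=1}^{w+1} \alpha_i \cdot h_1|_{\vecx_S = \veca_i} = 0$. With this choice of $\alpha_i$,
\[
\sum_{i=1}^{w+1} \alpha_i \cdot f|_{\vecx_S = \veca_i} \;=\; \sum_{j=2}^{c} \left( \sum_{i=1}^{w+1} \alpha_i \cdot h_j|_{\vecx_S = \veca_i} \right),
\]
since the $h_1$ contribution cancels. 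It remains to argue that each inner sum $f_j' := \sum_{i=1}^{w+1} \alpha_i \cdot h_j|_{\vecx_S = \veca_i}$ is computable by a ROABP of width $w(w+1)$. For this I would argue that a partial evaluation $h_j|_{\vecx_S = \veca_i}$ is still computed by a ROABP of width $w$ in the restriction of $h_j$'s order to $\overline{S}$ (substituting field constants into the matrix product of \autoref{lem:obv-ABP} does not increase width), and then that a sum of $w+1$ width-$w$ ROABPs all in the \emph{same} variable order is computable by a single ROABP of width $w(w+1)$ — this is the standard ``block-diagonal'' construction, concatenating the $(1,1)$-entry matrix products as parallel blocks, with source and sink identifications contributing the arithmetic. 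Since the orders $h_j|_{\vecx_S=\veca_i}$ for $i = 1, \ldots, w+1$ are all the restriction of the single order of $h_j$, they do coincide, so the block-diagonal trick applies and yields width $w(w+1)$.

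The one subtlety — and the step I expect to require the most care — is the claim that $\evalDim_{S,\overline{S}}(h_1) \le w$ when $S$ is a prefix of $h_1$'s order: \autoref{thm:eval-dim-roabp} as stated bounds the width given the evaluation dimension bound and conversely lower-bounds the width, so I need the clean converse statement ``width-$w$ ROABP in order $\sigma$ $\Rightarrow$ every prefix has evaluation dimension $\le w$'', which follows directly from \autoref{lem:obv-ABP} by writing $h_1$ as an iterated product of univariate $w\times w$ matrices and observing that fixing a prefix collapses the left part to a single $w$-dimensional row vector, so that all partial evaluations lie in the span of $w$ fixed polynomials. Everything else is routine linear algebra plus the block-diagonal sum construction for same-order ROABPs; the real content is just the pigeonhole step $(w+1) > w = \dim$, which is where the width of the eliminated program enters the width blow-up of the survivors.
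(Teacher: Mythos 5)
Your proposal is correct and follows essentially the same route as the paper's proof: choose $S$ to be the first $t$ variables in $h_1$'s order, use $\evalDim_{S,\overline{S}}(h_1)\le w$ to find a vanishing linear combination, and recombine the surviving summands via the parallel (block-diagonal) construction for same-order width-$w$ ROABPs to get width $w(w+1)$. The extra care you take to justify the prefix evaluation-dimension bound via the matrix-product form is exactly the content of \autoref{thm:eval-dim-roabp}, which the paper invokes directly.
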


\begin{proof}
Let $S$ be the first $t$ variables that are read in the ROABP that computes $h_1$. Since $h_1$ is computed by a width-$w$ ROABP, $\evalDim_{S, \overline{S}} (h_1) \le w$. Hence, every $w+1$ partial evaluations $\set{Q_1|_{\vecx_S=\veca_i}}_{i=1}^{w+1}$ are linearly dependent, that is, there exist  $\alpha_1, \ldots, \alpha_{w+1} \in \F$, not all zero, such that
\begin{equation}
\label{eq:linear-dependency}
\sum_{i=1}^{w+1} \alpha_i h_1|_{\vecx_S = \veca_i} = 0.
\end{equation}
Consider the polynomial $\sum_{i=1}^{w+1} \alpha_i \cdot f|_{\vecx_S = \veca_i}$. By the assumption on $f$,
\begin{align}
\label{eq:sum-of-c-1}
\sum_{i=1}^{w+1} \alpha_i \cdot f|_{\vecx_S = \veca_i} & = \sum_{i=1}^{w+1} \alpha_i \sum_{j=1}^c h_j |_{\vecx_S = \veca_i} \nonumber \\
&= \sum_{i=1}^{w+1} \alpha_i \cdot  h_1|_{\vecx_S = \veca_i} \;+\; \sum_{j=2}^c  \sum_{i=1}^{w+1} \alpha_i h_j |_{\vecx_S = \veca_i} \nonumber \\
&=  \sum_{j=2}^c  \sum_{i=1}^{w+1} \alpha_i h_j |_{\vecx_S = \veca_i},
\end{align}
where the last equality follows from \eqref{eq:linear-dependency}.

Hence, to prove the statement of the lemma it remains to be shown that for every $2 \le j \le c$, $\sum_{i=1}^{w+1} \alpha_i h_j |_{\vecx_S = \veca_i}$ is computed by a ROABP of width $w(w+1)$. Fix such $j$. Observe that since $h_j$ is computed by a ROABP of width $w$, for every $i \in [w+1]$ we have that $h_j |_{\vecx_S = \veca_i}$ is computed by a ROABP of width $w$ (by replacing the variables with the appropriate constants in the ABP that computes $h_j$), and furthermore all the ROABPs of the form $h_j |_{\vecx_S = \veca_i}$ for $i \in [w+1]$ are in the same order (inherited from the order of the ROABP computing $h_j$).

Therefore, we can connect those $(w+1)$ ROABPs in parallel to obtain a single ROABP, of width $w(w+1)$, computing  $\sum_{i=1}^{w+1} \alpha_i h_j |_{\vecx_S = \veca_i}$.
\end{proof}

The following lemma shows that the polynomial $P_n$ defined in \eqref{eq:rowsum-colsum} has many linearly independent partial evaluations. 
\begin{lemma}\label{lem:evalDim-rowsum-colsum}
  Let $S$ be any subset of $\vecx=\set{x_{1,1}, \ldots, x_{n,n}}$ of size $t < n$.
Then there exists $r \geq 2^{\sqrt{t}}$ partial evaluations $\veca_1, \cdots, \veca_r \in \set{0,1}^t \subseteq \F^t$ such that the polynomials $\set{P_n|_{\vecx_S = \veca_1}, \ldots, P_n|_{\vecx_S = \veca_r}}$ are linearly independent.

Furthermore, for any $g \in \Span\{P_n|_{\vecx_S = \veca_i} \mid i \in [r]\}$, there is a set $\vecy \subseteq \vecx \setminus S$ of $(n-t-1)^2$ variables, such that $P_{n-t-1}(\vecy)$ can be obtained as a projection of $g$: namely, for $\vecz = \vecx \setminus (\vecy \cup S)$ we can find $\veca \in \F^{n-|S|-|\vecy|}$ such that $ g|_{\vecz=\veca}=P_{n-t-1}(\vecy)$.
\end{lemma}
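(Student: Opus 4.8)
The plan is to prove both parts with one explicit construction. Regard $\vecx$ as an $n\times n$ matrix $X$; since $P_n(X)=P_n(X^{\mathrm T})$ (transposing swaps $\prod_i\rowsum_i$ with $\prod_j\colsum_j$) we may swap the roles of rows and columns at will. Let $A$ (resp. $B$) be the set of rows (resp. columns) meeting $S$. As $S\subseteq A\times B$ we have $|A|\cdot|B|\ge t$, hence $\max(|A|,|B|)\ge\lceil\sqrt t\,\rceil=:s$, and by transpose symmetry assume $|B|\ge s$. Fix distinct columns $j_1,\dots,j_s\in B$ and, for each $k$, a cell $(\sigma_k,j_k)\in S$ (the $\sigma_k$ need not be distinct). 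Also fix an $S$-free set $I_0$ of $n-t$ rows and an $S$-free set $J_0$ of $n-t$ columns (these exist since $|A|,|B|\le t<n$); let $I_1\subseteq I_0$ and $J_1\subseteq J_0$ have size $n-t-1$, put $\vecy:=I_1\times J_1$ (so $|\vecy|=(n-t-1)^2$ and $\vecy\cap S=\emptyset$), and let $r^\ast$ be the one leftover row of $I_0$. For $\vecb\in\{0,1\}^s$ let $\veca(\vecb)\in\{0,1\}^t$ be the assignment to $\vecx_S$ sending $(\sigma_k,j_k)\mapsto b_k$ and every other $S$-cell to $0$, and set $P^{(\vecb)}:=P_n|_{\vecx_S=\veca(\vecb)}$; these $2^s\ge 2^{\sqrt t}$ assignments will be our $\veca_1,\dots,\veca_r$.

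The crux is the following claim: for every target $\vecb_0\in\{0,1\}^s$ and every $c\in\F\setminus\{0\}$ there is an assignment $\veca$ to $\vecz:=\vecx\setminus(\vecx_S\cup\vecy)$ such that $P^{(\vecb)}|_{\vecz=\veca}=0$ for all $\vecb\ne\vecb_0$ and $P^{(\vecb_0)}|_{\vecz=\veca}=c\cdot P_{n-t-1}(\vecy)$. To build $\veca$ I would first set to $0$ every cell that lies in a row of $I_1$ or a column of $J_1$ but outside $\vecy$; this forces the $I_1$ row-sums and the $J_1$ column-sums of $P_n$ to restrict exactly to the $\vecy$ row- and column-sums, so that their product is $P_{n-t-1}(\vecy)$, while the remaining $t+1$ row-sums and $t+1$ column-sums become scalars. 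For the column $j_k$ — whose only surviving $S$-cell is $(\sigma_k,j_k)\mapsto b_k$ — I would choose the remaining cells of that column to sum to $0$ if $(\vecb_0)_k=1$ and to $-1$ if $(\vecb_0)_k=0$; then $\colsum_{j_k}$ restricts to $b_k$ or $b_k-1$, which vanishes exactly when $b_k\ne(\vecb_0)_k$. The rest of the free cells (they occupy a $(t+1)\times(t+1)$ window that contains $S$ and so has $\approx t^2$ free entries against only $O(t)$ constraints) are set so that every remaining row-sum and column-sum becomes a nonzero scalar — for a row through some $\sigma_k$ one just avoids the finitely many values $-\sum_{k':\sigma_{k'}=i}b_{k'}$, using that $\F$ is large — and the free cells of row $r^\ast$ are finally tuned so that the product of all these scalars equals $c$ when $\vecb=\vecb_0$. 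Then, reading off $P^{(\vecb)}|_{\vecz=\veca}$: the $\vecy$-sums reassemble into $P_{n-t-1}(\vecy)$, while the scalar factors multiply to $c$ if $\vecb=\vecb_0$ and contain a vanishing $\colsum_{j_k}$ if $\vecb\ne\vecb_0$. This proves the claim.

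Both conclusions follow. For linear independence, apply the claim with $c=1$ after also evaluating the $\vecy$-cells so that $P_{n-t-1}(\vecy)\mapsto 1$: for each $\vecb_0$ this gives a point $\vecp_{\vecb_0}$ with $P^{(\vecb)}(\vecp_{\vecb_0})=1$ if $\vecb=\vecb_0$ and $0$ otherwise, so the evaluation matrix of $\{P^{(\vecb)}\}_{\vecb\in\{0,1\}^s}$ at the points $\{\vecp_{\vecb_0}\}$ is the identity and the $P^{(\vecb)}$ are linearly independent. For the projection, given $g=\sum_\vecb\beta_\vecb P^{(\vecb)}\in\Span\{P^{(\vecb)}\}$ with not all $\beta_\vecb=0$, pick $\vecb_0$ with $\beta_{\vecb_0}\ne0$ and apply the claim with $c=\beta_{\vecb_0}^{-1}$; on the resulting $\veca$ we get $g|_{\vecz=\veca}=\sum_\vecb\beta_\vecb\,P^{(\vecb)}|_{\vecz=\veca}=\beta_{\vecb_0}\cdot\beta_{\vecb_0}^{-1}P_{n-t-1}(\vecy)=P_{n-t-1}(\vecy)$, as required.

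The place where care is needed — and the main obstacle — is the middle step of the claim: verifying that a single honest assignment of the free cells meets all requirements at once (prescribed sums for the columns $j_1,\dots,j_s$; nonvanishing of every remaining row- and column-sum for all $\vecb$; and the one product-equals-$c$ constraint). This amounts to finding a matrix with prescribed (consistent) line sums supported off the $S$-cells, which is routine given that the $(t+1)\times(t+1)$ window has far more free entries than constraints and every line in it has several free cells (a system-of-distinct-representatives argument), but it needs extra attention in the degenerate regime $t\approx n$, where $\vecy$ and even $P_{n-t-1}=P_0=1$ collapse, and when the rows $\sigma_k$ coincide. Throughout, one uses that $\F$ is large enough to dodge the finitely many forbidden values; if $\F$ is small one proves linear independence after extending the field, which does not weaken the statement.
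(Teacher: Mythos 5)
Your proposal is correct in outline but takes a genuinely different route from the paper's. The paper first moves $S$ into an $[a]\times[b]$ corner and observes that every partial evaluation of $P_n$ on $\vecx_S$ is divisible by $Q=\prod_{i>a}\rowsum_i\cdot\prod_{j>b}\colsum_j$; it then zeroes the non-$S$ cells of the corner to put the remaining factor into the normal form $\prod_{i=1}^{b}(\ell_i(\vecv)+\ell'_i(\vecu))\cdot g(\vecu,\vecw)$ and invokes \autoref{lem:large-evaldim} to get $\evalDim_S \ge 2^{\sqrt t}$ abstractly, while the projection onto $P_{n-t-1}$ comes from the divisibility by $Q$: any nontrivial combination is $h\cdot Q$ with $h\ne 0$ supported on the corner, so one kills $h$ by Schwartz--Zippel and absorbs the resulting affine shifts $\alpha_i,\beta_j$ into the last row and column. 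You instead build an explicit dual family of evaluation points from a transversal $(\sigma_k,j_k)$ of $s\ge\sqrt t$ distinct columns, and prove one claim (for each target pattern there is a point at which exactly that restriction survives and equals $c\cdot P_{n-t-1}(\vecy)$) that yields both conclusions at once; this is a nice unification and makes the $\set{0,1}^t$ form of the $\veca_i$ fully explicit. What it costs is precisely the step you flag: simultaneously realizing all prescribed line sums of the $(t+1)\times(t+1)$ window (the $j_k$ columns pinned to $0$ or $-1$, every other line nonzero, the global product equal to $c$, and the consistency condition that total row sums equal total column sums). This is a transportation-type argument on the bipartite graph of non-$S$ cells in the window, which is connected because only $t$ of its $(t+1)^2$ cells are removed, so it does go through — but it is the load-bearing part of your argument and is currently only sketched, whereas the paper's route via $Q$-divisibility avoids this bookkeeping entirely. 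Two smaller points: insisting on the exact constant $c$ couples the product constraint with the linear consistency constraint and can reduce to a quadratic with no root over a non-closed field; note that the paper itself only obtains $P_{n-t-1}$ up to a nonzero constant factor (which suffices for \autoref{thm:sep-2-pass-from-sum-ROABPs}), so you should relax to $c'\cdot P_{n-t-1}(\vecy)$ with $c'\ne 0$ and correspondingly accept a diagonal rather than identity evaluation matrix in the independence step. Also, ``evaluate the $\vecy$-cells so that $P_{n-t-1}(\vecy)\mapsto 1$'' should read ``to a nonzero value,'' since the all-ones point can vanish in small characteristic.
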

\begin{proof}
Recall that we think of the $n^2$ variables as an $n\times n$ matrix $X$.
By rearranging the rows and columns, assume that all variables in $S$ are present in the first $a$ rows and first $b$ columns.
Observe that $a,b \leq t$ and say $a \leq b$ so that we also have $b \geq \sqrt{t}$.
Also, any linear combination of evaluations of $S$ variables would always be divisible by $Q = \prod_{i>a} \rowsum_i \cdot \prod_{j > b} \colsum_j$ and hence we shall just work with 
\[
P' = \prod_{i=1}^a \rowsum_i \;\cdot\; \prod_{j=1}^b \colsum_j.
\]
In the $[a]\times[b]$ sub-matrix, set all variables not in $S$ to zero, and let $P''$ be the resulting polynomial.
Clearly, it suffices to establish linear independence of partial evaluations of $P''$. We label the remaining variables in the first $b$ columns by $\vecu$ if they belong to $S$ and by $\vecv$ otherwise. The variables which are not in the first $b$ columns are labeled by $\vecw$ (see \autoref{fig:matrix}).

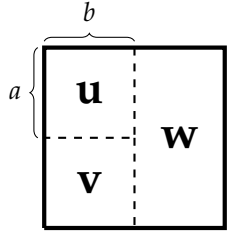
\begin{figure}[h]
\begin{center}
\begin{tikzpicture}[scale=0.8]

\draw[ultra thick] (0,0) -- (3,0) -- (3,3) -- (0,3) -- (0,0); 

\draw [decorate,decoration={brace,amplitude=3pt, mirror},xshift=-3pt]
(0,3) -- (0,1.5) node [black,midway,xshift=-0.3cm]
{\small $a$};

\draw [decorate,decoration={brace,amplitude=3pt},yshift=3pt]
(0,3) -- (1.5,3) node [black,midway,yshift=0.4cm]
{\small $b$};

\node at (0.75,2.25) {\LARGE $\vecu$};
\node at (0.75,0.75) {\LARGE $\vecv$};

\node at (2.25,1.5) {\LARGE $\vecw$};

\draw[thick, dashed] (1.5,3) -- (1.5,1.5) -- (0,1.5);
\draw[thick, dashed] (1.5, 1.5) -- (1.5, 0);
\end{tikzpicture}
\caption{Labeling of variables in the matrix. Variables in the top $[a] \times [b]$ submatrix which are not in $S$ are set to $0$.}
\label{fig:matrix}
\end{center}
\end{figure}

Then, we can write
\[
P'' = \prod_{i=1}^b (\ell_i(\vecv) + \ell'_i(\vecu)) \cdot g(\vecu, \vecw),
\]
so that we have the properties:
\begin{enumerate}
\item For $i \neq j$, $\ell_i$ and $\ell_j$ are supported on disjoint sets of variables, because they correspond to different column sums, and similarly for $\ell'_i$ and $\ell'_j$. In particular, each of the sets $\set{\ell_i}_{i=1}^t$ and $\set{\ell'_i}_{i=1}^t$ is linearly independent.
\item $g$ is the product the first $a$ row sums, so it is a product of variable-disjoint linear functions in $\vecu$ and $\vecw$, and in particular for any $\veca \in \F^{|\vecu|}$, $g(\veca,\vecw) \not\equiv0$. 
\end{enumerate}
By \autoref{lem:large-evaldim}, $\evalDim_{\vecu}(P'') \ge 2^b \ge 2^{\sqrt{t}}$. This implies that
\[
\evalDim_{S}(P') \ge \evalDim_{S}(P'') \ge \evalDim_{\vecu}(P'') \ge 2^{\sqrt{t}}.
\]
Hence, there exists $r \geq 2^{\sqrt{t}}$ evaluations $\veca_1, \ldots, \veca_r$ for the variables in $S$ for which the set of polynomials  $\inbrace{P'_n|_{\vecx_S = \veca_1}, \ldots, P'_n|_{\vecx_S = \veca_r}}$ are linearly independent. Hence, it also follows that the set of polynomials $\inbrace{P_n|_{\vecx_S = \veca_1}, \ldots, P_n|_{\vecx_S = \veca_r}}$ are linearly independent as well. This completes the first claim of the lemma. 

Now also observe that since $Q$ divides each $P_n|_{\vecx_S = \veca_i}$ it follows that any non-trivial linear combination of $\inbrace{P_n|_{\vecx_S = \veca_1}, \ldots, P_n|_{\vecx_S = \veca_r}}$ is a non-zero multiple of $Q$. Let us fix one such linear combination $g= h \cdot Q$ for a non-zero polynomial $h$. Note that $h$ depends on just the variables in the first $a$ rows and first $b$ columns. By the Schwartz-Zippel-DeMillo-Lipton lemma \cite{S80, Z79, DL78} there exists an assignment to the variables in the first $t$ rows and the first $t$ columns that keeps $h \cdot Q$ non-zero. If $\vecy' = \setdef{y_{ij}}{i\in [n-t], j\in [n-t]}$ is a relabeling of the variables in the last $(n-t)$ rows and columns, such an evaluation to the first $t$ rows and columns would result in a polynomial of the form
\[
Q' = \prod_{i=1}^{n-t} (\mathrm{rowSum}_i(\vecy')  + \alpha_i) \cdot \prod_{j=1}^{n-t} (\mathrm{colSum}_j(\vecy')  + \beta_j)
\]
for some field elements $\setdef{\alpha_i,\beta_j}{i,j\in [n-t]}$. By further setting $y_{i,n} = (-\alpha_i)$  and $y_{n,j} = (-\beta_j)$ for $i,j \in [n-t-1]$, and fixing $y_{n-t,n-t}$ to a value that preserves non-zeroness, we obtain the projection (up to a constant factor)
\[
Q'' = \prod_{i=1}^{n-t-1} \mathrm{rowSum}_i(\vecy) \cdot \prod_{j=1}^{n-t-1} \mathrm{colSum}_j(\vecy)
\]
where $\vecy = \vecy' \setminus \setdef{y_{ij}}{i=n \text{ or } j=n}$, which equals $P_{n-t-1}(\vecy)$. 
\end{proof}

With the above two lemmas, \autoref{thm:sep-2-pass-from-sum-ROABPs} is straightforward. 

\begin{proof}[Proof of \autoref{thm:sep-2-pass-from-sum-ROABPs}]
  The proof is a simple induction on $c$. We shall show that if $P_n$ is computable by a sum of $c$ ROABPs of width at most $w$, then
  \[
  n \spaced{\leq} \log^2(w+1) + \log^2\inparen{(w+1)^2} + \cdots \log^2 \inparen{(w+1)^{2^{c-1}}} + c. 
  \]
  Let us assume the hypothesis is true for $c-1$ and we now prove it for $c$. Suppose $P_n$ is computable by a sum of $c$ ROABPs of width $w$. Assume that $t = \log^2(w+1) < n$, for otherwise the lower bound follows immediately. By \autoref{lem:sum-of-ROABP-structure}, there is some set $S$ of $t$ variables such that for any $r = (w+1)$ partial evaluations $\veca_1,\ldots, \veca_r$ on $S$, some linear combination is computable by a sum of $c-1$ ROABPs of width $w(w+1)$.

On the other hand, \autoref{lem:evalDim-rowsum-colsum} states that we can find $r\geq 2^{\sqrt{t}} =w+1$ partial evaluations on $S$ that are linearly independent and any linear combination of them can be written has $P_{n-t-1}$ as a projection.

Therefore, if $P_n$ is computable by a sum of $c$ ROABPs of width $w$, then $P_{n-t-1}$ is computable by a sum of $(c-1)$ ROABPs of width at most $w(w+1) < (w+1)^2$, by taking the sum of $(c-1)$ ROABPs computing $\sum_{i=1}^{w+1} \alpha_i \cdot P_n |_{\vecx_S = \veca_i}$ and projecting further to obtain $P_{n-t-1}$.  But the inductive hypothesis then forces
\begin{align*}
n - \log^2(w+1) - 1 & \leq \inparen{\log^2(w+1)^2} + \log^2\inparen{(w+1)^{2^2}} + \cdots \log^2 \inparen{{\inparen{(w+1)^{2^{c-2}}}}^2} + (c-1)\\
\implies n  & \leq  \log^2(w+1) + \log^2(w+1)^2 + \log^2(w+1)^{2^2} + \cdots \log^2 (w+1)^{2^{c-1}} + c\\
 & \leq  4^c \cdot \log^2 (w+1) + c.
\end{align*}
Thus, $w \geq \exp(\Omega(\sqrt{n}/2^c))$ as claimed. 
\end{proof}

\section{Lower bounds for read-$k$ oblivious ABPs}\label{sec:read-k ABP LBs}

In this section we show an explicit polynomial that has a polynomial-size depth-$3$ multilinear circuit and yet cannot be computed efficiently by a read-$k$ oblivious ABP.

\subsection{An explicit polynomial with large evaluation dimension}

Raz and Yehudayoff~\cite{raz-yehudayoff} constructed an explicit multilinear polynomial $f(\vecx)$ with evaluation dimension as high as possible with respect to any partition $S, \overline{S}$.
Our requirements are slightly different, as we would need some ``robustness'' property, namely, we would want to argue that the evaluation dimension of the polynomial remains high even when we fix a small constant fraction (say, $n/10$) of the variables. Later, in \autoref{thm:poly-hard-for-read-k}, we show why this property implies hardness for read-$k$ oblivious ABPs.

Our construction is inspired by a recent similar construction of Kayal, Nair and Saha \cite{KNS15}.

Consider the complete bipartite graph $K_{n,n}$ with $n$ vertices on each side.
We shall label the left vertices as $x_1,\ldots, x_n$ and the right vertices as $y_1,\cdots, y_n$.
We can write $K_{n,n}$ as a union of $n$ edge-disjoint perfect matchings $M_1 \union \cdots \union M_n$, where for every $i \in [n]$, $M_j$ contains all edges of the form $(x_j, y_{j+i \bmod n})$ for $j \in [n]$.
Define the polynomial $Q_{n}$ as
\begin{equation}\label{eqn:Qn-defn}
Q_{n}(x_1,\ldots, x_n, y_1,\ldots, y_n,z_1,\ldots, z_n) = \sum_{i=1}^n z_i \prod_{(j,k) \in M_i} (x_j + y_k)
= \sum_{i=1}^n z_i \prod_{j=1}^n (x_j + y_{j + i \bmod n}).
\end{equation}

By its definition, it is clear that $Q_n$ is computed by a depth-$3$ polynomial-size circuit. We now show that even if we fix a small fraction of the variables in $\vecx \cup \vecy$, $Q_n$ retains a large evaluation dimension with respect to any partition of the variables we have not fixed.

\begin{lemma} \label{lem:high-evalDim-poly}
Let $S$, $T$ be two disjoint subsets of $\vecx \union \vecy$ such that $|S\sqcup T| \geq 0.9 \cdot 2n$.
Then, 
\[
\evalDim_{S,T}(Q_n) \spaced{\geq} \exp(\Omega(\min(|S|,|T|))).
\]
\end{lemma}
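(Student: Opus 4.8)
The plan is to reduce the claim to \autoref{lem:large-evaldim} applied to a single, carefully chosen perfect matching $M_i$. Write $p=|S\cap\vecx|$, $q=|S\cap\vecy|$, $a=|T\cap\vecx|$, $b=|T\cap\vecy|$; since $S,T$ are disjoint subsets of $\vecx\sqcup\vecy$ we have $(p+q)+(a+b)=|S|+|T|\le 2n$, so $\min(|S|,|T|)\le n$. Because $|S\sqcup T|\ge 0.9\cdot 2n$, at most $0.2n$ variables of $\vecx\sqcup\vecy$ are outside $S\sqcup T$; in particular at most $0.2n$ of the $x$-variables and at most $0.2n$ of the $y$-variables are outside $S\sqcup T$, whence $p+a\ge 0.8n$ and $q+b\ge 0.8n$. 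For $i\in[n]$ put $P_i=\prod_{(x_j,y_k)\in M_i}(x_j+y_k)=\prod_{j=1}^n(x_j+y_{j+i\bmod n})$, so that $Q_n=\sum_i z_i P_i$. Since the variables $z_1,\dots,z_n$ lie outside $S$, specializing $z_i\mapsto 1$ and $z_{i'}\mapsto 0$ for $i'\ne i$ is a restriction of $\overline S$-variables; it acts on $\eval_S(Q_n)$ by a linear map whose image is exactly $\eval_S(P_i)$, so $\dim\eval_S(P_i)\le\dim\eval_S(Q_n)=\evalDim_{S,T}(Q_n)$. Hence it suffices to find one index $i$ for which $\dim\eval_S(P_i)\ge\exp(\Omega(\min(|S|,|T|)))$.

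Fix $i$ and let $E_i\subseteq M_i$ be the set of edges with one endpoint in $S$ and the other in $T$; set $\beta_i=|E_i|$. Let $\vecu$ be the set of $S$-endpoints of $E_i$-edges and $\vecv$ the set of their $T$-endpoints, so $|\vecu|=|\vecv|=\beta_i$ and each $E_i$-edge contributes a factor $u+v$ to $P_i$ with $u\in\vecu$, $v\in\vecv$ distinct. A vertex of $S\setminus\vecu$ lies on an $M_i$-edge whose other endpoint cannot be in $T$ (otherwise the edge would lie in $E_i$), hence is in $S$ or outside $S\sqcup T$; as $M_i$ is a perfect matching these edges are pairwise vertex-disjoint, so we may choose $\veca^{\ast}\in\F^{S\setminus\vecu}$ such that in $Q:=P_i|_{\vecx_{S\setminus\vecu}=\veca^{\ast}}$ every factor that involved a vertex of $S\setminus\vecu$ has become a nonzero constant or a nonzero univariate linear form in a variable outside $S\sqcup T$ (for an edge with both endpoints in $S\setminus\vecu$ send one endpoint to $1$ and the other to $0$; for an edge with exactly one such endpoint send it to $0$). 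Then $Q=\inparen{\prod_{e\in E_i}(v_e+u_e)}\cdot R$, where $R$ is a product of nonzero linear forms, hence $R\not\equiv 0$, and $R$ involves no variable of $\vecu\cup\vecv$ (those variables occur only in $E_i$-factors). Now \autoref{lem:large-evaldim} applies to $Q$ with variable groups $\vecu$, $\vecv$ and the remaining variables: $\set{v_e}$ and $\set{u_e}$ are each linearly independent, and $R$ does not depend on $\vecu$, so $R|_{\vecu=\veca}=R\not\equiv 0$ for every $\veca$; therefore $\dim\eval_\vecu(Q)\ge 2^{\beta_i}$. Finally $Q|_{\vecx_\vecu=\veca}=P_i|_{\vecx_S=(\veca^{\ast},\veca)}\in\eval_S(P_i)$ for all $\veca$, so $\eval_\vecu(Q)\subseteq\eval_S(P_i)$, giving $\dim\eval_S(P_i)\ge 2^{\beta_i}$.

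It remains to exhibit $i$ with $\beta_i$ large. The matchings $M_1,\dots,M_n$ partition the edges of $K_{n,n}$, so $\sum_{i=1}^n\beta_i$ equals the total number of $K_{n,n}$-edges between $S$ and $T$, namely $|S\cap\vecx|\cdot|T\cap\vecy|+|T\cap\vecx|\cdot|S\cap\vecy|=pb+qa$. Using $b\ge 0.8n-q$ and $a\ge 0.8n-p$,
\[
pb+qa\ \ge\ p(0.8n-q)+q(0.8n-p)\ =\ 0.8\,n(p+q)-2pq\ \ge\ 0.8\,n(p+q)-\frac{(p+q)^2}{2},
\]
and symmetrically $pb+qa\ge 0.8\,n(a+b)-\tfrac12(a+b)^2$; applying whichever of these corresponds to the smaller of $p+q$ and $a+b$ (which is at most $n$) yields $pb+qa\ge 0.3\,n\min(|S|,|T|)$. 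Hence some $i$ satisfies $\beta_i\ge\frac1n\sum_i\beta_i\ge 0.3\min(|S|,|T|)$, and combining with the previous paragraph, $\evalDim_{S,T}(Q_n)\ge\dim\eval_S(P_i)\ge 2^{0.3\min(|S|,|T|)}=\exp(\Omega(\min(|S|,|T|)))$.

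The one step that needs care is the construction of $Q$: we need simultaneously that the cross-factors $u_e+v_e$ remain untouched, that the variables $\vecu,\vecv$ do not reappear elsewhere, and that the leftover product $R$ is a nonzero polynomial after specializing $S\setminus\vecu$ — all three are guaranteed by the vertex-disjointness of the edges of the perfect matching $M_i$. Everything else is bookkeeping together with the elementary bilinear inequality $pb+qa\ge 0.3\,n\min(|S|,|T|)$, which is exactly where the hypothesis $|S\sqcup T|\ge 0.9\cdot 2n$ is used.
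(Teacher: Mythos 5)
Your proof is correct and follows essentially the same route as the paper's: project onto a single matching $M_i$ via the $z$-variables, count the $S$--$T$ edges of $K_{n,n}$ and average over the $n$ matchings to find one with $\Omega(\min(|S|,|T|))$ cross edges, then invoke \autoref{lem:large-evaldim}. Your treatment is if anything slightly more careful than the paper's (explicitly specializing $S\setminus\vecu$ so the residual factor is nonzero, and justifying $\eval_{\vecu}(Q)\subseteq\eval_S(P_i)$), but the argument is the same.
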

\begin{proof}
  Assume without loss of generality that $|S| \leq |T|$, and that $S_L := S \intersection \vecx$ satisfies $|S_L| \geq |S|/2$.
Since $(S \cup T) \cap \vecy \ge 0.8n$, $\abs{T \intersection \vecy} \geq (0.8n - |S|/2) \geq 0.3 n$.
Thus, there are $\Omega(n\cdot  |S|)$ edges between $S$ and $T$ in $K_{n,n}$. 
By averaging, some matching $M_i$ must include at least $\Omega(|S|)$ of these edges. Consider the polynomial $f_i = \prod_{(j,k)\in M_i} (x_j + y_k)$. As $\Omega(|S|)$ of  the edges in $M_i$ go between $S$ and $T$, we can write
\[
f_i = \prod_{m=1}^t (u_m + v_m) \cdot g(\vecw),
\]
where for every $m \in [t]$ we have that $u_m \in S$, $v_m \in T$, and $t=\Omega(|S|)$ (we have ``pushed'' to $g$ all the factors that correspond to edges in the matching which do not go between $S$ and $T$).

By \autoref{lem:large-evaldim}, $\evalDim_{S,T}(f_i) \geq 2^{\Omega(|S|)}$. Since $f_i$ is a projection of $Q_n$ (under the setting $z_i=1$ and $z_j=0$ for all $j \neq i$) it follows that $\evalDim_{S,T}(Q_n) \geq \evalDim_{S,T}(f_i) \geq \exp{(\Omega(|S|))}$.
\end{proof}

As an aside, we note that the difference between the above polynomial and  the one constructed by Kayal, Nair and Saha (\cite{KNS15}) is that they use a 3-regular bipartite expander instead of $K_{n,n}$ (which is important for their application). The degree of the graph corresponds to the number of matchings, and hence to the top fan-in of the depth-3 circuit computing the polynomial. In fact, in order to show hardness for read-$k$ oblivious ABPs it is also possible to use a good enough bipartite expander (with a constant degree $d$ that depends only on $k$) whose expansion property guarantees that a proof strategy along the lines of \autoref{lem:high-evalDim-poly} and \autoref{thm:poly-hard-for-read-k} would work. This would have allowed us to present a hard polynomial which is computed by a depth-$3$ circuit with bounded top fan-in, however, this very small gain would have come at the cost of increasing the complexity of the construction and making it depend on $k$. 

\subsection{Upper bound on evaluation dimension for read-$k$ oblivious ABPs}

In this section we show that if $f$ is computed by a read-$k$ oblivious ABP of width $w$, then we can fix a ``small'' subset of variables such that the remaining variables can be partitioned into
two carefully chosen ``large'' subsets, under which the evaluation dimension is at most $w^{2k}$. We then apply this result to the polynomial $Q_n$ (from \eqref{eqn:Qn-defn}) to show that if $Q_n$ is computed by a width-$w$ read-$k$ oblivious ABP, then $w \ge \exp(n/k^{O(k)})$. 

\begin{theorem}
\label{thm:read-k-eval-dim}
Let $f \in \F[x_1, \ldots, x_n]$ be a polynomial computed by a width-$w$ read-$k$ oblivious ABP. Then, there exist three disjoint subsets $U \sqcup V \sqcup W = [n]$, such that 
\begin{enumerate}
\item $|U|, |V| \ge n/k^{O(k)}$,
\item $|W| \le n/10$, and
\item $\evalDim_{U,V;W}(f) \le w^{2k}$.
\end{enumerate}
\end{theorem}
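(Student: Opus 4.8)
The plan is to exploit the read-$k$ structure by viewing the ABP as an iterated product of univariate matrices (\autoref{lem:obv-ABP}) and then finding a ``cut point'' in the layer sequence that induces the desired partition $U \sqcup V \sqcup W$ together with the $k$-gap property of \autoref{def:k-gap}, so that \autoref{lem:k-gap-to-roabp} (equivalently, the bound $\evalDim \le w^{2k}$ extracted inside its proof) applies after moving the variables $\vecx_W$ into the field. Concretely, enumerate the $kn$ layers (using \autoref{rem:exactly-k}, every variable occurs in exactly $k$ layers) as a word of length $kn$ over the alphabet $[n]$. The idea is to look at an arbitrary layer boundary, splitting the word into a prefix part and a suffix part: a variable is ``early'' if all $k$ of its occurrences are in the prefix, ``late'' if all $k$ occurrences are in the suffix, and ``split'' otherwise. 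The early variables will be the candidate for $U$, the late variables the candidate for $V$, and the split variables (together with whatever we must discard) will form $W$.

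The key difficulty is that a naive choice of boundary need not give the $k$-gap property: we need the suffix, restricted to the late variables, to consist of at most $k$ contiguous blocks, each block reading the late variables in one ``pass''. This is where the real work lies, and I would handle it by a more careful counting argument rather than a single cut. First I would observe that the number of split variables at any boundary is at most (roughly) $2w$-independent — actually, I would argue that as the boundary sweeps from left to right, a variable can enter and leave the ``split'' category, but the total ``charge'' is bounded by $kn$, so for most boundaries the number of split variables is small. More importantly, I would want to choose the boundary so that on the suffix side, each late variable's $k$ occurrences are grouped into few passes. Since there are only $n$ variables and $k$ reads each, within the suffix the occurrences of late variables form an interleaved pattern; one cannot avoid interleaving in general, but one can afford to \emph{discard} variables. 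The plan is: pick the boundary at position roughly $(1 - \delta) kn$ for a suitable small constant $\delta$ (so the suffix is short, of length $O(\delta k n)$); then in the suffix, greedily extract a large set $V$ of variables all of whose occurrences in the suffix can be organized into at most $k$ passes in a common order — here I would invoke an Erd\H{o}s–Szekeres / iterated-monotone-subsequence argument on each of the $k$ ``slots,'' losing only a factor of $k^{O(k)}$, exactly as in the overview of \autoref{sec:technique}. Symmetrically, pick $U$ inside the prefix by a mirror argument. All variables not placed in $U$ or $V$ go into $W$; the point is to show $|W| \le n/10$, which holds because $U$ and $V$ each capture a $1/k^{O(k)}$ fraction and the split/discarded variables are only a small constant fraction if $\delta$ is chosen small and $k$ is treated as a constant — wait, this needs care: $|U|,|V| = n/k^{O(k)}$ is tiny, so $W$ is almost all of $[n]$, contradicting $|W|\le n/10$.

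So the partition must be the other way around: $W$ should be the \emph{split-plus-discard} set and we must ensure it stays below $n/10$, while $U,V$ only need size $n/k^{O(k)}$. The resolution is to \emph{not} demand $U \cup V \cup W = [n]$ with $U,V$ large — rather, let $W$ be small (the genuinely problematic split variables plus a small number discarded in the monotonization), and then \emph{the rest} of the variables split into $U$ and $V$ according to which side of the boundary they lie, with $U,V$ possibly large. The $k$-gap property then needs: after fixing $\vecx_W$, the ABP reads $\vecx_U$ entirely before $\vecx_V$ is ``finished,'' with at most $k$ alternations — i.e., over $\F(\vecx_W)$ the program, restricted to any prefix assignment of $\vecx_U$, factors as in \eqref{eq:k-gap}. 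I would secure this by choosing the boundary so that $\vecx_U = \exi$ for the chosen cut index $i$ (after reindexing), declaring $W$ to be the at-most-$n/10$ variables that straddle the cut or that spoil the pass structure, and $V$ the remainder; then by construction each variable of $V$ lies wholly in the suffix and each variable of $U$ wholly in the prefix, and the at-most-$k$-pass structure on each side is what was arranged. The main obstacle, and the technical heart of the argument, is the combinatorial lemma that such a cut exists with the straddling/spoiler set of size $\le n/10$ and with $|U|,|V| \ge n/k^{O(k)}$ simultaneously; this is essentially an averaging argument over the $kn$ possible cuts combined with the interleaving-to-$k$-passes extraction, and I expect the bookkeeping of the $k^{O(k)}$ loss to be the fiddly part. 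Once the cut is fixed, the bound $\evalDim_{U,V;W}(f) \le w^{2k}$ is immediate from the $k$-gap property and the proof of \autoref{lem:k-gap-to-roabp} applied over the field $\F(\vecx_W)$.
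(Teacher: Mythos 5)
There is a genuine gap, and you in fact flag it yourself: the ``combinatorial lemma that such a cut exists'' is exactly the content of the theorem, and your proposal never proves it. Worse, the single-cut (prefix/suffix) framework you set up cannot be repaired. Take $k=2$ and the read order $x_1,\ldots,x_n,x_n,\ldots,x_1$. For any cut position, either almost no variable has both reads in the prefix (so $U$ is too small), or a constant fraction of the variables straddle the cut (so the split set, which you must dump into $W$, has size $\gg n/10$); at the midpoint cut \emph{every} variable is split. No Erd\H{o}s--Szekeres monotonization inside the suffix can help, because the problem is the straddling variables, not the interleaving pattern of the non-straddling ones. (Erd\H{o}s--Szekeres is the tool for the PIT result in \autoref{sec:pit-read-k-abp}, where one must control \emph{every} prefix; the whole point of the lower bound, as \autoref{sec:technique} hints, is that one only needs \emph{one} good set, which is achievable by a much cheaper argument.)

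The paper's proof abandons the prefix/suffix picture entirely. Partition the $kn$ layers into $r=10k^2$ contiguous blocks of $kn/r$ layers each. Each variable's $k$ reads determine a $k$-subset of blocks, so by pigeonhole over the $\binom{r}{k}$ possibilities there are $k$ blocks $B_1,\ldots,B_k$ containing \emph{all} reads of a set $U$ of at least $n/\binom{r}{k}\ge n/k^{O(k)}$ variables. Now $W$ is the set of \emph{other} variables that happen to touch $B_1\cup\cdots\cup B_k$; since these $k$ blocks contain only $k\cdot kn/r=n/10$ layer-slots, $|W|\le n/10$, and $V$ is everything else. After moving $\vecx_W$ into the field, the layers of the program alternate between at most $k$ runs reading only $U$-variables (the chosen blocks) and runs reading only $V$-variables, which is precisely the $k$-gap property with respect to $U$ in the sense of \autoref{def:k-gap} (here $U$ is scattered through the program, not a prefix), and \autoref{lem:k-gap-to-roabp} gives $\evalDim_{U,V;W}(f)\le w^{2k}$. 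So the correct averaging is over $k$-tuples of blocks, not over cut positions, and once that is done no monotonization step is needed at all.
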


\begin{proof}
Consider an ABP $A$ that computes $f$.
Divide the $kn$ layers into $r$ equal-sized contiguous blocks of $kn/r$ layers (where $r$ shall be set shortly).
For each variable, consider the $k$ blocks that its $k$ reads fall in.
By a simple averaging, there must exist $k$ blocks $B_1,\ldots, B_k$ that contain all $k$ reads of a set $U$ of at least $n / \binom{r}{k}$ variables.
Let $W$ be the set of variables in $B_1 \cup B_2 \cup \cdots \cup B_k$ that are not in $U$, and $V$ be the set of all remaining variables.
As each block is of size $kn/r$, we have that $|W| \le k^2n/r$, which is at most $n/10$ if we set $r=10k^2$.  
Observe that $|V| \ge n - k^2n/r \ge 9n/10$.
Let us ignore the variables in $W$ by considering the ABP over the field $\F(\vecx_W)$.

We now claim that $\evalDim_{U,V;W}(f) \leq w^{2k}$. 
Having moved the variables in $W$ to the field, each of the $r$ blocks is either entirely contained in $U$ or entirely contained in $V$.
Therefore, since the reads comprise of at most $k$ alternating blocks of variables in $U$ and $V$, the resulting branching program has the $k$-gap property with respect to $U$. It follows immediately from \autoref{lem:k-gap-to-roabp} that $\evalDim_{U,V;W}(f)$ is at most $w^{2k}$. 
\end{proof}

We now show that $Q_n$ (defined in \eqref{eqn:Qn-defn}) is hard to compute for read-$k$ oblivious ABPs.

\begin{theorem}
\label{thm:poly-hard-for-read-k}
Let $A$ be a width-$w$, read-$k$ oblivious ABP computing the polynomial $Q_n$ (defined in \eqref{eqn:Qn-defn}). Then $w \ge \exp(n/k^{O(k)})$. 
\end{theorem}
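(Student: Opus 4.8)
The plan is to combine the structural result of \autoref{thm:read-k-eval-dim} with the robustness of the evaluation dimension of $Q_n$ established in \autoref{lem:high-evalDim-poly}. Suppose $A$ is a width-$w$ read-$k$ oblivious ABP computing $Q_n$. The variable set here is $\vecx \cup \vecy \cup \vecz$, of size $3n$; applying \autoref{thm:read-k-eval-dim} to $f = Q_n$ on these $3n$ variables yields a partition $U \sqcup V \sqcup W = [3n]$ with $|U|, |V| \ge 3n/k^{O(k)}$, $|W| \le 3n/10$, and $\evalDim_{U,V;W}(Q_n) \le w^{2k}$. So it suffices to show that for such a partition the left-hand side is at least $\exp(n/k^{O(k)})$, which forces $w^{2k} \ge \exp(n/k^{O(k)})$ and hence $w \ge \exp(n/k^{O(k)})$ after absorbing the $2k$ factor into the $k^{O(k)}$.

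The main point to establish is that fixing the $\vecz$ variables (which live somewhere among $U$, $V$, $W$) does not kill the evaluation dimension. First I would fix all $z_i$ variables: recall $Q_n = \sum_{i=1}^n z_i f_i$ where $f_i = \prod_{j=1}^n (x_j + y_{j+i \bmod n})$. Setting $z_i = 1$ and all other $z_j = 0$ is a projection, and since $\evalDim$ can only decrease under projections (restricting to a subspace of partial evaluations), we get $\evalDim_{U \cap (\vecx\cup\vecy),\, V \cap (\vecx\cup\vecy);\, W \cap (\vecx\cup\vecy)}(f_i) \le \evalDim_{U,V;W}(Q_n)$ — one needs to be slightly careful that moving $\vecz \cap W$ to the field and then setting those to field constants is compatible, but this is routine since $Q_n$ is linear in $\vecz$. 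Now restrict attention to $S := U \cap (\vecx \cup \vecy)$ and $T := V \cap (\vecx \cup \vecy)$. Since $|U|, |V| \ge 3n/k^{O(k)}$ and at most $n$ of the removed-to-$U,V$ elements are $\vecz$'s, we still have $|S|, |T| \ge 3n/k^{O(k)} - n$; to make this positive and of the right order we should re-examine the block count: I would rerun the averaging argument in \autoref{thm:read-k-eval-dim} directly on $Q_n$, noting that there are $3n$ layers' worth of variables but only $2n$ of them (the $\vecx, \vecy$) carry the combinatorial structure, and conclude $|S|, |T| \ge n/k^{O(k)}$ and $|W \cap (\vecx \cup \vecy)| \le n/10$, so that $|S \sqcup T| = 2n - |W \cap (\vecx\cup\vecy)| \ge 0.9 \cdot 2n$.

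With $S, T$ disjoint subsets of $\vecx \cup \vecy$ satisfying $|S \sqcup T| \ge 0.9 \cdot 2n$, \autoref{lem:high-evalDim-poly} applies directly (the variables of $W$ having been pushed into the field only helps, as it is a sub-field setting and $Q_n$ has coefficients in the base field), giving $\evalDim_{S,T}(Q_n) \ge \exp(\Omega(\min(|S|,|T|))) = \exp(\Omega(n/k^{O(k)}))$. But wait — I actually want the dimension of $Q_n$ itself, not of a $z$-projection $f_i$; so the cleaner route is: $\evalDim_{S,T;W'}(Q_n) \ge \evalDim_{S,T}(f_{i^*})$ for the matching $i^*$ selected inside \autoref{lem:high-evalDim-poly}, since $f_{i^*}$ is a projection of $Q_n$, and the latter is $\ge 2^{\Omega(|S|)}$ by \autoref{lem:large-evaldim}. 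Chaining the inequalities: $w^{2k} \ge \evalDim_{U,V;W}(Q_n) \ge \evalDim_{S,T;W'}(Q_n) \ge 2^{\Omega(n/k^{O(k)})}$, hence $w \ge 2^{\Omega(n/k^{O(k)})} = \exp(n/k^{O(k)})$. The main obstacle — really a bookkeeping obstacle rather than a conceptual one — is tracking the presence of the $n$ extra $\vecz$-variables through the averaging argument so that the final bounds on $|S|, |T|$, and $|W \cap (\vecx\cup\vecy)|$ land in the regime $|S\sqcup T| \ge 0.9\cdot 2n$ demanded by \autoref{lem:high-evalDim-poly}; the safest fix is to simply invoke \autoref{thm:read-k-eval-dim} with a slightly smaller constant (e.g. replacing $n/10$ by $n/20$ via $r = 20k^2$) applied to the $2n$ structured variables after fixing $\vecz$, and verify the constants compose correctly.
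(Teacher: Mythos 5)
Your proposal is correct and takes essentially the same route as the paper: the paper's one-line version of your ``safest fix'' is to move the $\vecz$ variables into the field $\F(\vecz)$ at the outset, apply \autoref{thm:read-k-eval-dim} to the resulting $2n$-variable polynomial in $\vecx\cup\vecy$ (so that $|W|\le 2n/10$ and hence $|U\sqcup V|\ge 0.9\cdot 2n$), and then invoke \autoref{lem:high-evalDim-poly}. The bookkeeping issue you flag is real if one applies \autoref{thm:read-k-eval-dim} to all $3n$ variables (one only gets $|U\sqcup V|\ge 0.85\cdot 2n$), and your resolution is exactly the one the paper adopts.
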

\begin{proof}
First observe that we can eliminate the $\vecz$ variables by considering the ABP over the field $\F(\vecz)$
so that is now computes 
a polynomial in the variables $\vecx \cup \vecy$.

By \autoref{thm:read-k-eval-dim}, there exists a partition $U \sqcup V \sqcup W$ of $\vecx \cup \vecy$ with the prescribed sizes as in the statement of the theorem, such that $\evalDim_{U,V;W}(Q_n) \leq w^{2k}$.

Since $|W| \le 2n/10$, \autoref{lem:high-evalDim-poly} implies that 
$\evalDim_{U,V;W}(f) = \exp(\Omega(\min(|U|,|V|)))$.

Using the fact that $\min(|U|, |V|) \geq n/k^{O(k)}$, we get that $w^{2k} \ge \exp(n/k^{O(k)})$, which implies $w \ge \exp(n/k^{O(k)})$ as well.
\end{proof}

\section{Identity tests for read-$k$ oblivious ABPs}
\label{sec:pit-read-k-abp}

\subsection{Identity tests for $k$-pass ABPs}\label{sec:pit-k-pass-abp}

In this section we give PIT algorithms for the class of read-$k$ oblivious ABPs. First, observe that  \autoref{lem:k-pass-to-roabp} immediately implies a black-box algorithm for the subclass of $k$-pass ABPs, as those can be simulated efficiently by a ROABP.

\begin{corollary}\label{cor:k-pass-hittingset}
There is a hitting set of size $(ndw)^{O(k \log n)}$ for the class of $n$-variate $k$-pass ABPs of width $w$ and degree $d$. 
\end{corollary}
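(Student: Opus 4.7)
The plan is to combine two results already established in the paper: the simulation of a $k$-pass ABP by a ROABP (\autoref{lem:k-pass-to-roabp}) and the known quasipolynomial hitting set for ROABPs in any order (\autoref{thm:hitting-set-ROABP}).

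First, I would invoke \autoref{lem:k-pass-to-roabp}: if $f \in \F[x_1, \ldots, x_n]$ is computed by a $k$-pass ABP of width $w$ with variable order given by some permutation $\pi$, then $f$ is also computed by a ROABP of width $w' := w^{2k}$ in the order $\pi$. Since the individual degree of each edge label in the original $k$-pass ABP is at most $d$, the individual degree in the resulting ROABP is at most $kd$ (as each variable is read $k$ times, so its total individual degree can grow by at most a factor $k$), and in particular is bounded by $nd$.

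Next, I would apply \autoref{thm:hitting-set-ROABP} to the class of $n$-variate ROABPs of width $w'$ and individual degree at most $kd$ in any variable order. The cited theorem produces a hitting set $\cH$ of size
\[
(n \cdot w' \cdot kd)^{O(\log n)} = (n \cdot w^{2k} \cdot kd)^{O(\log n)} = (ndw)^{O(k \log n)},
\]
where the last equality absorbs the factor of $k$ into the base (using $k \leq n$ without loss of generality, since otherwise the bound is trivial). Crucially, the hitting set provided by \autoref{thm:hitting-set-ROABP} works for ROABPs in \emph{any} variable order, so a single set $\cH$ hits every polynomial computed by a $k$-pass ABP of the given parameters, regardless of the permutation $\pi$.

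There is essentially no obstacle: the whole content is already packaged by the preceding lemmas. The only thing to be mildly careful about is that the degree parameter in the statement refers to the degree of the $k$-pass ABP, while the ROABP hitting set is stated in terms of the individual degree of the ROABP; as noted above, this introduces only a factor of $k$ which is absorbed into the $O(k \log n)$ exponent. The corollary then follows immediately.
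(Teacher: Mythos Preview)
Your proposal is correct and matches the paper's own proof essentially verbatim: the paper's argument is the one-liner ``follows directly from \autoref{lem:k-pass-to-roabp} and the $(ndw')^{O(\log n)}$-sized hitting set for width $w'$ read-once ABPs from \autoref{thm:hitting-set-ROABP}.'' You are in fact slightly more careful than the paper in tracking the individual-degree blowup by a factor of $k$, but this is absorbed into the exponent exactly as you note.
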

\begin{proof}
Follows directly from \autoref{lem:k-pass-to-roabp} and the $(ndw')^{O(\log n)}$-sized hitting set for width $w'$ read-once ABPs from \autoref{thm:hitting-set-ROABP}.
\end{proof}

We now turn to general read-$k$ oblivious ABPs.

\subsection{From read-$k$ to per-read-monotone and regularly-interleaving sequences}

In this section we show that given any read-$k$ oblivious ABP over $\vecx=\{x_1, \ldots, x_n\}$ computing a polynomial $f$, we can find a ``large'' subset of variables $\vecy \subseteq \vecx$ such that $f$ has a ``small'' ROABP when we think of $f$ as a polynomial in the $\vecy$ variables over the field $\F(\overline{\vecy})$. This process, in fact, involves only finding the correct subset $\vecy$ (without rewiring any part of the ABP).  Therefore, in order to avoid technical overhead it is useful to think in terms sequences over abstract sets of elements, which correspond to the order in which the ABP reads the variables, and not in terms of variables in branching programs.

Let $X$ be a set, and let $n=|X|$.  Let $S \in X^{m}$ be an sequence of elements from $X$.  We say $S$ is \emph{read-$k$} if each element $x \in X$ occurs $k$ times in $S$ (in this case we also have $m=nk$). 
As mentioned in \autoref{rem:exactly-k}, we will restrict ourselves to considering sequences that are read-$k$ for some $k$.
For $i \in [k]$, we denote by $S^{(i)}$ the subsequence of $S$ which consists of the $i$-th occurrences of elements in $X$. That is, $S^{(i)}$ is a permutation of the elements of $X$, according to the order in which they appear in $S$ for the $i$-th time. Similarly, for $i\neq j \in [k]$, we use the notation $S^{(i,j)}$ for the subsequence of $S$ which consists of the $i$-th and $j$-th occurrences of elements in $X$.

For $x \in X$ and $i \in [k]$ let $\occur{S}{i}{x}$ denote the index of the $i$-th occurrence of $x$ in $S$.  For an index $\ell \in [kn]$ let $\var{S}{\ell}$ denote the pair $(x_i,c)$ such that the $c$-th occurrence of $x_i$ appears at index $\ell$ in $S$.   For a subset $X' \subseteq X$, let $S|_{X'}$ denote the restriction of $S$ to the set $X'$ that is the result of dropping all elements of $X \setminus X'$ from $S$. Thus, $S|_{X'} \in X^{m'}$ for $m'=|X'|k$.

In order to save on excessive notation and multiple indexing, we will assume without loss of generality that $S^{(1)} = (x_1, \ldots, x_n)$, that is, that the variables in $\vecx$ are already labeled according to the order of their first occurrence. This can be ensured by renaming variables, if necessary. 

Next we define a special subclass of read-$k$ sequences which we work with throughout this section.

\begin{definition}
Let $S \in X^{nk}$ be a read-$k$ sequence. We say $S$ is {\em per-read-monotone} if for every $i \in [k]$, $S^{(i)}$ is monotone (that is, the variables all appear in either increasing or decreasing order).
\end{definition}

The following well-known theorem asserts that any long enough sequence contains a large monotone subsequence:

\begin{theorem}[Erd\H{o}s--Szekers Theorem, \cite{ES35, proofsfromthebook}]
\label{thm:erdos-szekers}
Let $S$ be a sequence of distinct integers of length at least $m^2+1$. Then, there exists a monotonically increasing subsequence of $S$ of length $m+1$, \emph{or} a monotonically decreasing subsequence of $S$ of length $m+1$.
\end{theorem}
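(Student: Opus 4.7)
The plan is to invoke the classical pigeonhole argument (sometimes attributed to Seidenberg). For each index $i \in \{1, \ldots, m^2+1\}$ of the sequence $S$, I would associate the pair $(a_i, b_i) \in \mathbb{N}^2$, where $a_i$ is the length of the longest monotonically increasing subsequence of $S$ that ends at position $i$, and $b_i$ is the length of the longest monotonically decreasing subsequence that ends at position $i$. Both quantities are at least $1$, since the singleton subsequence $(S[i])$ is both increasing and decreasing.

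The first key step is to show that the map $i \mapsto (a_i, b_i)$ is injective. For any two indices $i < j$, since the elements of $S$ are distinct, either $S[i] < S[j]$ or $S[i] > S[j]$. In the former case, any increasing subsequence ending at position $i$ can be extended by appending $S[j]$, so $a_j \geq a_i + 1 > a_i$; in the latter case, the symmetric argument gives $b_j > b_i$. In either case the two pairs differ in at least one coordinate, establishing injectivity.

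The second step is a straightforward pigeonhole. Suppose toward contradiction that there is no monotone subsequence of length $m+1$. Then $a_i, b_i \in \{1, \ldots, m\}$ for every $i$, so the pairs $(a_i, b_i)$ lie in a set of only $m^2$ elements. But by injectivity there are $m^2+1$ distinct such pairs, a contradiction. Hence some $a_i \geq m+1$ or some $b_i \geq m+1$, which yields the desired subsequence. There is no real obstacle in this argument; the only subtlety is the reliance on distinctness of the elements of $S$, which is precisely what guarantees the strict extension step in the injectivity proof. The entire argument fits in a few lines, so the ``plan'' here is essentially the complete proof.
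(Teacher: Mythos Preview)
Your argument is correct and is exactly the standard Seidenberg pigeonhole proof of the Erd\H{o}s--Szekeres theorem. Note, however, that the paper does not supply its own proof of this statement: it merely quotes the theorem with references to \cite{ES35, proofsfromthebook} and uses it as a black box, so there is nothing to compare against beyond observing that your proof is the one typically found in those references.
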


As an immediate corollary of \autoref{thm:erdos-szekers}, we get the following lemma:
\begin{lemma}
\label{lem:2-per-read-mon}
Let $S$ be a read-$2$ sequence over $X=\{x_1, \ldots, x_n\}$. Then, there exists a subset $X' \subseteq X$ with $|X'| \ge \sqrt{n-1} +1 \ge \sqrt{n}$ such that the subsequence $S'=S|_{X'}$ is per-read-monotone.
\end{lemma}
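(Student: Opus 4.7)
The plan is to apply Erd\H{o}s--Szekers (\autoref{thm:erdos-szekers}) to just the second-occurrence subsequence $S^{(2)}$, leveraging the normalization that $S^{(1)}=(x_1,\ldots,x_n)$ is already monotone.

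First I would recall that by the labeling convention introduced just before the lemma, $S^{(1)}=(x_1,\ldots,x_n)$ is monotonically increasing. In particular, for \emph{any} subset $X'\subseteq X$ the restriction $(S|_{X'})^{(1)} = S^{(1)}|_{X'}$ is automatically a monotone (increasing) subsequence. Hence we only need to find $X'$ of size $\geq \sqrt{n-1}+1$ such that $(S|_{X'})^{(2)} = S^{(2)}|_{X'}$ is monotone, and the per-read-monotone condition will follow.

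Next I would apply \autoref{thm:erdos-szekers} directly to $S^{(2)}$, which is a permutation (hence a sequence of distinct integers when we identify $x_i$ with $i$) of length $n$. Choosing $m = \lceil\sqrt{n-1}\rceil$ (so that $m^2+1 \geq n$), the theorem guarantees a monotone subsequence of $S^{(2)}$ of length $m+1 \geq \sqrt{n-1}+1$. Let $X'\subseteq X$ be the set of elements appearing in this subsequence. Then $|X'|\geq\sqrt{n-1}+1\geq\sqrt{n}$ as required, and by construction $S^{(2)}|_{X'}$ is monotone.

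Combining the two observations, $S|_{X'}$ has both its first-occurrence and second-occurrence subsequences monotone, so it is per-read-monotone. There is no real obstacle here; the lemma is essentially a one-line corollary of \autoref{thm:erdos-szekers} once one observes that the first pass is already sorted by the normalization. The only minor bookkeeping is verifying that restricting $S$ to $X'$ commutes with extracting the $i$-th occurrence, i.e., $(S|_{X'})^{(i)} = S^{(i)}|_{X'}$, which is immediate from the definitions.
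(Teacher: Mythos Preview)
Your proof is essentially the same as the paper's: normalize so that $S^{(1)}$ is already increasing, then apply Erd\H{o}s--Szekeres to $S^{(2)}$ and take $X'$ to be the support of the resulting monotone subsequence. One small slip: to invoke \autoref{thm:erdos-szekers} you need $n \ge m^2+1$, so $m$ should be $\lfloor\sqrt{n-1}\rfloor$, not $\lceil\sqrt{n-1}\rceil$ (your parenthetical ``so that $m^2+1\ge n$'' is the wrong direction); with the floor you get a monotone subsequence of length $\lfloor\sqrt{n-1}\rfloor+1\ge\sqrt{n}$, which is all that is actually used later.
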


\begin{proof}
First, observe that the subsequence of first occurrences is monotone by our definition of the order on $X$ according to the first occurrence in $S$ (and thus every subsequence of it is monotone as well).

Consider the subsequence $S^{(2)}$ of the $n$ second occurrences. By \autoref{thm:erdos-szekers}, there exists a monotonic subsequence of length at least $\sqrt{n}$. Let $X' \subseteq X$ be the set of elements that appear in this monotonic subsequence, and let $S'=S|_{X'}$. Then by the choice of $X'$, it follows that $S'$ is per-read-monotone.
\end{proof}

We can generalize \autoref{lem:2-per-read-mon} to read-$k$ sequences, at the cost of settling for a weaker lower bound of only $n^{1/2^{k-1}}$ on the length of the subsequence:

\begin{lemma}
\label{lem:k-per-read-mon}
Let $S$ be a read-$k$ sequence over $X=\{x_1, \ldots, x_n\}$. Then, there exists a subset $X' \subseteq X$ with $|X'| \ge n^{1/2^{k-1}} $ such that the subsequence $S'=S|_{X'}$ is per-read-monotone.
\end{lemma}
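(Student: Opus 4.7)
The plan is to prove this by induction on $k$, using \autoref{lem:2-per-read-mon} (or more precisely the Erd\H{o}s--Szekers step inside it) as the inductive engine. The base case $k=1$ is trivial since $S^{(1)}$ is monotone by our labeling convention, and $k=2$ is exactly \autoref{lem:2-per-read-mon}.

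For the inductive step, I would proceed by iteratively thinning $X$ one read at a time. Define a decreasing chain of subsets $X = X_0 \supseteq X_1 \supseteq \cdots \supseteq X_{k-1}$ as follows. Given $X_{i-1}$, look at the subsequence of $(i+1)$-st occurrences of $S$ restricted to $X_{i-1}$; this is a sequence of $|X_{i-1}|$ distinct elements. Apply \autoref{thm:erdos-szekers} to extract a monotone subsequence of length at least $\sqrt{|X_{i-1}|}$, and let $X_i$ be the underlying set of elements. This guarantees that $(S|_{X_i})^{(i+1)}$ is monotone.

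The key observation is that monotonicity of previous reads is preserved under further restriction: if $(S|_{X_{i-1}})^{(j)}$ is monotone for some $j \le i$, then $(S|_{X_i})^{(j)}$ is monotone as well, since it is a subsequence. Combined with the fact that $(S|_{X_0})^{(1)}$ is monotone (by our labeling convention that $S^{(1)} = (x_1, \ldots, x_n)$), a simple induction on $i$ shows that after constructing $X_{k-1}$, every read $(S|_{X_{k-1}})^{(j)}$ for $j \in [k]$ is monotone. Hence $S' = S|_{X_{k-1}}$ is per-read-monotone. The size bound follows by iterating the square root: $|X_i| \ge |X_{i-1}|^{1/2}$, so $|X_{k-1}| \ge n^{1/2^{k-1}}$.

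There is no significant obstacle here -- the argument is a straightforward iteration of Erd\H{o}s--Szekers, with the only subtlety being that we apply it $k-1$ times (once for each of the reads $S^{(2)}, \ldots, S^{(k)}$) rather than $k$ times, since the first read is automatically monotone. The bound $n^{1/2^{k-1}}$ is tight by the construction of de Bruijn / \cite{AFK85} mentioned in the footnote of \autoref{sec:technique}, so we should not expect to do better with this approach.
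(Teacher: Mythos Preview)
Your proposal is correct and follows essentially the same approach as the paper: iteratively applying Erd\H{o}s--Szekeres $k-1$ times to the reads $S^{(2)}, \ldots, S^{(k)}$, using that the first read is already monotone by the labeling convention and that monotonicity of earlier reads survives further restriction. The paper's proof is slightly terser but makes exactly the same argument.
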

\begin{proof}
As in the proof of \autoref{lem:2-per-read-mon}, the set $X'$ can be constructed by repeatedly pruning $X$ using $k-1$ repeated applications of \autoref{thm:erdos-szekers}.

That is, we first apply \autoref{thm:erdos-szekers} on the subsequence $S^{(2)}$ of second occurrences and obtain a monotonic subsequence of length $n_2 := \sqrt{n}$. We discard all elements of $X$ which do not appear in this monotonic subsequence. We move on to the subsequence $S^{(3)}$ of third occurrences, and find a monotonic subsequence of length $n_3 := \sqrt{n_2}$, again discarding all elements that do not appear in this subsequence. After finding the $k$-th monotonic subsequence, we are left we a subset $X'$ of size $n^{1/2^{k-1}}$ that satisfies the conditions of the lemma.
\end{proof}

We now show how to prune per-read-monotone read-$2$ sequences even further, trading a constant fraction of their size for stronger structural properties. We begin by stating the property we look for.

\begin{definition}
\label{def:regularly-interleaving}
Let $S$ be a read-$2$ sequence over a set of elements $X$. We say $S$ is {\em 2-regularly-interleaving} if there exists a partition of $X$ to {\em blocks} $\{X_i\}_{i \in [t]}$ such that for every $i \in [t]$:
\begin{itemize}
\item For every $c \in \{1,2\}$, all the $c$-th occurrences of the block $X_i$ appear consecutively in $S$.
\item The interval containing the second occurrences of the block $X_i$ immediately follows the interval containing the first occurrences of $X_{i}$. 
\end{itemize}
A read-$k$ sequence $S$ is said to be {\em $k$-regularly-interleaving} if for any $i \neq j \in [k]$, the subsequence $S^{(i,j)}$ is 2-regularly-interleaving. That is, $S$ is $k$-regularly-interleaving if restricted to any two reads it is 2-regularly-interleaving.
\end{definition}
To get a better intuitive sense of the definition, the reader may consult \autoref{fig:2-regular-interleaving}.

\begin{figure}[h]
\begin{center}
\begin{tikzpicture}[framed]
\draw[ultra thick] (0,0) -- (8,0);
\draw[ultra thick,draw=red] (0,0) -- (2,0);
\draw[ultra thick,draw=red] (4,0) -- (6,0);

\draw [decorate,decoration={brace,amplitude=10pt},yshift=3pt]
(0,0) -- (2,0) node [black,midway,yshift=0.8cm]
{\footnotesize 1st occurrences of $X_1$};

\draw [decorate,decoration={brace,amplitude=10pt,mirror},yshift=-3pt]
(2,0) -- (4,0) node [black,midway,yshift=-0.8cm]
{\footnotesize 2nd occurrences of $X_1$};

\draw [decorate,decoration={brace,amplitude=10pt},yshift=3pt]
(4,0) -- (6,0) node [black,midway,yshift=0.8cm]
{\footnotesize 1st occurrences of $X_2$};

\draw [decorate,decoration={brace,amplitude=10pt,mirror},yshift=-3pt]
(6,0) -- (8,0) node [black,midway,yshift=-0.8cm]
{\footnotesize 2nd occurrences of $X_2$};

\node at (8.55,0) {\Large $\cdots$};

\draw[ultra thick] (9,0) -- (11,0);
\draw[ultra thick,draw=red] (11,0) -- (13,0);

\draw [decorate,decoration={brace,amplitude=10pt},yshift=3pt]
(9,0) -- (11,0) node [black,midway,yshift=0.8cm]
{\footnotesize 1st occurrences of $X_t$};

\draw [decorate,decoration={brace,amplitude=10pt,mirror},yshift=-3pt]
(11,0) -- (13,0) node [black,midway,yshift=-0.8cm]
{\footnotesize 2nd occurrences of $X_t$};

\end{tikzpicture}
\caption{A 2-regularly-interleaving sequence.}
\label{fig:2-regular-interleaving}
\end{center}
\end{figure}
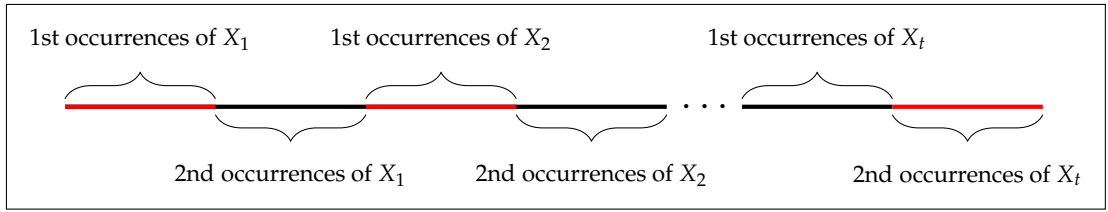

The following lemma is used to simplify some of the later arguments. It shows that in a read-$k$ per-read-monotone sequence, the monotonically increasing subsequences cannot intersect with monotonically decreasing subsequences.

\begin{lemma}
\label{lem:k-reg-interleave-concat}
Let $S$ be a read-$k$, per-read-monotone sequence over $X=\set{x_1, \ldots, x_n}$. Suppose $S^{(1)}$ is monotonically increasing. Then we can write $S$ as a concatenation
$S = (T_1 , T_2 , \ldots , T_t)$, such that:
\begin{enumerate}
\item for every $j \in [t]$, $T_j$ is a read-$k_j$ sequence for $k_j \le k$.
\item for every $i \in [k]$ there exists $j \in [t]$ so that $S^{(i)}$ is contained in $T_j$.
\item for every odd $j \in [t]$, all the subsequences $S^{(i)}$ that appear in $T_j$ are monotonically increasing, and for any even $j$, all are monotonically decreasing.
\item \label{item:border} for every $j \in [t-1]$, the last element that appears in $T_j$ equals the first element appearing in $T_{j+1}$, and this element can be either $x_n$ (if $T_j$ contains monotonically increasing subsequences and $T_{j+1}$ contains monotonically decreasing subsequences) or $x_1$ (in the opposite case). 
\end{enumerate}
In other words, we can partition $S$ into $t$ disjoint contiguous subsequences, such that every $S^{(i)}$ is completely contained in exactly one subsequence, and in every subsequence, either all reads are increasing or all reads are decreasing, with the pattern alternating.
\end{lemma}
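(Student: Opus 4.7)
The plan is to first establish a structural claim---that intervals of monotonically increasing reads are disjoint from intervals of monotonically decreasing reads---and then use it to directly construct the desired partition. For each $i \in [k]$, let $[a_i, b_i]$ denote the interval in $S$ spanned by $S^{(i)}$; that is, $a_i$ and $b_i$ are the positions of the first and last elements of $S^{(i)}$. Recall that if $S^{(i)}$ is increasing then its first and last elements are $x_1$ and $x_n$ respectively, while if it is decreasing they are $x_n$ and $x_1$ respectively. The key claim is that for every increasing $i$ and decreasing $j$, the intervals $[a_i, b_i]$ and $[a_j, b_j]$ are disjoint.

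The proof of the key claim is by case analysis on whether $i < j$ or $i > j$. For any variable $x$, we have $\occur{S}{i}{x} < \occur{S}{j}{x}$ if and only if $i < j$. If $i < j$, then $b_i = \occur{S}{i}{x_n}$ (since $S^{(i)}$ is increasing) and $a_j = \occur{S}{j}{x_n}$ (since $S^{(j)}$ is decreasing), so $b_i < a_j$ and the two intervals are disjoint with $[a_i,b_i]$ lying strictly to the left of $[a_j,b_j]$. The case $i > j$ is symmetric, obtained by considering the positions of $x_1$ instead and yielding $b_j < a_i$. The core difficulty is mostly notational: once one has the correct identification of $a_i, b_i$ in each case, the inequalities follow directly from the monotonicity hypothesis together with the ordering between different reads of the same variable.

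With disjointness established, define $U_I = \bigcup_{i \text{ increasing}} [a_i, b_i]$ and $U_D = \bigcup_{j \text{ decreasing}} [a_j, b_j]$. Every position of $S$ belongs to some read, which is either increasing or decreasing by the per-read-monotone hypothesis, so $U_I \cup U_D = [1, nk]$; by the key claim, $U_I \cap U_D = \emptyset$. Define $T_1, T_2, \ldots, T_t$ to be the maximal consecutive runs of positions lying entirely in $U_I$ or entirely in $U_D$. By construction the $T_j$'s alternate in type, and since $S^{(1)}$ is increasing with $a_1 = 1$, position $1$ lies in $U_I$, so $T_1$ is of increasing type. This immediately yields property 3.

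Properties 1 and 2 follow readily: each interval $[a_i, b_i]$ is contained in a single connected component of $U_I$ (if $i$ is increasing) or of $U_D$ (if decreasing), hence inside a single $T_j$; so each $T_j$ contains some subset of the $k$ reads in their entirety, making it a read-$k_j$ sequence with $k_j \le k$. For property 4, suppose $T_j$ is of increasing type ending at position $p$. Since $p \in U_I$, there exists an increasing $i$ with $p \in [a_i, b_i]$; since $p+1 \notin U_I$, we cannot have $p < b_i$ (else $p+1 \in [a_i,b_i] \subseteq U_I$), so $p = b_i$ and $\var{S}{p}$ has $x_n$ as its variable coordinate. Symmetrically, $p+1 \in U_D$ and $p \notin U_D$ force $p+1 = a_{j'}$ for some decreasing $j'$, so $\var{S}{p+1}$ also has $x_n$ as its variable coordinate. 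The decreasing-to-increasing transition is handled identically, with $x_1$ replacing $x_n$.
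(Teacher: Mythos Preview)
Your proof is correct and rests on the same core observation as the paper's: an increasing read $S^{(i)}$ and a decreasing read $S^{(j)}$ cannot interleave, because the endpoint of one and the starting point of the other are both occurrences of $x_n$ (or both of $x_1$), and different occurrences of the same variable are ordered by their read index. The paper establishes this for the pair $(S^{(1)}, S^{(2)})$ and then peels off $T_1$ and recurses, proving the lemma by induction on $k$; you instead prove the disjointness claim once for all pairs $(i,j)$ and construct the partition in one shot as the connected components of $U_I$ and $U_D$.

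Both arguments are short, but your direct construction is a bit cleaner: it handles the two cases $i<j$ and $i>j$ symmetrically and explicitly, whereas the paper's inductive step only argues the case where the later read is decreasing and leaves the symmetric case implicit. Your verification of item~4 via the boundary argument (forcing $p = b_i$ and $p+1 = a_{j'}$) is also more transparent than the paper's appeal to what must happen at the border. The cost is that you need the small auxiliary fact that each interval $[a_i,b_i]$, being contiguous and contained in $U_I$ (resp.\ $U_D$), lies in a single connected component---which you state and which is indeed immediate.
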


\begin{proof}
The proof is by induction on $k$. For $k=1$, $S=S^{(1)}$ and this is a trivial statement.

For larger values of $k$, we would like to show first that no decreasing sequence can intersect an increasing one. Suppose without loss of generality that $S^{(2)}$ is decreasing (the other case, where the first sequence is decreasing and the second increasing, in handled analogously). Then,
\[
\occur{S}{2}{x_n} > \occur{S}{1}{x_n} > \occur{S}{1}{x_1},
\]
where the first inequality is obvious and the second follows from $S$ being per-read-monotone. Furthermore, since $S^{(2)}$ is decreasing, have that 
\[
\occur{S}{2}{x_n} < \occur{S}{2}{x_{n-1}} < \cdots < \occur{S}{2}{x_1},
\]
which implies that $S^{(2)}$ cannot intersect $S^{(1)}$, but rather it must begin after $S^{(1)}$ ends.

Let $\ell$ denote the first index in which a decreasing subsequence $S^{(j)}$ begins (if no such $\ell$ exists, the lemma is clearly satisfied by picking $T_1=S$). By the above argument, all the elements before the $\ell$-th index must belong to increasing subsequences which are read entirely. We can define $T_1$ to be the subsequence of $S$ from index $1$ up to index $\ell - 1$, and continue inductively on the subsequence $S'$ of $S$ from index $\ell$ to the end, which has $k' < k$ reads.

As for \autoref{item:border}, it follows from the fact that a sequence of monotonically increasing subsequences must end in $x_n$ (as to maintain monotonicity), and a sequence of monotonically decreasing subsequences must begin with $x_n$, for the same reason. The opposite case is handled analogously.
\end{proof}

The following lemma shows that given a 2-read per-read-monotone sequence, we can find a large subsequence which is also 2-regularly interleaving.

\begin{lemma}
\label{lem:2-per-read-regular}
Let $S$ be a read-$2$ per-read-monotone sequence over $X=\{x_1, \ldots, x_s\}$. Then there is a subset $X' \subseteq X$ with $|X'| \ge s/3$ such that the sequence $S' = S|_{X'}$ is per-read-monotone and 2-regularly-interleaving. 
\end{lemma}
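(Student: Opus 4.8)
The plan is to translate the statement into a question about interval overlap graphs. By relabelling we may assume $S^{(1)}=(x_1,\dots,x_s)$ is increasing; write $p_i:=\occur{S}{1}{x_i}$ and $q_i:=\occur{S}{2}{x_i}$, so that $p_1<\cdots<p_s$ and $p_i<q_i$ for every $i$, and associate to $x_i$ the interval $I_i:=[p_i,q_i]$. The first point is a dictionary between valid ``blocks'' and intersecting intervals: since both endpoint sequences are monotone, a set $B$ of variables has the property that, in $S|_{B}$, \emph{all} first occurrences of $B$ precede \emph{all} second occurrences of $B$ if and only if $\max_{i\in B}p_i<\min_{i\in B}q_i$, which (Helly's theorem on the line) is equivalent to the intervals $\{I_i\}_{i\in B}$ being pairwise intersecting. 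Let $H$ be the overlap graph on $x_1,\dots,x_s$, with $x_i\sim x_j$ iff $I_i\cap I_j\neq\emptyset$. I claim that if $X'\subseteq X$ is such that $H[X']$ is a disjoint union of cliques (a ``cluster graph''), then $S|_{X'}$ is $2$-regularly-interleaving: one checks that the connected components of $H[X']$ occupy contiguous ranges of indices within $X'$, so they may be ordered and used as the blocks; inside a component, which is a clique, all first occurrences precede all second occurrences by the dictionary above; and between two components, whose index ranges are disjoint and which have no connecting edge, the intervals are disjoint, which separates the corresponding blocks in $S|_{X'}$ exactly as the definition demands. Moreover $S|_{X'}$ is automatically per-read-monotone, being a subsequence of a per-read-monotone sequence. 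So it suffices to find $X'$ with $|X'|\ge s/3$ for which $H[X']$ is a cluster graph.

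The second observation is that $H$ is an indifference graph for the order $x_1<\cdots<x_s$. Indeed $p_i$ increases and $q_i$ is monotone: if $S^{(2)}$ is decreasing the $I_i$ are nested, $H$ is complete, and we simply take $X'=X$; if $S^{(2)}$ is increasing then for $i<j<k$ the relation $I_i\cap I_k\neq\emptyset$ forces $p_k\le q_i$ and hence $p_j<q_i$, so $I_i\cap I_j\neq\emptyset$ and $I_j\cap I_k\neq\emptyset$. Consequently each closed neighbourhood $N_H[x_i]$ is an interval of consecutive indices, and any set of consecutive indices whose two extreme vertices are adjacent induces a clique. Assume from now on that $S^{(2)}$ is increasing.

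Now build a clique partition greedily: let $Q_1=\{x_1,\dots,x_{r_1}\}$ with $r_1$ maximal such that $I_1\cap I_{r_1}\neq\emptyset$ (so $Q_1=N_H[x_1]$, a clique), then repeat on $\{x_{r_1+1},\dots,x_s\}$ — whose induced subgraph is again an indifference graph of the same form — to get $Q_2=\{x_{r_1+1},\dots,x_{r_2}\}$, and so on, producing a partition of $X$ into consecutive cliques $Q_1,\dots,Q_m$. The key fact is that $H$ has no edge between $Q_a$ and $Q_b$ whenever $|a-b|\ge 2$: here $Q_{a+1}=\{x_{r_a+1},\dots,x_{r_{a+1}}\}$ with $r_{a+1}$ maximal such that $p_{r_{a+1}}\le q_{r_a+1}$, so $p_{r_{a+1}+1}>q_{r_a+1}>q_{r_a}$; since $Q_{a+2}$ starts at $x_{r_{a+1}+1}$, for every $u\in Q_a$ and every $v$ of index $\ge r_{a+1}+1$ we have $q_u\le q_{r_a}<p_{r_{a+1}+1}\le p_v$, hence $I_u\cap I_v=\emptyset$.

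Finally, split $\{1,\dots,m\}$ into its three residue classes modulo $3$; some class $C$ has $\sum_{a\in C}|Q_a|\ge s/3$. Put $X':=\bigcup_{a\in C}Q_a$. Any two distinct $a,b\in C$ satisfy $|a-b|\ge 3\ge 2$, so $H$ has no edge between $Q_a$ and $Q_b$; thus $H[X']$ is the disjoint union of the cliques $\{Q_a\}_{a\in C}$, a cluster graph. By the first paragraph, $S|_{X'}$ is per-read-monotone and $2$-regularly-interleaving, and $|X'|\ge s/3$, as required. I expect the main obstacle to be the first paragraph — fixing the right reformulation (the equivalence ``valid block $\Leftrightarrow$ pairwise-intersecting intervals'' together with the verification that a cluster induced subgraph of $H$ genuinely yields a $2$-regularly-interleaving restriction, contiguity of components included). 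Once that dictionary is in place, the combinatorics on indifference graphs is routine.
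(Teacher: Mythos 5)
Your proof is correct, and it takes a genuinely different route from the paper's. The dictionary you were worried about does go through: with $p_i=\occur{S}{1}{x_i}$ and $q_i=\occur{S}{2}{x_i}$ both increasing in $i$ (after the reduction to $S^{(2)}$ increasing), pairwise intersection of $\{I_i\}_{i\in B}$ is equivalent to $\max_{i\in B}p_i<\min_{i\in B}q_i$, i.e.\ to $B$ being a valid block; and for two components of a cluster subgraph, total disjointness of intervals plus the common point inside each clique forces \emph{all} occurrences of one component to precede all occurrences of the other, which is exactly the block separation the definition demands (so you do not even strictly need the contiguity-of-index-ranges observation, though it holds in your construction). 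The paper instead argues by a recursive extremal step: it picks the variable $x$ whose two occurrences are farthest apart (distance $r$), takes as a block the set $A$ of variables whose first occurrence falls between them (so $|A|\ge r/2$ after choosing the majority read), shows the window enclosing all occurrences of $A$ has length at most $2r$, erases everything else in that window, and recurses on the two sides, which yields the $1/3$ fraction by local counting. Your argument replaces this recursion with a single global structure: the overlap graph is an indifference graph, a greedy consecutive clique partition has no edges between cliques at distance $\ge 2$, and a residue class does the selection. This is cleaner (no window-length or erased-element bookkeeping) and in fact stronger than you claim: taking residues modulo $2$ rather than $3$ already guarantees $|a-b|\ge 2$ between chosen cliques, so your method gives $|X'|\ge s/2$, improving the constant (which would propagate to $3^{k^2}\to 2^{k^2}$ in \autoref{cor:k-per-read-regular}). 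The only cosmetic caveat is that the mod-$3$ split is unnecessarily lossy, but it still meets the stated bound.
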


\begin{proof}
We show how to erase the occurrences of (not too many) elements from $S$, such that the remaining sequence is 2-regularly-interleaving and maintains its per-read-monotonicity property.

First observe that if the subsequence $S^{(2)}$ of second occurrences is monotonically \emph{de}creasing, then, by \autoref{lem:k-reg-interleave-concat}, $S$ is already also 2-regularly-interleaving. In this case we have $S=(S^{(1)}, S^{(2)})$ and we can pick just one block, $X$, and satisfy the definition.

From now on we assume then that $S^{(2)}$ is monotonically \emph{in}creasing.
For every $z \in X$, denote by $d_z =  \occur{S}{2}{z} -  \occur{S}{1}{z}$ the distance between the first and the second occurrence of $z$ in $S$. Pick $x \in X$ such that $d_x$ is maximal and let $r:=d_x$. Among the $r$ occurrences between $\occur{S}{1}{x}$ and $\occur{S}{2}{x}$, there exist either $r/2$ first occurrences or $r/2$ second occurrences. Assume there are at least $r/2$ first occurrences (the other case is handled in an analogous way), and let $A$ be the set of variables (including $x$) whose first occurrence appears between the $\occur{S}{1}{x}$ and $\occur{S}{2}{x}$, so that $|A| \ge r/2$.

Since $S^{(2)}$ is monotonically increasing, for every $z \in A$ it holds that $\occur{S}{2}{z} > \occur{S}{2}{x}$. Let $y \in A$ be the element such that $\occur{S}{2}{y}$ is maximal. Observe that
\[
\occur{S}{2}{y} - \occur{S}{2}{x} \le \occur{S}{2}{y} - \occur{S}{1}{y} \le r,
\]
where the first inequality follows from the fact that $S$ is per-read-monotone and the second follows from the choice of $x$. Hence, it follows that
\begin{equation}
\label{eq:length-of-interval}
\occur{S}{2}{y} - \occur{S}{1}{x} \le 2r.
\end{equation}

We now erase from $S$ all the elements that appear in the interval $[ \occur{S}{1}{x} , \occur{S}{2}{y}]$ but do not appear in $A$.
Having done that, the subsequence in this interval satisfies the requirements of the lemma (one can relabel the elements if necessary in order to ensure contiguous indexing in this subsequence, and see also \autoref{fig:2-per-read-proof}). Furthermore, we have kept at least $|A| \ge r/2$ elements alive and erased, by \eqref{eq:length-of-interval}, at most $r$ elements.

\begin{figure}[h]
\begin{center}
\begin{tikzpicture}[framed]
\draw[ultra thick, draw=black!20] (0,0) -- (10,0);
\draw[ultra thick, draw=black] (2,0) -- (9,0);

\draw[thick] (2, -0.25) -- (2, 0.25);

\draw[thick] (6, -0.25) -- (6, 0.25);

\draw[draw=red] (3, -0.25) -- (3, 0.25);
\draw[draw=red] (4, -0.25) -- (4, 0.25);
\draw[draw=red] (4.5, -0.25) -- (4.5, 0.25);

\draw[draw=red] (7, -0.25) -- (7, 0.25);
\draw[draw=red] (7.5, -0.25) -- (7.5, 0.25);
\draw[draw=red] (9, -0.25) -- (9, 0.25);

\node at (2, -0.5) {$x$};
\node at (6, -0.5) {$x$};

\node at (4.5, -0.5) {$y$};
\node at (9, -0.5) {$y$};

\draw [decorate,decoration={brace,amplitude=10pt, mirror},yshift=-20pt]
(2,0) -- (4.5,0) node [black,midway,yshift=-0.8cm]
{\footnotesize 1st occurrences};

\draw [decorate,decoration={brace,amplitude=10pt, mirror},yshift=-20pt]
(6,0) -- (9,0) node [black,midway,yshift=-0.8cm]
{\footnotesize 2nd occurrences};

\end{tikzpicture}
\caption{Elements to be taken to $A$ are marked. All other elements in the black interval are discarded. The process then continues inductively on the gray subsequences.}
\label{fig:2-per-read-proof}
\end{center}
\end{figure}
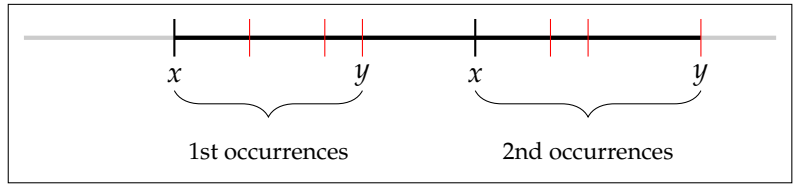

We continue recursively on the subsequences in both of the intervals $[1,\occur{S}{1}{x} - 1]$ and $[\occur{S}{2}{y} + 1, 2s]$. Observe that these intervals cannot share any element, as that would mean that the two occurrences of this element are of distance more than $r$ apart, which contradicts the choice of $x$. Hence, we may continue independently on both subintervals. By induction, the statement of the lemma follows.
\end{proof}

Viewed as an algorithmic process, the proof of \autoref{lem:2-per-read-regular} is a procedure that, given a per-read-monotone sequence $S$ over $X$, decides which elements of $X$ should be erased in order to be left with a 2-regular-interleaving sequence $S'=S|_{X'}$. It can also be noted that both properties of being per-read-monotone and being 2-regularly interleaving are downward-closed, in the sense that if we now take a subset $X'' \subseteq X'$ and look at $S''=S'|_{X''}$, it will maintain both properties. Hence, if we are given a read-$k$ per-read-monotone sequence $S$, by repeatedly applying the algorithmic process of \autoref{lem:2-per-read-regular} separately on each subsequence $S^{(i,j)}$ for $i\neq j \in [k]$ (maintaining a constant fraction of the elements on each application), we get the following corollary:

\begin{corollarywp}
\label{cor:k-per-read-regular}
Let $S$ be a read-$k$ per-read-monotone sequence over $X=\{x_1, \ldots, x_s\}$. Then there is a subset $X' \subseteq X$ with $|X'| \ge s/3^{k^2}$ such that the sequence $S' = S|_{X'}$ is per-read-monotone and $k$-regularly-interleaving.
\end{corollarywp}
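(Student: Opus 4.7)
The plan is to iterate over the $\binom{k}{2}$ unordered pairs $(i,j)$ of reads and, for each pair, apply Lemma~\ref{lem:2-per-read-regular} to the read-$2$ subsequence $S^{(i,j)}$ in order to enforce the $2$-regularly-interleaving property for that particular pair, while only throwing away a constant fraction of the surviving elements. Concretely, I would enumerate the pairs as $(i_1,j_1),\dots,(i_m,j_m)$ with $m=\binom{k}{2}$, set $X_0=X$, and inductively define $X_t\subseteq X_{t-1}$ by viewing $S|_{X_{t-1}}^{(i_t,j_t)}$ as a read-$2$ sequence over $X_{t-1}$ and letting $X_t$ be the subset produced by Lemma~\ref{lem:2-per-read-regular}. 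Since each step retains at least a $1/3$ fraction of the elements, we end up with $|X_m|\ge s/3^{m}\ge s/3^{k^2}$, and we set $X'=X_m$.

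For this to work I need two closure facts. First, the per-read-monotone property is downward closed: if $S^{(i)}$ is monotone over $X$, then $(S|_{Y})^{(i)}$ is its restriction to $Y$ and hence still monotone; so at every step $t$ the sequence $S|_{X_{t-1}}^{(i_t,j_t)}$ is per-read-monotone and Lemma~\ref{lem:2-per-read-regular} is applicable. Second, the $2$-regularly-interleaving property is downward closed as well: if a partition $\{X_i\}$ of $Y$ witnesses $2$-regular-interleaving for some read-$2$ sequence over $Y$, then for any $Z\subseteq Y$ the induced partition $\{X_i\cap Z\}$ (dropping empty parts) witnesses $2$-regular-interleaving for the restricted sequence, because deleting elements from a contiguous block of occurrences leaves the surviving occurrences contiguous. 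Together with the trivial identity $(S|_{Z})^{(i,j)}=(S^{(i,j)})|_{Z}$, this guarantees that the $2$-regular-interleaving property obtained at step $t$ for the pair $(i_t,j_t)$ is preserved through every subsequent restriction.

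Combining these observations, $S|_{X'}^{(i,j)}$ is $2$-regularly-interleaving for every pair $(i,j)\in[k]^2$ with $i\neq j$, so $S|_{X'}$ is $k$-regularly-interleaving by definition, and $S|_{X'}$ remains per-read-monotone. The size bound $|X'|\ge s/3^{k^2}$ follows from the $m\le k^2$ applications of Lemma~\ref{lem:2-per-read-regular}, each losing at most a factor of $3$.

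The only nontrivial point is verifying downward closure of the $2$-regularly-interleaving property, and more specifically that the partition obtained for one pair is not destroyed when later pairs are processed; this is exactly what the commutation $(S|_{Z})^{(i,j)}=(S^{(i,j)})|_{Z}$ buys us, and it is the main conceptual content beyond the read-$2$ case already handled by Lemma~\ref{lem:2-per-read-regular}.
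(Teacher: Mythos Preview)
Your proposal is correct and follows essentially the same approach as the paper: iterate the read-$2$ lemma (\autoref{lem:2-per-read-regular}) over all $\binom{k}{2}\le k^2$ pairs of reads, using the downward closure of both the per-read-monotone and the $2$-regularly-interleaving properties to ensure that earlier gains are not undone by later restrictions. Your write-up is in fact more explicit than the paper's, which simply states the downward closure and the iteration in the paragraph preceding the corollary; the commutation $(S|_{Z})^{(i,j)}=(S^{(i,j)})|_{Z}$ you isolate is exactly the point the paper leaves implicit.
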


\subsection{ROABPs for regularly interleaving sequences}
In this section we show that if a polynomial $f$ is computed by a small-width read-$k$ oblivious ABP $A$ such that the sequence $S$ of the reads in $A$ is per-read-monotone and $k$-regularly-interleaving, then $f$ can in fact also be computed by a small-width ROABP $A'$ (in the same order as $S^{(1)}$). We show this by proving that $A$ has the $k$-gap property with respect to that order, and then applying \autoref{lem:k-gap-to-roabp}.

\begin{lemma}
\label{lem:read-k-to-k-gap}
Let $f \in \F[x_1, \ldots, x_n]$ be computed by a read-$k$ oblivious ABP $A$ of width $w$, and let $S$ be the sequence of variables read by $A$. Suppose further that $S$ is per-read-monotone (with respect to the order $x_1 < x_2 < \cdots < x_n$) and $k$-regularly-interleaving. Then $A$ has the $k$-gap property.
\end{lemma}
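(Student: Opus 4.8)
The plan is to reduce the $k$-gap property to a purely combinatorial statement about the read sequence $S$ and then to prove that statement. Fix $i \in [n]$ and call a position of $S$ \emph{early} if the variable read there lies in $\exi$, and \emph{late} otherwise. Using \autoref{lem:obv-ABP}, write $f = (A_1 A_2 \cdots A_{kn})_{1,1}$ with the matrices ordered according to $S$. Substituting $\exi = \veca$ turns each matrix sitting on an early position into a constant matrix and leaves the matrices on late positions unchanged; if we now collapse each maximal run of consecutive late positions into a single matrix $M_r$ (which, having no early variables in it, does not depend on $\veca$) and each intervening block of consecutive early matrices into a single constant matrix $N_r(\veca)$, we obtain $f|_{\exi = \veca} = (N_1 M_1 N_2 M_2 \cdots)_{1,1}$. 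Since $S$ begins with $x_1 \in \exi$, this product begins with an $N$-block, and by inserting identity matrices we may assume it has exactly $k$ factors $M_r$, as in \autoref{def:k-gap}. Hence it suffices to prove that, for every $i$, the late positions of $S$ form at most $k$ maximal runs.

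To count these runs I would first apply \autoref{lem:k-reg-interleave-concat} to decompose $S$ into contiguous pieces $T_1, \ldots, T_t$, inside each of which all reads are monotone in a single common direction, and such that each of the $k$ reads lies entirely within one piece. Runs of late positions can only be merged at the seams between consecutive pieces, so it is enough to bound, for a piece $T_m$ containing $k_m$ reads, the number of late runs inside $T_m$ by $k_m$, and then sum using $\sum_m k_m = k$. Suppose $T_m$ is increasing (the decreasing case is mirror-symmetric). For a read $j$ inside $T_m$ the occurrence $\occur{S}{j}{x_{i+1}}$ is its first late position, and the key claim is that \emph{every} late run inside $T_m$ starts at $\occur{S}{j}{x_{i+1}}$ for some $j$ — this gives the bound immediately. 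If some late run started instead at an occurrence $\occur{S}{j}{x_\mu}$ with $\mu > i+1$, then the position immediately before it is an early occurrence $\occur{S}{j'}{x_{\mu'}}$ with $\mu' \le i$; because read $j$ is increasing, this forces $j' \neq j$ and wedges $\occur{S}{j'}{x_{\mu'}}$ strictly between $\occur{S}{j}{x_{\mu-1}}$ and $\occur{S}{j}{x_\mu}$, with $\mu' \le \mu - 2$.

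The heart of the argument, and the step I expect to be most delicate, is deriving a contradiction from this configuration using the $k$-regularly-interleaving hypothesis. Restricting $S$ to the two reads $j, j'$ yields the per-read-monotone, $2$-regularly-interleaving sequence $S^{(j,j')}$; per-read-monotonicity forces its blocks to be consecutive intervals of variable indices $Y_1 < Y_2 < \cdots$, appearing in the order ``all occurrences from the lower-numbered read in $Y_1$, then all occurrences from the higher-numbered read in $Y_1$, then the lower-numbered read in $Y_2$, and so on''. In such a sequence, an occurrence of $x_{\mu'}$ can immediately precede an occurrence of $x_\mu$ from the other read only at one of the two types of block boundary, and in each case the index arithmetic forces $\mu \le \mu' + 1$, contradicting $\mu' \le \mu - 2$. (For a decreasing $T_m$ one runs the same argument with the intervals listed in decreasing order, and concludes that every late run \emph{ends} at some $\occur{S}{j}{x_{i+1}}$.) Once this is in place, summing over the pieces $T_m$ bounds the number of late runs of $S$ by $k$, which — through the matrix grouping above — establishes the $k$-gap property with respect to every prefix $\exi$, as required. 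Everything outside the index-chasing inside $S^{(j,j')}$ (the matrix grouping, the decomposition via \autoref{lem:k-reg-interleave-concat}, the summation) I expect to be routine.
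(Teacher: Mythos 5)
Your proposal is correct and follows essentially the same route as the paper: reduce to the all\hyphenchar\font=`\- increasing case via \autoref{lem:k-reg-interleave-concat}, and then show that any early--late adjacency in $S$ forces consecutive variable indices by examining the block structure of the $2$-regularly-interleaving restriction $S^{(j,j')}$ — this is exactly the paper's \autoref{lem:k-gap-interface}, which you have inlined. The only (cosmetic) difference is the counting: the paper charges each interface to a distinct occurrence of $x_i$, while you charge each late run to the unique occurrence of $x_{i+1}$ within a read.
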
 

In the proof of \autoref{lem:read-k-to-k-gap} we will use the following lemma in order to bound the number of ``gaps'' one obtains for any prefix.

\begin{lemma}
\label{lem:k-gap-interface}
Let $S$ be a read-$k$, per-read-monotone, $k$-regularly-interleaving sequence over $\{x_1, \ldots, x_n\}$. Suppose that for every $i \in [k]$, $S^{(i)}$ is monotonically increasing. Let $\ell$ be an integer and suppose that $\var{S}{\ell}=(x_i, c)$\footnote{Recall that this notation means that the $\ell$-th element in $S$ is $x$, and this is its $c$-th occurrence.} and $\var{S}{\ell+1} = (x_j, d)$ with $j>i$. Then, it must be the case that $j=i+1$.
\end{lemma}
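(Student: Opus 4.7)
The plan is to assume for contradiction that $j \ge i+2$, consider the subsequence $S^{(c,d)}$ of $c$-th and $d$-th occurrences, and use the interplay between per-read monotonicity and the block structure guaranteed by the $k$-regularly-interleaving property. Observe first that positions $\ell$ and $\ell+1$ lie in $S^{(c,d)}$ by definition, and since nothing in $S^{(c,d)}$ can lie strictly between them in $S$, they are adjacent in $S^{(c,d)}$ as well.

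If $c=d$, both $x_i$ and $x_j$ appear in the single increasing read $S^{(c)}$, and monotonicity of $S^{(c)}$ would force the $c$-th occurrence of $x_{i+1}$ to land strictly between positions $\ell$ and $\ell+1$ whenever $j \ge i+2$; no such position exists, so $j = i+1$ in this case. If $c \ne d$, invoke $k$-regular-interleaving: $S^{(c,d)}$ admits a block partition $X_1 \sqcup \cdots \sqcup X_t$ of $X$ in which, for each $a$, the ``first'' occurrences of $X_a$ (those of $\min(c,d)$) appear consecutively, immediately followed by the ``second'' occurrences of $X_a$ (those of $\max(c,d)$).

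The subcase $c < d$ is ruled out directly: a $c$-th occurrence followed in $S^{(c,d)}$ by a $d$-th occurrence can only happen within a single block (across a block boundary, one passes from a $d$-th occurrence to a $c$-th occurrence). Hence $x_i, x_j \in X_a$ for some $a$, and monotonicity of $S^{(c)}$ and $S^{(d)}$ pins down $i = \max\{m : x_m \in X_a\}$ and $j = \min\{m : x_m \in X_a\}$, contradicting $j > i$. Thus under $j > i$ we must be in the subcase $c > d$, where a $c$-th occurrence is followed by a $d$-th occurrence precisely across a block boundary: $x_i$ is the last (by $c$-th position) element of some $X_a$, and $x_j$ is the first (by $d$-th position) element of $X_{a+1}$. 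So $i = \max X_a$ and $j = \min X_{a+1}$.

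It remains to locate $x_{i+1}$ under the assumption $j \ge i+2$ and find a contradiction. Monotonicity of $S^{(d)}$ together with $i+1 < j$ forces the $d$-th occurrence of $x_{i+1}$ to precede that of $x_j$ in $S$; since the latter is the very first $d$-th occurrence in the $X_{a+1}$-block, $x_{i+1}$ must belong to some block $X_b$ with $b \le a$. On the other hand, monotonicity of $S^{(c)}$ forces the $c$-th occurrence of $x_{i+1}$ to come strictly after that of $x_i$; since the latter is the very last $c$-th occurrence in the $X_a$-block, $x_{i+1}$ must belong to some block $X_b$ with $b > a$. These two conclusions are incompatible, yielding $j = i+1$. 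The main obstacle is bookkeeping of the two possible orientations of the $S^{(c,d)}$ block pattern depending on whether $c < d$ or $c > d$; once that is sorted out, the contradiction follows cleanly from monotonicity applied in each read separately.
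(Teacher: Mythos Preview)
Your proof is correct and follows essentially the same route as the paper: reduce to the read-$2$ subsequence $S^{(c,d)}$, rule out $c<d$, and then use the block structure of the $2$-regularly-interleaving property in the case $c>d$. The only minor differences are stylistic: the paper dispatches the $c<d$ subcase with a one-line monotonicity argument (no block structure needed) and then asserts that the blocks in a per-read-monotone $2$-regularly-interleaving sequence are contiguous intervals, concluding $j=i+1$ directly; you instead use the block structure in both subcases and, in lieu of the contiguity assertion, supply the explicit contradiction on the location of $x_{i+1}$, which amounts to proving that contiguity on the fly.
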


\begin{proof}
First observe that if $c=d$ the claim is true by monotonicity of $S^{(c)}$. If $c \neq d$, consider the subsequence $S^{(c,d)}$. Since $S$ is $k$-regularly interleaving, $S^{(c,d)}$ is 2-regularly interleaving, and in this sequence it also holds that $x_j$ immediately follows $x_i$. Furthermore, we have that $d < c$, as otherwise $\occur{S}{c}{x_j} < \occur{S}{c}{x_i}$, which contradicts the monotonicity of $S^{(c)}$, as we assumed that $j>i$. Hence, in $S^{(c,d)}$ $d$ plays the role of the {\em first} read, and $c$ plays the role if the {\em second} read. To avoid extraneous terminology let us assume for now that $d=1$ and $c=2$.

In a 2-regularly-interleaving sequence which is also per-read-monotone, the blocks in \autoref{def:regularly-interleaving} must be contiguous sequences of variables 
\[
\{x_1, x_2, \ldots, x_{i_1}\}, \{x_{i_1+1}, x_{i_1+2}, \ldots, x_{i_2}\}, ... \{x_{i_{t-1}+1}, \ldots, x_n\}.
\]

Suppose the blocks are indexed by $\{X_1, \ldots, X_t\}$, such that $x_i$ belongs to $X_{b_i}$ and $x_j$ belongs to $X_{b_j}$.

Recall that our assumption is that the first read of $x_j$ immediately follows the second read of $x_i$. Hence, after the block $X_{b_i}$ is read for the second time, we need to read the next block $X_{b_i+1}$ for the first time. This blocks contains $x_j$, so we get that $b_j=b_i+1$ and also that $x_j$ is the smallest element in that block, so $j=i+1$. 
\end{proof}

We are now ready to prove \autoref{lem:read-k-to-k-gap}.

\begin{proof}[Proof of \autoref{lem:read-k-to-k-gap}]
Let $S$ be the sequence of reads in $A$. Let $i \in [n]$. Consider any fixing $\exi = \veca$ for $\veca \in \F^i$. By plugging in the values to $A$, we can always write
\begin{align}
\label{eq:t-gap}
f|_{\exi = \veca} &= \left( \vphantom{M^{1,1}} N_1(\veca) \cdot M_1(x_{i+1},\ldots,x_n) \cdot N_2(\veca) \cdot M_2(x_{i+1},\ldots, x_n)\right. \nonumber\\
&\cdots \left. \vphantom{M^{1,1}}  N_t(\veca) \cdot  M_{t}(x_{i+1},\ldots,x_n) \right)_{1,1}.
\end{align}
for some integer $t$, where for each $\sigma \in [t]$, $N_\sigma$ is a product of univariate matrices of layers that read $\{x_1, \ldots, x_i\}$, and $M_\sigma$ is a product of univariate matrices of layers that read $\{x_{i+1}, \ldots, x_n\}$. We wish to show that $t$ can be at most $k$.

For each pair $(N_\sigma, M_\sigma)$ define their {\em interface} to be the pair $(\var{S}{\ell_\sigma}, \var{S}{\ell_\sigma+1})$, where $\ell_\sigma$ is the last index of a layer that participates in the product that defines $N_\sigma$, and thus $\ell_\sigma+1$ is the first index of a layer that participates in the product that defines $M_\sigma$ (see \autoref{fig:interface} for an illustration).

\begin{figure}[h]
\begin{center}
\begin{tikzpicture}

\node at (6,1.5) {$f|_{x_1=a_1, x_2=a_2} = \left( \vphantom{M^{1,2}} N_1(a_1, a_2) M_1(x_3, x_4) N_2 (a_1, a_2) M_2(x_3, x_4) \right)_{(1,1)}$};

\fill[red!20!white] (0,0) rectangle (2,1);
\fill[red!20!white] (4,0) rectangle (8,1);
\draw[ultra thick] (0,0) rectangle(12,1);

\foreach \i in {1,...,11} {
\draw (\i,0) -- (\i,1);
}

\node at (0.53,0.5) {\Large $x_1$};
\node at (1.53,0.5) {\Large $x_2$};
\node at (2.53,0.5) {\Large $x_3$};
\node at (3.53,0.5) {\Large $x_4$};

\node at (4.53,0.5) {\Large $x_1$};
\node at (5.53,0.5) {\Large $x_2$};
\node at (6.53,0.5) {\Large $x_1$};
\node at (7.53,0.5) {\Large $x_2$};

\node at (8.53,0.5) {\Large $x_3$};
\node at (9.53,0.5) {\Large $x_4$};
\node at (10.53,0.5) {\Large $x_3$};
\node at (11.53,0.5) {\Large $x_4$};

\node at (2, -0.3) {\small $\uparrow$};
\node at (2, -0.7) {\small $(N_1, M_1)$ interface};

\node at (8, -0.3) {\small $\uparrow$};
\node at (8, -0.7) {\small $(N_2, M_2)$ interface};

\end{tikzpicture}
\caption{The ABP reads the variable in the order that appears in the box. The locations of both interfaces are marked.}
\label{fig:interface}
\end{center}
\end{figure}
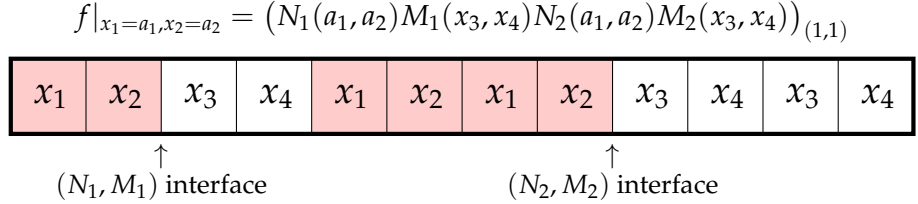

We first argue that we can assume, without loss of generality, that for every $j \in [k]$, $S^{(j)}$ is increasing. This is because \autoref{lem:k-reg-interleave-concat} implies that 
the sequence $S$ can be thought of as a concatenation of subsequences $(T_1, \ldots , T_t)$ that each have this property, and then handle each subsequence separately (one would of course need to prove an analog to \autoref{lem:k-gap-interface} and to some of the arguments we give further down the proof for the case where the subsequences in $T_m$ are decreasing. The analogs are rather straightforward and can be obtained by ``reversing'' the decreasing sequences. We leave the details to the reader). Furthermore, by \autoref{item:border} in \autoref{lem:k-reg-interleave-concat}, the ``border'' between $T_m$ and $T_{m+1}$ (that is, the last element of $T_m$ and the first in $T_{m+1}$) is always marked by either two occurrences of $x_1$ or of $x_n$.  Since $x_1$ is read at most $k$ times, it is immediate that any setting $x_1=a_1$ for $a_1 \in \F$ admits a representation of the form \eqref{eq:t-gap} for $t \le k$. The case where all variables $x_1, x_2, \ldots, x_n$ are fixed is trivial to verify. And for any other prefix $\exi$ these ``borders'' cannot increase the number of gaps in \eqref{eq:t-gap}.

From now on we assume then that for every $j \in [k]$, $S^{(j)}$ is increasing, and consider once again the representation \eqref{eq:t-gap}. We wish to show there cannot be too many interfaces. For every interface $\sigma \in [t]$, let us denote $\var{S}{\ell_\sigma} = (x_{\sigma_1}, c)$ and $\var{S}{\ell_\sigma+1} = (x_{\sigma_2}, d)$.

Since this is an interface, it must be the case that $\sigma_1 \le i < \sigma_2$. By \autoref{lem:k-gap-interface}, we must have $\sigma_2=\sigma_1+1$, and thus $\sigma_1 = i$ and $\sigma_2=i+1$. Hence, we can map each interface to a unique occurrence of $x_i$. This immediately implies there can be at most $k$ interfaces, and hence at most $k$ gaps.
\end{proof}

It now immediately follows that any read-$k$ oblivious ABP the reads the variables in a per-read-monotone and $k$-regularly-interleaving fashion can be simulated by a small ROABP. We record this fact in the following corollary.

\begin{corollary}
\label{cor:read-k-to-roabp}
Let $f \in \F[x_1, \ldots, x_n]$ be computed by a read-$k$ oblivious ABP $A$ of width $w$, and let $S$ be the sequence of variables read by $A$. Suppose further that $S$ is per-read-monotone (with respect to the order $x_1 < x_2 < \cdots < x_n$) and $k$-regularly-interleaving.
Then for any $i \in [n]$, $\evalDim_{\exi,\overline{\exi}} (f) \le w^{2k}$. In particular, $f$ is computed by a ROABP of width at most $w^{2k}$ in the variable order $x_1, x_2, \ldots, x_n$.
\end{corollary}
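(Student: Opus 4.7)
The plan is essentially to chain together the two structural results just proved: \autoref{lem:read-k-to-k-gap} and \autoref{lem:k-gap-to-roabp}, together with the characterization \autoref{thm:eval-dim-roabp}. Since all the hard technical work has been done (in setting up the $k$-gap property and verifying it under the per-read-monotone plus $k$-regularly-interleaving hypothesis), the corollary should be almost immediate.

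First, I would apply \autoref{lem:read-k-to-k-gap} directly to the ABP $A$: its hypotheses match ours verbatim (read-$k$ oblivious of width $w$, per-read-monotone, $k$-regularly-interleaving), so we conclude that $A$ has the $k$-gap property. Recalling \autoref{def:k-gap}, this means that for every prefix $\exi$ and every assignment $\exi = \veca$, the restricted polynomial $f|_{\exi = \veca}$ can be written as an iterated product of at most $k$ ``free'' matrix factors $M_1,\ldots,M_k$ in $\F[x_{i+1},\ldots,x_n]^{w \times w}$ (depending only on $i$, not on $\veca$) interleaved with at most $k$ constant matrices $N_1(\veca),\ldots,N_k(\veca) \in \F^{w\times w}$.

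Next, I would invoke the argument of \autoref{lem:k-gap-to-roabp} (which is identical to the proof of \autoref{lem:k-pass-to-roabp} up to renaming) to conclude that for every $i \in [n]$, $\evalDim_{\exi,\overline{\exi}}(f) \le w^{2k}$. The reasoning is that expanding each $N_j(\veca)$ in a fixed basis $\{B_1,\ldots,B_{w^2}\}$ of $\F^{w\times w}$ shows that every partial evaluation $f|_{\exi=\veca}$ lies in the span of the $w^{2k}$ polynomials obtained by substituting each $N_j$ with a basis element $B_{\sigma_j}$ in the $k$-gap representation; this span is independent of $\veca$ and has dimension at most $w^{2k}$, giving the evaluation-dimension bound.

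Finally, having established $\evalDim_{\exi,\overline{\exi}}(f) \le w^{2k}$ for every $i$, the ``if'' direction of \autoref{thm:eval-dim-roabp} immediately produces a ROABP of width $w^{2k}$ that computes $f$ in the order $x_1,\ldots,x_n$, completing the proof. I do not anticipate any real obstacle here: the whole content of the corollary is packaged in the two previous lemmas, and the only task is to verify that their hypotheses and conclusions glue together correctly, which they do without modification.
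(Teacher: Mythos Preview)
Your proposal is correct and matches the paper's own proof, which simply states that the corollary is immediate from \autoref{lem:read-k-to-k-gap} and \autoref{lem:k-gap-to-roabp}. Your extra unpacking of the evaluation-dimension argument and the appeal to \autoref{thm:eval-dim-roabp} is precisely how those two lemmas combine, so there is nothing to add.
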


\begin{proof}
Immediate from \autoref{lem:read-k-to-k-gap} and  \autoref{lem:k-gap-to-roabp}.
\end{proof}

\subsection{Identity testing for read-$k$ oblivious ABPs}

In this section we give our white-box identity testing algorithm for read-$k$ oblivious ABPs. Before giving the proof, let us first give an overview of the algorithm for the slightly simpler read-$2$ case.

Given a read-$2$ oblivious ABP $A$ with read sequence $S$ which computes a polynomial $f \in \F[x_1, \ldots, x_n]$, \autoref{lem:2-per-read-regular} shows how to find a read-$2$ subsequence on a set $\vecy=\{y_1, \ldots, y_{\sqrt{n}}\}$ of roughly $\sqrt{n}$ variables, such that when we think of $f$ as a polynomial in the $\vecy$ variables over the field $\F(\overline{\vecy})$, it has a small ROABP. We can then use a hitting set for ROABPs in order to find an assignment (from $\F$) to the $\vecy$ variables that keeps the polynomial non-zero. Having done that, we are left with a non-zero polynomial over a smaller set of $n-\sqrt{n}$ variables, which is again computed by a read-$2$ oblivious ABP, so we may repeat this process. After at most $O(\sqrt{n})$ iterations we find an assignment for all the variables that keeps the polynomial non-zero. We note that a very similar ``hybrid argument'' that uses a hitting set for ROABPs appears both in \cite{agks15} and \cite{osv15}.

The argument for read-$k$ is identical, apart from the loss in the parameters incurred by \autoref{cor:k-per-read-regular}.

\begin{theorem}
\label{thm:read-k-pit}
There is a white-box polynomial identity test for read-$k$ oblivious ABPs of width $w$ and degree $d$ on $n$ variables that runs in time
$\poly(n,w,d)^{n^{1-1/2^{k-1}} \exp(k^2) \polylog(n)}$.
Furthermore, given only the order in which the variables are read, we can construct a hitting set for such ABPs that read their variables in this order, of size $\poly(n,w,d)^{n^{1-1/2^{k-1}} \exp(k^2) \polylog(n)}$.
\end{theorem}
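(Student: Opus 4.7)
The plan is to iterate a ``hybrid argument'': in each round, identify a sizable subset of variables on which the polynomial (viewed over the rational function field in the remaining variables) is computed by a small ROABP, apply the known ROABP hitting set, and recurse on the remaining variables. This is analogous in spirit to the strategies of \cite{agks15,osv15}.

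More concretely, let $A$ be a read-$k$ oblivious ABP of width $w$ and degree $d$ computing a polynomial $f$, with read sequence $S$ over $\vecx=\{x_1,\ldots,x_n\}$. First, I would apply \autoref{lem:k-per-read-mon} to $S$ to obtain a subset $X_1 \subseteq \vecx$ of size at least $n^{1/2^{k-1}}$ such that $S|_{X_1}$ is per-read-monotone. Next, \autoref{cor:k-per-read-regular} refines this to a subset $X_2 \subseteq X_1$ of size at least $n^{1/2^{k-1}}/3^{k^2}$ such that $S|_{X_2}$ is both per-read-monotone and $k$-regularly-interleaving. By \autoref{cor:read-k-to-roabp}, the polynomial $f$, viewed as an element of $\F(\vecx \setminus X_2)[X_2]$, is then computed by a ROABP of width at most $w^{2k}$ (in the variable order $S^{(1)}|_{X_2}$). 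Since this ROABP is nonzero iff $f$ is nonzero, \autoref{thm:hitting-set-ROABP} gives a hitting set $\cH_1 \subseteq \F^{|X_2|}$ of size $(nw^{2k}d)^{O(\log n)}$ such that some $\veca \in \cH_1$ keeps $f|_{\vecx_{X_2}=\veca}$ nonzero. All this is done from only the read sequence $S$ and the parameters $w, d$, so white-box access to just the order suffices.

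After this substitution, $f|_{\vecx_{X_2}=\veca}$ is a nonzero polynomial in $n - |X_2|$ variables computed by a read-$k$ oblivious ABP of width $w$ (fixing variables in the original ABP just replaces the corresponding layers with constant matrices, and the remaining layers still read each surviving variable at most $k$ times). We now recurse: find $X_2' \subseteq \vecx \setminus X_2$ by the same procedure, obtain a hitting set $\cH_2$, and so on. Since each round removes at least $(n')^{1/2^{k-1}}/3^{k^2}$ variables from a current pool of $n'$ variables, an easy calculation shows the process terminates within $T = O(n^{1-1/2^{k-1}} \cdot 3^{k^2})$ rounds. The final hitting set is the Cartesian product $\cH_1^{X_2} \times \cH_2^{X_2'} \times \cdots$, which has size
\[
\prod_{i=1}^T (nw^{2k}d)^{O(\log n)} \;=\; (nwd)^{O(k \cdot T \cdot \log n)} \;=\; (nwd)^{\tilde{O}(n^{1-1/2^{k-1}})\cdot \exp(k^2)},
\]
matching the claimed bound. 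The algorithm evaluates $f$ at each point of this hitting set; in white-box we can compute $f$ on a point in $\poly(n,w,d)$ time via the ABP.

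The main technical obstacle has already been absorbed into the earlier lemmas—namely \autoref{cor:k-per-read-regular} (extracting a per-read-monotone, $k$-regularly-interleaving subsequence) and \autoref{cor:read-k-to-roabp} (showing this structural property gives a small ROABP in a given order). The only remaining subtlety for the plan is verifying that after restricting the $X_2$ variables to field constants, the resulting object is still a read-$k$ oblivious ABP on the surviving variables (so that the argument can be iterated cleanly) and that the hitting set construction requires only knowledge of the read order of the original ABP, not the full wiring. Both follow immediately from the definitions, since the choice of $X_2$ at each stage depends only on the read sequence (whose induced subsequence on the surviving variables is determined by the original order), and the ROABP hitting set of \autoref{thm:hitting-set-ROABP} is order-oblivious.
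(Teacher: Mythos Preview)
Your proposal is correct and follows essentially the same approach as the paper's proof: iteratively extract a per-read-monotone, $k$-regularly-interleaving subset via \autoref{lem:k-per-read-mon} and \autoref{cor:k-per-read-regular}, hit the resulting ROABP (from \autoref{cor:read-k-to-roabp}) using \autoref{thm:hitting-set-ROABP}, and take the Cartesian product over all rounds. The only difference is that where you defer the iteration-count bound to ``an easy calculation,'' the paper carries it out explicitly via a potential-function argument (\autoref{lem:iterations-inequality}), showing that $\Phi(m)=m^{1-p}$ drops by at least $(1-p)/3^{k^2}$ per round; your asserted bound $T=O(n^{1-1/2^{k-1}}\cdot 3^{k^2})$ is exactly what that lemma yields.
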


We begin with a technical lemma which we use to bound the number of iterations of the above process. 

\begin{lemma}
\label{lem:iterations-inequality}
Let $p$ be a real number such that $0<p<1$ and $r$ be a positive integer. For any $n \in \N$,
\[
n^{1-p} - (n-n^p/r)^{1-p} \ge (1-p)/r.
\]
\end{lemma}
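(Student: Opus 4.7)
My plan is to interpret the left-hand side as an increment of the function $g(x) = x^{1-p}$ and apply the mean value theorem. Setting $a = n - n^p/r$ and $b = n$ (note $a \ge 0$ since $n^p \le n$ for $n \ge 1$ and $r \ge 1$), we have $b - a = n^p/r$, and we wish to lower bound $g(b) - g(a)$.

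By the mean value theorem, there exists $\xi \in (a, b)$ such that
\[
g(b) - g(a) \;=\; (b-a)\cdot g'(\xi) \;=\; \frac{n^p}{r}\cdot (1-p)\,\xi^{-p}.
\]
Since $p > 0$, the derivative $g'(x) = (1-p)x^{-p}$ is a positive and \emph{monotonically decreasing} function on $(0,\infty)$. In particular, because $\xi < b = n$, we have
\[
g'(\xi) \;\ge\; g'(n) \;=\; (1-p)\,n^{-p}.
\]
Substituting this lower bound yields
\[
n^{1-p} - (n - n^p/r)^{1-p} \;\ge\; \frac{n^p}{r}\cdot (1-p)\,n^{-p} \;=\; \frac{1-p}{r},
\]
which is exactly the desired inequality.

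I do not foresee any serious obstacle: the only sanity check is that $a = n - n^p/r \ge 0$ so that $a^{1-p}$ is well defined, which follows from $n \ge n^p$ for $n \ge 1$ and $r \ge 1$ (the edge case $n = 1$, $r = 1$ gives $a = 0$ and both sides are consistent). The argument is a one-line application of MVT combined with the monotonicity of $g'$, so no further technical machinery is needed.
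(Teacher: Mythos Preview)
Your proof is correct, and it takes a genuinely different and more elementary route than the paper's. The paper defines the auxiliary function $f(x) = x^{1-p} - (x - x^p/r)^{1-p}$, shows $f'(x) \le 0$ by reducing to the Bernoulli-type inequality $(1-y)^s \le 1 - sy$, and then computes $\lim_{x\to\infty} f(x) = (1-p)/r$ via L'H\^{o}pital's rule; the conclusion follows since a non-increasing function dominates its limit. You instead apply the mean value theorem directly to $g(x) = x^{1-p}$ on the interval $[n - n^p/r,\, n]$ and use that $g'(x) = (1-p)x^{-p}$ is decreasing, so the secant slope is at least $g'(n)$. Both arguments ultimately rest on the concavity of $x \mapsto x^{1-p}$ (the paper via Bernoulli, you via the monotonicity of $g'$), but your formulation avoids the derivative computation for $f$ and the limit calculation entirely, yielding a one-line proof where the paper spends a page.
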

We defer the elementary proof of \autoref{lem:iterations-inequality} to \autoref{app:iterations-inequality}.

\medskip

Our PIT algorithm is presented in \autoref{alg:k-read-pit}.

\begin{algorithm}
  \caption{: PIT for read-$k$ oblivious ABPs}
  \label{alg:k-read-pit}
\begin{algorithmic}[1]
  \Require{a read-$k$ oblivious ABP $A$ computing a polynomial $f \in \F[x_1, \ldots, x_n]$.}
  \State{$\vecx=\{x_1, \ldots, x_n\}$, $i=1$}
  \While{$\vecx \neq \emptyset$}
  \State \parbox[t] {\dimexpr\linewidth-\algorithmicindent}
  {Pick a subset $\vecy_i \subseteq \vecx$ of size at least $|\vecx|^{1-1/2^{k-1}}/3^{k^2}$, such that the subsequence that reads only the $\vecy_i$ variables is per-read-monotone and $k$-regularly-interleaving (such a subset exists by \autoref{lem:k-per-read-mon} and \autoref{cor:k-per-read-regular}).\strut}
  \State \parbox[t] {\dimexpr\linewidth-\algorithmicindent}   
 {Construct a set $\cH_i \subseteq \F^{|\vecy_i|}$ of size $(nw^{2k}d)^{O(\log n)}$ that hits ROABPs of width $w^{2k}$ in the $\vecy_i$ variables, using \autoref{thm:hitting-set-ROABP}.\strut}
 \State{$\vecx \leftarrow \vecx \setminus \vecy_i$, $i\leftarrow i+1$}
  \EndWhile
  \State{{\bf return} the set $\cH = \cH_1^{\vecy_1} \times \cdots \times \cH_t^{\vecy_t} $ (where $t$ is the number of iterations of the loop).}
\end{algorithmic}
\end{algorithm}

We now prove \autoref{thm:read-k-pit}.

\begin{proof}[Proof of \autoref{thm:read-k-pit}]
Consider \autoref{alg:k-read-pit}. We first show that the set $\cH$ it returns hits $A$, and then we bound the size of $\cH$.

By \autoref{cor:read-k-to-roabp}, the polynomial $f$ in the $\vecy_1$ variables is computed by a width-$w^{2k}$ ROABP over the field $\F(\overline{\vecy_1})$. Hence, by \autoref{thm:hitting-set-ROABP}, there exists $\veca_1 \in \cH_1$ such that $f(\veca_1)$ is non-zero over $\F(\overline{\vecy_1})$. Similarly, we can now find $\veca_2 \in \cH_2$ and assign it to the $\vecy_2$ variables and keep the polynomial non-zero, etc.\ all the way up to $\veca_t$. It follows that $(\veca_1, \ldots, \veca_t)$ is an assignment from $\F$ to all the variables such that $f(\veca_1, \ldots, \veca_t)$ is non-zero, as required.

Furthermore, $|\cH_i| = (nw^{2k}d)^{O(\log n)}$, so, $|\cH| = ((nw^{2k}d)^{O(\log n)})^t$. We now bound the number of iterations $t$. Let $T(n)$ denote the number of iterations needed for $n$ variables. We show, by induction on $n$, that
\[
T(n) = c n^{1-1/2^{k-1}},
\]
(for some $c=c(k)$ which we will set in a moment). This will imply the desired bound on $\cH$.

Set $p = 1/2^{k-1}$. After the first iteration, the number of variables we are left with is $n':=n-n^{p}/3^{k^2}$ variables. By the induction hypothesis, we may assume that $T(n') \le c \cdot (n')^{1-p}$. Hence
\[
T(n) \le 1+T(n') \le 1+ c (n')^{1-p} = 1 + c \cdot \left( n - \frac{n^p}{3^{k^2}} \right) ^ {1-p}
\]
and we wish to show that
\[
1 + c \left( n - \frac{n^p}{3^{k^2}} \right) ^ {1-p} \le  c n^{1-p}.
\]
This is equivalent to
\[
\frac{1}{c} \le n^{1-p} - \left( n - \frac{n^p}{3^{k^2}} \right) ^ {1-p},
\]
which is satisfied, by \autoref{lem:iterations-inequality}, as long as
\[
\frac{1-p}{3^{k^2}} \ge \frac{1}{c},
\]
which we can ensure by picking $c=2\cdot 3^{k^2}$ (recall that $p \le 1/2)$.

Finally, since finding the set $\vecy_i$ on each iteration can be done in polynomial time, the running time of the algorithm is dominated by the time required to construct $\cH$, which is $\poly(|\cH|)$.
\end{proof}

\section{Conclusions and Open Problems}\label{sec:conclusions}

In this work, we have obtained the first non-trivial lower bounds and identity testing algorithms for read-$k$ oblivious ABPs. We briefly mention some directions that we find worth pursuing for future research.

The most natural open problem we pose is designing an identity testing algorithm for read-$k$ oblivious ABPs with better running time than the algorithm we presented in this paper. Since for ROABPs (the $k=1$ case) there exist a white-box polynomial time and black-box quasipolynomial-time algorithms, it seems reasonable to hope that the deterioration in the parameters would not be as sharp when $k>1$ (the flip side of this argument, however, is the relative lack of progress in the analogous question in the boolean domain).

Another open problem is obtaining a complete black-box test for read-$k$ oblivious ABPs, in any variable order (that is, without knowing the order in which the variable appear). As we mentioned, for ROABPs there exist a black-box hitting set that works for any variable order \cite{agks15}, whose size is essentially the same as that of the hitting set that was obtained earlier for the known order case \cite{FS13}. In our construction, we need to know the order so that we can pick the per-read-monotone and $k$-regularly-interleaving sequences to which we assign the hitting sets for ROABPs, and simply ``guessing'' those sets would require exponential time. Still, given the progress in obtaining hitting sets in any order for ROABPs, it might be the case that such a construction could follow from our strategy, even using known techniques.

Finally, we turn back to boolean complexity, and ask whether our ideas and techniques can be adapted to attack the problem of constructing pseudorandom generators for read-$k$ oblivious {\em boolean} branching program with sublinear seed length.

\bibliographystyle{customurlbst/alphaurlpp} \bibliography{references}

\appendix

\section{Proof of Lemma \ref{lem:iterations-inequality}}
\label{app:iterations-inequality}

For convenience, let us first recall the statement of the lemma.

\begin{lemma}[\autoref{lem:iterations-inequality}, restated]
Let $p$ be a real number such that $0<p<1$ and $r$ be a positive integer. For any $n \in \N$,
\[
n^{1-p} - (n-n^p/r)^{1-p} \ge (1-p)/r.
\]
\end{lemma}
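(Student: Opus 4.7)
The plan is to reduce this to a one-line application of the mean value theorem (or equivalently, the tangent-line inequality for concave functions) applied to $f(x) = x^{1-p}$ on the interval $[n - n^p/r,\, n]$. Note that $f'(x) = (1-p)x^{-p}$ is well-defined for $x > 0$, and since $0 < 1-p < 1$, the function $f$ is concave on $(0,\infty)$.

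Before invoking concavity, I would first check that the interval is nonempty and contained in $(0,\infty)$. This amounts to verifying $n^p/r \le n$, i.e., $n^{p-1} \le r$. Since $n \ge 1$ and $p - 1 < 0$, we have $n^{p-1} \le 1 \le r$, so the interval is well-defined. (The degenerate case $n - n^p/r = 0$, which happens only when $n = r = 1$, can be checked directly: the LHS becomes $1 \ge 1 - p$.)

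For concave $f$ and $b > a > 0$, the tangent-line inequality at $b$ gives
\[
f(b) - f(a) \ge f'(b)\,(b - a).
\]
Setting $b = n$ and $a = n - n^p/r$, this becomes
\[
n^{1-p} - (n - n^p/r)^{1-p} \;\ge\; (1-p)\, n^{-p} \cdot \frac{n^p}{r} \;=\; \frac{1-p}{r},
\]
which is exactly the desired inequality.

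I do not expect a genuine obstacle here; the only thing to be careful about is the domain condition $n - n^p/r \ge 0$, which as noted follows from $n \in \mathbb{N}$ and $r \ge 1$. Alternatively, one could present the same argument via the MVT: there exists $\xi \in [n - n^p/r,\, n]$ with $n^{1-p} - (n - n^p/r)^{1-p} = (1-p)\xi^{-p}\cdot (n^p/r)$, and then use $\xi \le n$ to conclude $\xi^{-p} \ge n^{-p}$, giving the same bound.
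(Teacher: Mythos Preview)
Your proof is correct and is considerably more direct than the paper's argument. The paper defines $g(x)=x^{1-p}-(x-x^p/r)^{1-p}$, shows that $g$ is non-increasing on $(0,\infty)$ by computing $g'$ and reducing to the Bernoulli-type inequality $(1-y)^s\le 1-sy$, and then establishes $\lim_{x\to\infty} g(x)=(1-p)/r$ via L'H\^{o}pital's rule; the desired inequality then follows from monotonicity plus the limit value. Your route bypasses both steps: the single application of the tangent-line inequality (equivalently the mean value theorem) for the concave function $x\mapsto x^{1-p}$ at the right endpoint $b=n$ yields the bound in one line. The only thing your approach gives up is the extra information that the left-hand side is monotone in $n$ and that $(1-p)/r$ is the exact limit as $n\to\infty$, but neither fact is used anywhere else in the paper, so nothing is lost. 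Your handling of the domain issue $n-n^p/r\ge 0$ is fine; in fact the tangent-line bound $f(a)\le f(b)+f'(b)(a-b)$ only requires differentiability at $b$ and continuity of $f$ at $a$, so even the boundary case $a=0$ is covered without a separate check.
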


\begin{proof}
Define $f:\R^+ \to \R^+$ by $f(x) = x^{1-p} - (x-x^p/r)^{1-p}$. We show that this real function is non-increasing for non-negative $x$, and that its limit as $x$ tends to infinity is $(1-p)/r$, which implies the statement of the lemma.

To show that $f$ is non-increasing, we will show that its derivative is non-positive. Note that 
\[
f'(x) = (1-p)x^{-p} - (1-p) \inparen{1-\frac{px^{p-1}}{r}} \inparen{x-\frac{x^p}{r}}^{-p}.
\]
To show that $f'(x) \le 0$ for all $x$, it thus suffices, after some rearrangements, to prove the inequality
\begin{equation}
\label{eq:derivative-negative}
\inparen{x-\frac{x^p}{r}}^{p} \le x^{p} \inparen{1-\frac{px^{p-1}}{r}}.
\end{equation}
We have that
\[
\inparen{x-\frac{x^p}{r}}^{p} = x^p \inparen{1-\frac{x^{p-1}}{r}}^p,
\]
and thus after dividing by $x^p$, \eqref{eq:derivative-negative} follows as a corollary of the well-known inequality $(1-y)^s \le 1-sy$ for $y>0$ and $0 < s < 1$, that can be proved using the Taylor expansion of $(1-y)^s$ around $0$.

In order to calculate the limit, observe that,
\[
f(x) = x^{1-p} \cdot \inparen{ 1-\inparen{1-\frac{x^{p-1}}{r}}^{1-p} } =  \frac{1-\inparen{1-\frac{x^{p-1}}{r} }^{1-p}} {x^{p-1}},
\]
so by L'H\^{o}pital's Rule we get that
\[
\lim_{x \to \infty} \frac{1-\inparen{1-\frac{x^{p-1}}{r} }^{1-p}} {x^{p-1}}=
\lim_{x \to \infty} \frac{\frac{-(p-1)^2}{r} \cdot x^{p-2} \cdot \inparen{1-\frac{x^{p-1}}{r}}^{-p}} {(p-1)x^{p-2}} = \frac{1-p}{r}. \qedhere
\]
\end{proof}

\end{document}